\providecommand{\U}[1]{\protect\rule{.1in}{.1in}}
\newtheorem{theorem}{Theorem}
\newtheorem{corollary}[theorem]{Corollary}
\newtheorem{lemma}[theorem]{Lemma}
\newtheorem{problem}[theorem]{Problem}
\newtheorem{proposition}[theorem]{Proposition}
\newenvironment{proof}[1][Proof]{\noindent\textbf{#1.} }{\ \rule{0.5em}{0.5em}}
\begin{document}

\title{Shadow Tomography of Quantum States\thanks{An earlier version of this paper
appeared in the Proceedings of the 2018 ACM Symposium on Theory of Computing
(STOC), pages 325-338}}
\author{Scott Aaronson\thanks{University of Texas at Austin. \ Email:
aaronson@cs.utexas.edu. \ Supported by a Vannevar Bush Fellowship from the US
Department of Defense, a Simons Investigator Award, and the Simons
\textquotedblleft It from Qubit\textquotedblright\ collaboration.}}
\date{}
\maketitle

\begin{abstract}
We introduce the problem of \textit{shadow tomography}: given an unknown
$D$-dimensional quantum mixed state $\rho$, as well as known two-outcome
measurements $E_{1},\ldots,E_{M}$, estimate the probability that $E_{i}%
$\ accepts $\rho$, to within additive error $\varepsilon$, for each of the $M$
measurements. \ How many copies of $\rho$\ are needed to achieve this, with
high probability? \ Surprisingly, we give a procedure that solves the problem
by measuring only $\widetilde{O}\left(  \varepsilon^{-4}\cdot\log^{4}%
M\cdot\log D\right)  $\ copies. \ This means, for example, that we can learn
the behavior of an arbitrary $n$-qubit state, on \textit{all}
accepting/rejecting circuits of some fixed polynomial size, by measuring only
$n^{O\left(  1\right)  }$\ copies of the state. \ This resolves an open
problem of the author, which arose from his work on private-key quantum money
schemes, but which also has applications to quantum copy-protected software,
quantum advice, and quantum one-way communication. \ Recently, building on
this work, Brand\~{a}o et al.\ have given a different approach to shadow
tomography using semidefinite programming, which achieves a savings in
computation time.

\end{abstract}

\section{Introduction\label{INTRO}}

One of the most striking features of quantum mechanics is the
\textit{destructive nature of measurement}.\ \ Given a single copy of a
quantum state $\rho$, which is otherwise unknown to us, no amount of
cleverness will ever let us recover a classical description of $\rho$, even
approximately, by measuring $\rho$. \ Of course, the destructive nature of
measurement is what opens up many of the cryptographic possibilities of
quantum information, including quantum key distribution and quantum money.

In general, the task of recovering a description of a $D$-dimensional quantum
mixed state $\rho$, given many copies of $\rho$, is called \textit{quantum
state tomography}. \ This task can be shown for information-theoretic
reasons\ to require $\Omega\left(  D^{2}\right)  $\ copies of $\rho$,\ while a
recent breakthrough of O'Donnell and Wright \cite{owright}\ and Haah et
al.\ \cite{hhjwy}\ showed that $O\left(  D^{2}\right)  $\ copies also
suffice.\footnote{Here and throughout this paper, the notations
$\widetilde{\Omega}$\ and $\widetilde{O}$\ mean we suppress polylogarithmic
factors.} \ Unfortunately, this number can be astronomically infeasible:
recall that, if $\rho$ is a state of $n$ entangled qubits, then $D=2^{n}$.
\ No wonder that the world record, for full\footnote{By \textquotedblleft
full,\textquotedblright\ we mean that the procedure could have recovered the
state $\rho$\ regardless of what it was, rather than requiring an assumption
like, e.g., that $\rho$\ has a small matrix product state description.}
quantum state tomography, is $10$-qubit states, for which millions of
measurements were needed \cite{songetal}.

Besides the practical issue, this state of affairs could be viewed as an
epistemic problem for quantum mechanics itself. \ If learning a full
description of an $n$-qubit state $\rho$\ requires measuring $\exp\left(
n\right)  $ copies of $\rho$, then should we even say that the full
description is \textquotedblleft there\textquotedblright\ at all, in a single
copy of $\rho$?\noindent

Naturally, we could ask the same question about a classical probability
distribution $\mathcal{D}$\ over $n$-bit strings. \ In that case, the
exponentiality seems to militate toward the view that no, the vector of
$2^{n}$\ probabilities is \textit{not} \textquotedblleft out
there\textquotedblright\ in the world, but is only \textquotedblleft in our
heads,\textquotedblright\ while what's \textquotedblleft out
there\textquotedblright\ are just the actual $n$-bit samples from
$\mathcal{D}$, along with whatever physical process generated the samples.
\ Quantum mechanics is different, though, because $2^{n}$\ amplitudes can
interfere with each other: an observable effect that seems manifestly
\textit{not} just in our heads! \ Interference forces us to ask the question anew.

Partly inspired by these thoughts, a long line of research has sought to show
that, once we impose some reasonable operational restrictions on what a
quantum state will be used for, an $n$-qubit\ state $\rho$\ actually contains
\textquotedblleft much less information than meets the eye\textquotedblright:
more like $n$ or $n^{O\left(  1\right)  }$\ classical bits than like $2^{n}$
bits. \ Perhaps the \textquotedblleft original\textquotedblright\ result along
these lines was Holevo's Theorem \cite{holevo}, which says that by sending an
$n$-qubit state, Alice can communicate at most $n$ classical bits to Bob (or
$2n$, if Alice and Bob have pre-shared entanglement). \ Subsequently, the
random access code lower bound of Ambainis, Nayak, Ta-Shma, and Vazirani
\cite{antv}\ showed that this is still true, even if Bob wants to learn just a
single one of Alice's bits.

Since 2004, a series of results by the author and others has carried the basic
conclusion further. \ Very briefly, these results have included the
postselected learning theorem \cite{aar:adv}; the Quantum Occam's Razor
Theorem \cite{aar:learn}; the \textquotedblleft
de-Merlinization\textquotedblright\ of quantum protocols \cite{aar:qmaqpoly};
a full characterization of quantum advice \cite{adrucker}; and a recent online
learning theorem for quantum states \cite{achn}. \ We'll apply tools from
several of those results in this paper, and will discuss the results later in
the introduction where it's relevant. \ In any case, though, none of the
previous results directly addressed the question:
\textit{information-theoretically, how much can be learned about an }%
$n$\textit{-qubit state }$\rho$\textit{\ by measuring only }$n^{O\left(
1\right)  }$\textit{ copies of }$\rho$\textit{?}

\subsection{Our Result\label{RESULT}}

Motivated by the above question, this paper studies a basic new task that we
call \textit{shadow tomography}, and define as follows.

\begin{problem}
[Shadow Tomography]\label{theprob}Given an unknown $D$-dimensional quantum
mixed state $\rho$, as well as known $2$-outcome measurements $E_{1}%
,\ldots,E_{M}$, each of which accepts\ $\rho$\ with probability
$\operatorname{Tr}\left(  E_{i}\rho\right)  $ and rejects\ $\rho$\ with
probability $1-\operatorname{Tr}\left(  E_{i}\rho\right)  $,\ output numbers
$b_{1},\ldots,b_{M}\in\left[  0,1\right]  $\ such that $\left\vert
b_{i}-\operatorname{Tr}\left(  E_{i}\rho\right)  \right\vert \leq\varepsilon
$\ for all $i$, with success probability at least $1-\delta$. \ Do this via a
measurement of $\rho^{\otimes k}$, where $k=k\left(  D,M,\varepsilon
,\delta\right)  $ is as small as possible.
\end{problem}

The name \textquotedblleft shadow tomography\textquotedblright\ was suggested
to us by Steve Flammia, and refers to the fact that we aim to recover, not the
full density matrix of $\rho$, but only the \textquotedblleft
shadow\textquotedblright\ that $\rho$\ casts on the measurements $E_{1}%
,\ldots,E_{M}$.

Observe, for a start, that shadow tomography is easy to achieve using
$k=O\left(  D^{2}/\varepsilon^{2}\right)  $\ copies of the state $\rho$, by
just ignoring the $E_{i}$'s and doing full tomography on $\rho$, using the
recent protocols of O'Donnell and Wright \cite{owright}\ or Haah et
al.\ \cite{hhjwy}. \ At a different extreme of parameters, shadow tomography
is \textit{also} easy using $k=\widetilde{O}\left(  M/\varepsilon^{2}\right)
$\ copies of $\rho$, by just applying each measurement $E_{i}$\ to separate
copies of $\rho$.

At a mini-course taught in February 2016 (see \cite[Section 8.3.1]{aarbados}),
the author discussed shadow tomography---though without calling it that---and
posed the question, \textit{what happens if }$D$\textit{ and }$M$\textit{ are
both exponentially large?} \ Is it conceivable that, even then, the $M$
expectation values $\operatorname{Tr}\left(  E_{1}\rho\right)  ,\ldots
,\operatorname{Tr}\left(  E_{M}\rho\right)  $\ could all be approximated using
only, say, $\operatorname*{poly}\left(  \log D,\log M\right)  $\ copies of the
state $\rho$? \ The author didn't venture to guess an answer; other
researchers' opinions were also divided.

The main result of this paper is to settle the question affirmatively.

\begin{theorem}
[Shadow Tomography Theorem]\label{main}Problem \ref{theprob}\ (Shadow
Tomography) is solvable using only%
\[
k=\widetilde{O}\left(  \frac{\log1/\delta}{\varepsilon^{4}}\cdot\log^{4}%
M\cdot\log D\right)
\]
copies of the state $\rho$,\ where the $\widetilde{O}$\ hides a
$\operatorname*{poly}\left(  \log\log M,\log\log D,\log\frac{1}{\varepsilon
}\right)  $ factor.\footnote{In an earlier version of this paper, the
dependence on $\varepsilon$\ was $1/\varepsilon^{5}$. \ The improvement to
$1/\varepsilon^{4}$ comes from using the recent online learning algorithm of
Aaronson et al. \cite{achn}.} \ The procedure is fully explicit.
\end{theorem}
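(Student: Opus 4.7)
The plan is to design an iterative algorithm that maintains a hypothesis density matrix $\sigma_t$ as its current ``best guess'' for $\rho$, starting from $\sigma_0 = I/D$. At each round, the algorithm either certifies that $\sigma_t$ already predicts every $\mathrm{Tr}(E_i\rho)$ to within $\varepsilon$ (in which case it halts and outputs $b_i := \mathrm{Tr}(E_i\sigma_t)$), or it locates a ``disagreeing'' index $i^*$ with $|\mathrm{Tr}(E_{i^*}\sigma_t) - \mathrm{Tr}(E_{i^*}\rho)| > \varepsilon$ and uses fresh copies of $\rho$ to update $\sigma_t$ into a better hypothesis $\sigma_{t+1}$.

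The number of rounds is bounded by invoking the online learning theorem for quantum states of Aaronson et al.~\cite{achn}, which guarantees a regret bound of $O(\log D / \varepsilon^2)$ mistakes against any adversarial sequence of two-outcome measurements. Thus at most $T = O(\log D / \varepsilon^2)$ updates occur before the loop terminates; after termination, $\sigma_t$ is guaranteed to be $\varepsilon$-accurate on every $E_1,\ldots,E_M$ simultaneously. This online-learning ingredient, which replaces earlier postselection-based arguments, is what drives the $1/\varepsilon^4$ rather than $1/\varepsilon^5$ dependence.

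The main technical obstacle is implementing the ``search for a disagreement'' step efficiently \emph{and gently}: we must scan $M$ candidate measurements while using only $\mathrm{polylog}(M)$ copies per round, and without disturbing the unused copies of $\rho$ so badly that the later rounds become useless. The plan is to (i) amplify each individual test $E_i$ by running it coherently on $\widetilde{O}(1/\varepsilon^2)$ copies, producing a single projector $\Pi_i$ whose acceptance probability on $\rho^{\otimes\ell}$ is close to $0$ or $1$ according as $|\mathrm{Tr}(E_i\rho)-\mathrm{Tr}(E_i\sigma_t)|$ is small or large; and (ii) search the amplified list $\Pi_1,\ldots,\Pi_M$ using a ``gentle OR'' procedure, in which the $\Pi_i$ are applied one at a time with appropriate thresholds, invoking a quantum union-type bound to keep the total disturbance on a ``no disagreement'' state bounded by $O(\varepsilon)$ while still identifying a disagreement when one exists. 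This costs roughly $\widetilde{O}(\log^3 M / \varepsilon^2)$ copies per round.

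Putting the pieces together, the total copy complexity is (rounds) $\times$ (per-round cost) $= O(\log D/\varepsilon^2) \cdot \widetilde{O}(\log^3 M/\varepsilon^2) = \widetilde{O}(\log^4 M \cdot \log D / \varepsilon^4)$, with an extra $\log(1/\delta)$ factor from a standard median-of-means amplification to boost each estimate's success probability. The hardest step, and the one requiring the most care, will be proving the gentleness of the coherent search: each round must succeed with high probability \emph{and} leave enough undisturbed copies to carry the next round, which is where the quantum union / sequential-measurement machinery (in the spirit of the author's earlier postselected learning work) is essential.
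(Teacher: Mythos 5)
Your plan matches the paper's approach: an iterative hypothesis-refinement loop starting from $\rho_0 = I/D$, a per-round gentle search over $E_1,\ldots,E_M$ for a measurement where the current hypothesis errs by more than $\varepsilon$, amplification of each $E_i$ to a constant promise gap via $\widetilde{O}(1/\varepsilon^2)$ coherent copies, and a round count of $\widetilde{O}(\log D/\varepsilon^2)$ from the online learning mistake bound of Aaronson et al.\ (which is exactly the ingredient the paper invokes, in its concluding remark, to upgrade its directly-proved $1/\varepsilon^5$ bound to the stated $1/\varepsilon^4$). Multiplying the round count by the per-round search cost gives the claimed $\widetilde{O}(\log^4 M \cdot \log D/\varepsilon^4)$.

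Two points you should sharpen. First, your step (ii) — applying the amplified projectors ``one at a time with appropriate thresholds, invoking a quantum union-type bound'' — glosses over a subtlety that the paper emphasizes: the naive sequential-application version of the Quantum OR Bound is precisely the argument of Aaronson \cite{aar:qmaqpoly} that was found to be flawed and was repaired by Harrow, Lin, and Montanaro \cite{hlm}. The quantum union bound controls the disturbance in the ``no disagreement'' case, but the hard direction is completeness: showing that if \emph{some} $\Pi_i$ has high acceptance probability, the procedure actually detects it, which requires either Marriott–Watrous in-place amplification or the controlled-application-with-decoherence-check gadget from \cite{hlm}. You should state explicitly that you are using the repaired OR bound and not the naive one. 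Second, you describe detecting a disagreement but not how to locate the index $i^*$: the $\log^4 M$ factor arises specifically from running binary search over the list of measurements, with $\log M$ levels, each level paying a degraded gap $\alpha = \varepsilon/\log M$ (hence $\log^2 M$) and an extra $\log M$ inside the OR bound itself; your per-round figure of $\widetilde{O}(\log^3 M/\varepsilon^2)$ is actually the cost of a single binary-search level, and the $\log^4 M$ in your final total only appears once you account for all $\log M$ levels. With those two clarifications your outline is a faithful reconstruction of the paper's argument.
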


In Section \ref{TECHNIQUES}, we'll give an overview of the proof of this
theorem. \ In Section \ref{MOTIV}, we'll discuss the motivation,\ and give
applications to quantum money, quantum copy-protected software, quantum
advice, and quantum one-way communication. \ For now, let's make some initial
comments about the theorem itself: why it's nontrivial, why it's consistent
with other results, etc.

The key point is that Theorem \ref{main}\ lets us learn the behavior of a
state\ of exponential dimension, with respect to exponentially many different
observables, using only polynomially many copies of the state. \ To achieve
this requires measuring the copies in an extremely careful way, to avoid
destroying them as we proceed.

Naturally, to implement the required measurement on $\rho^{\otimes k}$\ could,
in the worst case, require a quantum circuit of size polynomial in both $M$
and $D$. \ (Note that the input---i.e., the list of measurement operators
$E_{i}$---already involves $\Theta\left(  MD^{2}\right)  $\ complex numbers.)
\ It's interesting to study how much we can improve the \textit{computational}
complexity of shadow tomography, with or without additional assumptions on the
state $\rho$\ and measurements $E_{i}$. \ In Sections \ref{RELATED} and
\ref{OPEN}, we'll say more about this question, and about recent work by
Brand\~{a}o et al.\ \cite{bkllsw}, which builds on our work to address it.
\ In this paper, though, our main focus is on the information-theoretic
aspect, of how many copies of $\rho$\ are needed.

The best \textit{lower} bound that we know on the number of copies is
$\Omega\left(  \frac{\min\left\{  D^{2},\log M\right\}  }{\varepsilon^{2}%
}\right)  $.\footnote{In an earlier version of this paper, we proved only a
weaker lower bound, namely $\Omega\left(  \frac{\log M}{\varepsilon^{2}%
}\right)  $\ assuming $D$ can be arbitrarily large. \ In this version, we've
reworked the lower bound to incorporate the dependence on $D$ explicitly.
\ One side effect is that, if we let $M$ be arbitrarily large, then our lower
bound now subsumes the $\Omega\left(  D^{2}\right)  $\ lower bound on the
sample complexity of \textit{ordinary} quantum state tomography, originally
proved by O'Donnell and Wright \cite{owright}\ and Haah et al.\ \cite{hhjwy}.}
\ We'll prove this lower bound in Section \ref{LBAPPENDIX}, using an
information theory argument. \ We'll also observe that a lower bound of
$\Omega\left(  \frac{\min\left\{  D,\log M\right\}  }{\varepsilon^{2}}\right)
$ holds even in the special case where the state and measurements are entirely
classical---in which case the lower bound is actually \textit{tight}. \ In the
general (quantum) case, we don't know whether shadow tomography might be
possible using $\left(  \log M\right)  ^{O\left(  1\right)  }$ copies,
independent of the Hilbert space dimension $D$.

But stepping back, why isn't even Theorem \ref{main} immediately ruled out by,
for example, Holevo's Theorem \cite{holevo}---which says (roughly) that by
measuring a $D$-dimensional state, we can learn at most $O\left(  \log
D\right)  $ independent classical bits? \ One way to answer this question is
to observe that there's no claim that the $M$ numbers $\operatorname{Tr}%
\left(  E_{i}\rho\right)  $\ can all be varied independently of each other by
varying $\rho$: indeed, it follows from known results \cite{antv,aar:learn}%
\ that they can't be, unless $D=\exp\left(  \Omega\left(  M\right)  \right)  $.

Another answer is as follows. \ It's true that there exist so-called
\textit{tomographically complete} sets of two-outcome measurements, of size
$M=O\left(  D^{2}\right)  $. \ These are sets $E_{1},\ldots,E_{M}$\ such that
knowing $\operatorname{Tr}\left(  E_{i}\rho\right)  $\ exactly, for every
$i\in\left[  M\right]  $, suffices to determine $\rho$\ itself. \ So if we ran
our shadow tomography procedure on a tomographically complete set, with small
enough $\varepsilon$, then we could reconstruct $\rho$, something that we know
requires $k=\Omega\left(  D^{2}\right)  $\ copies of $\rho$. \ However, this
would require knowing the $\operatorname{Tr}\left(  E_{i}\rho\right)  $'s to
within additive error $\varepsilon\ll1/D$, which remains perfectly compatible
with a shadow tomography procedure that uses $\operatorname*{poly}\left(  \log
M,\log D,\varepsilon^{-1}\right)  $ copies.

One last clarifying remark is in order. \ After satisfying themselves that
it's not impossible, some readers might wonder whether Theorem \ref{main}
follows trivially from the so-called \textquotedblleft Gentle Measurement
Lemma\textquotedblright\ \cite{winter:gentle,aar:adv}, which is closely
related to the concept of \textit{weak measurement} in physics. \ We'll
explain gentle measurement in more detail in Section \ref{PRELIM}, but loosely
speaking, the idea is that \textit{if} the outcome of a measurement $E$\ on a
state $\rho$\ could be predicted almost with certainty, given knowledge of
$\rho$, \textit{then} $E$ can be implemented in a way that damages $\rho
$\ very little, leaving the state available for future measurements. \ Gentle
measurement will play an important role in the proof of Theorem \ref{main}, as
it does in many quantum information results.

However, all that we can \textit{easily} deduce from gentle measurement is a
\textquotedblleft promise-gap\textquotedblright\ version of Theorem
\ref{main}. \ In particular: suppose we're given real numbers $c_{1}%
,\ldots,c_{M}\in\left[  0,1\right]  $, and are promised that for each
$i\in\left[  M\right]  $, either $\operatorname{Tr}\left(  E_{i}\rho\right)
\geq c_{i}$\ or $\operatorname{Tr}\left(  E_{i}\rho\right)  \leq
c_{i}-\varepsilon$. \ In that case, we'll state and prove, as Proposition
\ref{gap}, that it's possible to decide which of these holds, for every
$i\in\left[  M\right]  $, with high probability using only $k=O\left(
\frac{\log M}{\varepsilon^{2}}\right)  $\ copies of $\rho$. \ This is because,
\textit{given the promise gap}, we can design an \textquotedblleft
amplified\textquotedblright\ version of $E_{i}$\ that decides which side of
the gap we're on while damaging $\rho^{\otimes k}$\ only very little.

But what if there's no promise, as there typically isn't in real-world
tomography problems? \ In that case, the above approach fails utterly: indeed,
every two-outcome measurement $E$\ that we could possibly apply\ seems
dangerous, because if $\rho$\ happens to be \textquotedblleft just on the
knife-edge\textquotedblright\ between acceptance and rejection---a possibility
that we can never rule out---then applying $E$\ to copies of $\rho$\ will
severely damage those copies. \ And while we can afford to lose a \textit{few}
copies of $\rho$, we have only $\operatorname*{poly}\left(  \log M,\log
D\right)  $\ copies in total, which is typically far fewer than the
$M$\ measurement outcomes that we need to learn.\footnote{As an alternative,
one might hope to prove Theorem \ref{main}\ by simply performing a series of
\textquotedblleft weak measurements\textquotedblright\ on the state
$\rho^{\otimes k}$, which would estimate the real-valued observables
$\operatorname{Tr}\left(  E_{i}\rho\right)  $, but with Gaussian noise of
variance $\gg1/k$\ deliberately added to the measurement outcomes, in order to
prevent $\rho^{\otimes k}$\ from being damaged too much by the measurements.
\ However, a calculation reveals that every such measurement could damage the
state by $1/k^{O\left(  1\right)  }$\ in variation distance. \ Thus, while
this strategy would let us safely estimate $\operatorname*{poly}\left(  \log
M,\log D\right)  $\ observables $\operatorname{Tr}\left(  E_{i}\rho\right)
$\ in succession, it doesn't appear to let us estimate all $M$ of them.}
\ This is the central problem that we solve.

\subsection{Techniques\label{TECHNIQUES}}

At a high level, our shadow tomography procedure involves combining two ideas.

The first idea is \textit{postselected learning of quantum states}. \ This
tool was introduced by Aaronson \cite{aar:adv} in 2004 to prove the complexity
class containment $\mathsf{BQP/qpoly}\subseteq\mathsf{PostBQP/poly}$, where
$\mathsf{BQP/qpoly}$\ means $\mathsf{BQP}$\ augmented with polynomial-size
quantum advice, and $\mathsf{PostBQP}$\ means $\mathsf{BQP}$\ augmented with
postselected measurements,\ a class that equals $\mathsf{PP}$\ by another
result of Aaronson \cite{aar:pp}. \ Postselected learning is related to
\textit{boosting} in computational learning theory, as well as to the
multiplicative weights update method.

Restated in the language of this paper, the canonical example of postselected
learning is as follows. \ Suppose Alice knows the complete classical
description of a $D$-dimensional quantum mixed state $\rho$, and suppose she
wants to describe $\rho$\ to Bob over a classical channel---well enough that
Bob can approximate the value of $\operatorname{Tr}\left(  E_{i}\rho\right)
$, for each of $M$ two-outcome measurements $E_{1},\ldots,E_{M}$\ known to
both players. \ To do this, Alice could always send over the full classical
description of $\rho$, requiring $\Theta\left(  D^{2}\right)  $\ bits. \ Or
she could send the values of the\ $\operatorname{Tr}\left(  E_{i}\rho\right)
$'s, requiring $\Theta\left(  M\right)  $ bits.

But there's also something much more efficient that Alice can do, requiring
only $\Theta\left(  \log D\cdot\log M\right)  $ bits. \ Namely, she can assume
that, being totally ignorant at first, Bob's \textquotedblleft initial
guess\textquotedblright\ about $\rho$\ is simply that it's the maximally mixed
state, $\rho_{0}:=\frac{I}{D}$. \ She can then repeatedly help Bob to refine
his current guess $\rho_{t}$ to a better guess $\rho_{t+1}$, by telling Bob
the index $i$\ of a measurement on which his current guess badly fails---that
is, on which $\left\vert \operatorname{Tr}\left(  E_{i}\rho_{t}\right)
-\operatorname{Tr}\left(  E_{i}\rho\right)  \right\vert $\ is large---as well
as the approximate value of $\operatorname{Tr}\left(  E_{i}\rho\right)  $.
\ To use this information, Bob can let $\rho_{t+1}$\ be the state obtained by
starting from $\rho_{t}$\ (or technically, an amplified version of $\rho_{t}%
$), measuring the observable $E_{i}$, and then \textit{postselecting} (that
is, conditioning) on getting measurement outcomes that are consistent with
$\rho$. \ Of course this postselection might have only a small chance of
success, were Bob doing it with the actual state $\rho_{t}$, but he can
instead \textit{simulate} postselection using a classical description of
$\rho_{t}$.

The key question, with this approach, is how many iterations $T$\ are needed
until Bob converges to a hypothesis state $\rho_{T}$\ such that
$\operatorname{Tr}\left(  E_{i}\rho_{T}\right)  \approx\operatorname{Tr}%
\left(  E_{i}\rho\right)  $ for every $i$. \ And the key result is that only
$\Theta\left(  \log D\right)  $\ iterations are needed. \ Intuitively, this is
because the ground truth, $\rho$, has \textquotedblleft
weight\textquotedblright\ at least $\frac{1}{D}$\ within the maximally mixed
state $\frac{I}{D}$. \ Repeatedly choosing measurements where the current
hypothesis still does poorly, and then postselecting on doing well on those
measurements, causes all the components of $\frac{I}{D}$\ \textit{other than}
$\rho$ to decay at an exponential rate, until a measurement can no longer be
found where the current hypothesis does poorly. \ That might happen well
before we reach $\rho$\ itself, but if not, then $\rho$\ itself will be
reached after $\Theta\left(  \log D\right)  $\ iterations.

Postselected learning has since found further uses in quantum computing theory
\cite{aar:qmaqpoly,achn}. \ But there seems to be a fundamental difficulty in
applying it to shadow tomography. \ Namely, in shadow tomography\textit{
there's no \textquotedblleft Alice\textquotedblright}: that is, no agent who
knows a classical description of the state $\rho$, and who can thus helpfully
point to measurements $E_{i}$\ that are useful for learning $\rho$'s behavior.
\ So any shadow tomography procedure will need to find informative
measurements by itself, and do so using only polylogarithmically many copies
of $\rho$.\bigskip

The second idea, the \textit{gentle search procedure}, does exactly that. \ In
2006, as a central ingredient in the proof of the complexity class containment
$\mathsf{QMA/qpoly}\subseteq\mathsf{PSPACE/poly}$, Aaronson
\cite{aar:qmaqpoly}\ claimed a result that he called \textquotedblleft Quantum
OR Bound.\textquotedblright\ \ This result can be stated as follows:\ given an
unknown state $\rho$\ and known two-outcome measurements $E_{1},\ldots,E_{M}$,
there is a procedure, using $k=O\left(  \frac{\log M}{\varepsilon^{2}}\right)
$\ copies of $\rho$, to decide whether

\begin{enumerate}
\item[(i)] some $E_{i}$\ accepts $\rho$\ with probability at least $c$ or

\item[(ii)] no $E_{i}$\ accepts $\rho$\ with probability greater than
$c-\varepsilon$,
\end{enumerate}

\noindent with high probability and assuming one of the cases holds. \ Note
that the number of copies is not only logarithmic in $M$, but independent of
the dimension of $\rho$.

Aaronson's proof of the Quantum OR Bound was based on simply applying
amplified versions of the $E_{i}$'s to $\rho^{\otimes k}$\ in a random order,
and checking whether any of the measurements accepted. \ Unfortunately,
Aaronson's proof had an error, which was discovered in 2016 by Harrow, Lin,
and Montanaro \cite{hlm}. \ Happily, Harrow et al.\ also fixed the error,
thereby recovering all the consequences that Aaronson had claimed, as well as
new consequences. \ To do so, Harrow et al.\ designed two new measurement
procedures, both of which solve the problem: one based on the
\textquotedblleft in-place amplification\textquotedblright\ of Marriott and
Watrous \cite{mw}, and another that applies amplified $E_{i}$'s conditional on
a control qubit being $\left\vert 1\right\rangle $, and that checks not only
whether any of the measurements accept but also whether the control qubit has
decohered. \ It remains open whether Aaronson's original procedure is also sound.

For shadow tomography, however, there's a further problem. \ Namely, at each
iteration of the postselected learning procedure, we need not only to decide
whether there \textit{exists} an $i$\ such that $\left\vert \operatorname{Tr}%
\left(  E_{i}\rho_{t}\right)  -\operatorname{Tr}\left(  E_{i}\rho\right)
\right\vert $\ is large, but also to \textit{find} such an $i$ if it exists.
\ Fortunately, we can handle this using the \textquotedblleft oldest trick in
the book\textquotedblright\ for reducing search problems to decision problems:
namely, binary search over the list $E_{1},\ldots,E_{M}$. \ Doing this
correctly requires carefully managing the error budget---as we proceed through
binary search, the gap between $\operatorname{Tr}\left(  E_{i}\rho_{t}\right)
$ and $\operatorname{Tr}\left(  E_{i}\rho\right)  $\ that we're confident
we've found degrades from $\varepsilon$\ to $\varepsilon-\alpha$\ to
$\varepsilon-2\alpha$, etc.---and that's what produces the factor of $\log
^{4}M$\ in the final bound.

\subsection{Comparison with Related Work\label{RELATED}}

While we've already discussed a good deal of related work, here we'll compare
Theorem \ref{main}\ directly against some previous results, and explain why
those results fall short of what we need. \ We'll then discuss the recent work
of Brand\~{a}o et al.\ \cite{bkllsw}, which builds on this paper to address
the computational cost of shadow tomography.

One important inspiration for what we're trying to do, and something we
\textit{haven't} yet discussed, is the \textquotedblleft Quantum Occam's Razor
Theorem,\textquotedblright\ which Aaronson \cite{aar:learn} proved in 2006.
\ This result essentially says that quantum states are \textquotedblleft
learnable\textquotedblright\ in the PAC (Probably Approximately Correct) sense
\cite{valiant:pac}, with respect to any probability distribution over
two-outcome measurements, using an amount of sample data that increases only
\textit{linearly} with the number of qubits---rather than exponentially, as
with traditional quantum state tomography. \ More formally:

\begin{theorem}
[Quantum Occam's Razor \cite{aar:learn}]\label{learnthm}Let $\rho$\ be a
$D$-dimensional mixed state, and let $\mu$\ be any probability distribution or
measure over two-outcome measurements. \ Then given samples $E_{1}%
,\ldots,E_{M}$ drawn independently from $\mu$, with probability at least
$1-\delta$, the samples have the following generalization property: any
hypothesis state $\sigma$\ such that $\left\vert \operatorname{Tr}\left(
E_{i}\sigma\right)  -\operatorname{Tr}\left(  E_{i}\rho\right)  \right\vert
\leq\frac{\gamma\varepsilon}{7}$ for all $i\in\left[  M\right]  $, will also
satisfy%
\[
\Pr_{E\sim\mu}\left[  \left\vert \operatorname{Tr}\left(  E\sigma\right)
-\operatorname{Tr}\left(  E\rho\right)  \right\vert \leq\varepsilon\right]
\geq1-\gamma,
\]
provided we took%
\[
M\geq\frac{C}{\gamma^{2}\varepsilon^{2}}\left(  \frac{\log D}{\gamma
^{2}\varepsilon^{2}}\log^{2}\frac{1}{\gamma\varepsilon}+\log\frac{1}{\delta
}\right)
\]
for some large enough constant $C$.
\end{theorem}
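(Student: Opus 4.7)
The plan is to cast this as a PAC-style generalization bound, by reducing the claim to a uniform convergence statement for the function class
\[
\mathcal{F}_D := \bigl\{\,f_\tau : E \mapsto \operatorname{Tr}(E\tau) \,\bigm|\, \tau\text{ a }D\text{-dimensional mixed state}\bigr\}
\]
and then bounding its fat-shattering dimension. The uniform convergence tool I would invoke is the Bartlett--Long (equivalently Alon--Ben-David--Cesa-Bianchi--Haussler) theorem: for any class $\mathcal{G}$ of $[0,1]$-valued functions and any $\alpha,\delta\in(0,1)$, taking
\[
M = O\!\left(\frac{1}{\alpha^{2}}\bigl(\operatorname{fat}_{c\alpha}(\mathcal{G})\,\log^{2}(1/\alpha) + \log(1/\delta)\bigr)\right)
\]
i.i.d.\ samples from $\mu$ suffices, with probability $\geq 1-\delta$, for $|\hat{\mathbb{E}}[g] - \mathbb{E}[g]| \leq \alpha$ to hold simultaneously for every $g \in \mathcal{G}$.

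The only distinctively quantum step is to bound $\operatorname{fat}_{\alpha}(\mathcal{F}_D)$, which I plan to do via a quantum random access code argument. Suppose $E_{1},\dots,E_{k}$ is an $\alpha$-fat-shattered set with witnesses $r_{1},\dots,r_{k}\in[0,1]$. By definition, for every $y\in\{0,1\}^{k}$ there is a state $\rho_{y}$ such that $\operatorname{Tr}(E_{i}\rho_{y})\geq r_{i}+\alpha$ when $y_{i}=1$ and $\operatorname{Tr}(E_{i}\rho_{y})\leq r_{i}-\alpha$ when $y_{i}=0$. Measuring $E_{i}$ on $\rho_{y}$ therefore recovers $y_{i}$ with bias $\alpha$, exhibiting a $(k,\log_{2}D,1/2+\alpha)$ quantum random access code; the Nayak/ANTV lower bound~\cite{antv} forces $k = O(\log D / \alpha^{2})$, hence $\operatorname{fat}_{\alpha}(\mathcal{F}_D) = O(\log D / \alpha^{2})$. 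The same bound, up to a constant factor, is inherited by the auxiliary class $\mathcal{H} := \{h_{\sigma} : E \mapsto |\operatorname{Tr}(E\sigma) - \operatorname{Tr}(E\rho)|\}$, since each $h_{\sigma}$ is the absolute value of a shifted member of $\mathcal{F}_D$, and absolute value---being $1$-Lipschitz and expressible as $\max(f,-f)$---inflates fat-shattering dimension by at most a constant.

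Combining the ingredients, I apply the uniform convergence theorem to $\mathcal{H}$ at scale $\alpha = \Theta(\gamma\varepsilon)$. The required sample size evaluates to
\[
M = O\!\left(\frac{1}{\gamma^{2}\varepsilon^{2}}\Bigl(\frac{\log D}{\gamma^{2}\varepsilon^{2}}\log^{2}(1/\gamma\varepsilon) + \log(1/\delta)\Bigr)\right),
\]
matching the theorem statement. On the high-probability event that uniform convergence holds, any hypothesis $\sigma$ satisfying $h_{\sigma}(E_{i}) \leq \gamma\varepsilon/7$ on every sample has empirical $h_{\sigma}$-mean at most $\gamma\varepsilon/7$, and therefore true mean at most $\gamma\varepsilon/7 + \Theta(\gamma\varepsilon) = O(\gamma\varepsilon)$. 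A one-line Markov step applied to $h_{\sigma} \geq 0$ at threshold $\varepsilon$ then yields $\Pr_{E \sim \mu}[h_{\sigma}(E) > \varepsilon] \leq O(\gamma)$; fixing the multiplicative constants (including the precise choice $\gamma\varepsilon/7$, tuned to make the arithmetic work) pins down the promised $\geq 1-\gamma$ confidence.

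The main obstacle, in my view, is less any single deep step than the bookkeeping: the fat-shattering scale, the mean-convergence scale, and the final Markov conversion to an $\varepsilon$-threshold probability all have to be choreographed so that the $\gamma\varepsilon/7$ interpolation slack exactly produces a $1-\gamma$ generalization bound, and one must verify that passing from $\mathcal{F}_D$ to $\mathcal{H}$ through an absolute value costs only a constant factor in fat-shattering dimension. The only substantive, genuinely quantum input is the QRAC-based fat-shattering bound, and it is precisely this bound that permits $\log D$, rather than any polynomial in $D$, to appear in the final sample complexity.
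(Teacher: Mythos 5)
The paper does not itself prove this theorem; it cites \cite{aar:learn}, so the comparison is against that argument. Your proposal reproduces both of its essential ingredients: the fat-shattering bound $\operatorname{fat}_{\alpha}(\mathcal{F}_{D})=O(\log D/\alpha^{2})$ via a quantum random-access-code reduction to the Nayak/ANTV lower bound (your bias claim is correct --- decoding ``accept $\Rightarrow y_{i}=1$'' succeeds with probability $\geq r_{i}+\alpha$ on $y_{i}=1$ and $\geq 1-r_{i}+\alpha$ on $y_{i}=0$, averaging to $\geq\tfrac{1}{2}+\alpha$), and a scale-sensitive generalization bound from learning theory. Where you deviate is in the learning-theory plumbing: you apply a uniform-convergence-of-means bound to the loss class $\mathcal{H}=\{h_{\sigma}:E\mapsto|\operatorname{Tr}(E\sigma)-\operatorname{Tr}(E\rho)|\}$ and finish with Markov, whereas \cite{aar:learn} invokes an Anthony--Bartlett / Bartlett--Long margin-generalization theorem directly on $\mathcal{F}_{D}$ with target $g=\operatorname{Tr}(\cdot\,\rho)$, obtaining the threshold-probability conclusion (and the hardwired constant $7$) in one step. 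Both routes work, but yours leaves exactly the loose end you flag: the claim that the absolute-value post-processing inflates $\operatorname{fat}_{\alpha}$ by only a constant is not obviously true at the level of fat-shattering itself (``$\max(f,-f)$ over a class'' does not have a clean fat-shattering bound). The clean way to handle it is at the level of covering numbers: the map $f\mapsto|f-f_{\rho}|$ is $1$-Lipschitz pointwise, so the empirical $\ell^{\infty}$ covering numbers of $\mathcal{H}$ are bounded by those of $\mathcal{F}_{D}$ with no loss, and the Alon--Ben-David--Cesa-Bianchi--Haussler uniform-convergence bound is proved via covering numbers in the first place --- indeed this is precisely where the $\log^{2}(1/\gamma\varepsilon)$ factor in the sample complexity comes from, so it should not be double-charged. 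Rephrasing that one step in terms of covering numbers both repairs the imprecision and avoids the need to commit to ``Bartlett--Long'' and ``ABCH'' being equivalent formulations (they are related but not interchangeable). Apart from this technicality, the scale choices and final bookkeeping are sound and reproduce the stated sample complexity.
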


We could try applying Theorem \ref{learnthm} to the shadow tomography problem.
\ If we do, however, we get only that $\widetilde{O}\left(  \frac{\log
D}{\gamma^{4}\varepsilon^{4}}\right)  $\ copies of $\rho$\ are enough to let
us estimate $\operatorname{Tr}\left(  E_{i}\rho\right)  $\ to within error
$\pm\varepsilon$, on at least a $1-\gamma$\ \textit{fraction} of the
measurements $E_{1},\ldots,E_{M}$---rather than on \textit{all} the measurements.

If we want a result that works for all $E_{i}$'s, then we can instead switch
attention to Aaronson's postselected learning theorem \cite{aar:adv}, the one
used to prove the containment $\mathsf{BQP/qpoly}\subseteq
\mathsf{PostBQP/poly}$. \ For completeness, let us restate that theorem\ in
the language of this paper.

\begin{theorem}
[implicit in \cite{aar:adv}; see also \cite{achn}]\label{postthm}Let $\rho
$\ be an unknown $D$-dimensional mixed state, and\ let $E_{1},\ldots,E_{M}%
$\ be known two-outcome measurements. \ Then there exists a classical string,
of length $\widetilde{O}\left(  \frac{\log D\cdot\log M}{\varepsilon^{3}%
}\right)  $, from which $\operatorname{Tr}\left(  E_{i}\rho\right)  $\ can be
recovered to within additive error $\pm\varepsilon$\ for every $i\in\left[
M\right]  $.
\end{theorem}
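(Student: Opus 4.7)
The plan is to recast the postselected learning argument of \cite{aar:adv} as an advice-string construction: imagine an omniscient Alice who holds the full density matrix $\rho$ and wants to describe $\rho$'s behavior on $E_1,\ldots,E_M$ to a computationally unbounded Bob using as few classical bits as possible. The string will be the transcript of an interactive refinement that Bob can replay entirely in his head, with no access to $\rho$ beyond what the transcript tells him.

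The refinement proceeds as follows. Bob starts with the hypothesis $\sigma_0 := I/D$. Inductively, if his current hypothesis $\sigma_t$ already satisfies $|\operatorname{Tr}(E_i\sigma_t)-\operatorname{Tr}(E_i\rho)|\le \varepsilon$ for every $i$, Alice writes a ``halt'' symbol and Bob outputs $\sigma_t$; otherwise Alice writes an index $i_t\in[M]$ on which the gap exceeds $\varepsilon$ together with an estimate $v_t$ of $\operatorname{Tr}(E_{i_t}\rho)$ accurate to $\Theta(\varepsilon)$. Bob then updates $\sigma_{t+1}$ by simulating, on his classical description of $\sigma_t$, an amplified version of the measurement $E_{i_t}$ followed by postselection on the outcome statistics matching $v_t$. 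This is pure linear algebra on Bob's side; no quantum states are ever manipulated.

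The heart of the proof is a potential argument bounding $T$, the number of refinement steps. Using the potential $\Phi_t := -\operatorname{Tr}(\rho\log\sigma_t)$, one has $\Phi_0 = \log D$ and $\Phi_t \ge -\operatorname{Tr}(\rho\log\rho)\ge 0$. A direct computation — essentially the matrix multiplicative-weights analysis — shows that each refinement step drops $\Phi_t$ by at least $\Omega(\varepsilon^2)$, exploiting both the $\varepsilon$-gap and the closeness of $v_t$ to the true expectation. This yields $T = O(\varepsilon^{-2}\log D)$. Each transmitted pair $(i_t,v_t)$ costs $\log M + O(\log(1/\varepsilon))$ bits, so the advice has total length $\widetilde O(\varepsilon^{-2}\log M\cdot\log D)$; the extra factor of $1/\varepsilon$ that produces the stated $\widetilde O(\varepsilon^{-3}\log M\cdot\log D)$ bound arises from the amount of amplification required for the postselection step to realize the quadratic progress robustly against the finite-precision encoding of $v_t$.

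The main obstacle is establishing the $\Omega(\varepsilon^2)$ per-step decrease in a way that is robust to approximation. A naive first-order expansion of $\log\sigma_{t+1}-\log\sigma_t$ yields only $\Omega(\varepsilon)$ progress and the wrong overall exponent; one needs the second-order term plus a Pinsker-type inequality to convert the $\varepsilon$-gap in expectation values into a quadratic gain in relative entropy, and then one must check that the $\Theta(\varepsilon)$ slack in $v_t$ does not swamp that gain. Pinning down these constants is precisely what dictates the precision at which $v_t$ must be encoded and is what ultimately determines the exponent of $\varepsilon$ in the final advice length.
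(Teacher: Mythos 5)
Your advice format (a transcript of pairs $(i_t,v_t)$ of a failing measurement index plus a truncated estimate, replayed by Bob against the initial guess $\sigma_0 = I/D$) matches what the paper has in mind, but the bound on the iteration count $T$ contains a genuine accounting error that exposes a confusion between the two cited references. You invoke the relative-entropy potential $\Phi_t = -\operatorname{Tr}(\rho\log\sigma_t)$ and claim $\Omega(\varepsilon^2)$ decrease per step; that is the matrix-multiplicative-weights analysis of \cite{achn}, which applies to the \emph{explicit} MMW update $\sigma_{t+1}\propto e^{-\eta M_t/2}\sigma_t e^{-\eta M_t/2}$, not to the amplify-and-postselect update of \cite{aar:adv} that you actually describe. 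If you adopt the MMW update, your own analysis gives $T = O(\varepsilon^{-2}\log D)$ and therefore a string of length $\widetilde{O}(\varepsilon^{-2}\log M\cdot\log D)$, which is \emph{stronger} than the stated theorem, and there is no extra $1/\varepsilon$ to justify. Your escape hatch — ``the extra factor of $1/\varepsilon$ arises from the amount of amplification required'' — does not hold up: the amplification parameter $q$ (the number of registers Bob uses in his simulated postselection) never multiplies the number of iterations $T$ nor the $\log M + O(\log(1/\varepsilon))$ bits per iteration, so it cannot contribute a factor to the advice length in your scheme.

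The argument actually implicit in \cite{aar:adv}, and re-run in this paper's proof of Theorem \ref{main}, is a postselection-\emph{probability} argument rather than a relative-entropy one. One tracks $p_t := \prod_{s<t}\operatorname{Tr}(F_s\sigma_s^{\ast})$, the cumulative acceptance probability of the simulated postselections on the amplified hypothesis. By Markov, each postselection on a deviation of magnitude $\Omega(\varepsilon)$ in the empirical fraction multiplies $p$ by at most $1-\Omega(\varepsilon)$; on the other hand, $\rho^{\otimes q}$ has weight $1/D^q$ inside $\sigma_0^{\ast}=I/D^q$ and each $F_s$ accepts $\rho^{\otimes q}$ with probability close to $1$, so $p_T\gtrsim 1/D^q$. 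Together these give $T\varepsilon = O(q\log D)$, and since robustness of the amplified test forces $q = \widetilde{O}(1/\varepsilon^2)$, one gets $T = \widetilde{O}(\log D/\varepsilon^3)$ — one power of $1/\varepsilon$ from the per-step decay rate, two from $q$ appearing in the baseline $q\log D$. That is where the third power comes from. To fix your writeup, either commit to the postselection-probability analysis and obtain the claimed $\widetilde{O}(\varepsilon^{-3}\log M\cdot\log D)$, or commit to the MMW update and prove the sharper $\widetilde{O}(\varepsilon^{-2}\log M\cdot\log D)$ as in \cite{achn}; the hybrid you present is internally inconsistent.
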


As we mentioned in Section \ref{TECHNIQUES}, Theorem \ref{postthm}\ falls
short of shadow tomography simply because it's \textquotedblleft
nondeterministic\textquotedblright: it says that a short classical string
\textit{exists} from which one could recover the approximate values of every
$\operatorname{Tr}\left(  E_{i}\rho\right)  $, but says nothing about how to
find such a string by measuring few copies of $\rho$.

There's a different way to think about Theorem \ref{postthm}. \ Along the way
to proving the containment $\mathsf{BQP/qpoly}\subseteq\mathsf{QMA/poly}$,
Aaronson and Drucker \cite{adrucker}\ observed the following, by combining a
result from classical learning theory\ with a result from \cite{aar:learn}%
\ about the \textquotedblleft fat-shattering dimension\textquotedblright\ of
quantum states as a hypothesis class.

\begin{theorem}
[Aaronson and Drucker \cite{adrucker}]\label{epscover}Let $E_{1},\ldots,E_{M}%
$\ be two-outcome measurements on $D$-dimensional Hilbert space. \ Then there
exists a set $S$ of real functions $f:\left[  M\right]  \rightarrow\left[
0,1\right]  $, of cardinality $\left(  \frac{M}{\varepsilon}\right)
^{O\left(  \varepsilon^{-2}\log D\right)  }$, such that for every
$D$-dimensional mixed state $\rho$, there exists an $f\in S$\ such that
$\left\vert f\left(  i\right)  -\operatorname{Tr}\left(  E_{i}\rho\right)
\right\vert \leq\varepsilon$\ for all $i\in\left[  M\right]  $.
\end{theorem}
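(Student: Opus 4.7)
The plan is to combine the quantum fat-shattering dimension bound that underlies Theorem \ref{learnthm} with a classical covering-number theorem. The object of interest is the function class
\[
\mathcal{F}_{D} = \bigl\{\, f_{\rho} : E \mapsto \operatorname{Tr}(E\rho) \,\bigm|\, \rho \text{ is a } D\text{-dimensional mixed state} \,\bigr\},
\]
viewed as a set of $[0,1]$-valued functions on two-outcome measurements. What I would claim, and what is implicit in \cite{aar:learn}, is that $\mathcal{F}_{D}$ has $\gamma$-fat-shattering dimension at most $O(\gamma^{-2}\log D)$: any set of measurements $\gamma$-shattered by $\mathcal{F}_{D}$ witnesses an information-theoretic encoding lower bound that Holevo-style arguments cap at $O(\log D)$ qubits, with the $\gamma^{-2}$ arising from the granularity of the shattering. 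This fat-shattering bound is the same ingredient that produces Theorem \ref{learnthm}.

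Next I would restrict $\mathcal{F}_{D}$ to the finite index set $\{E_{1},\ldots,E_{M}\}$, giving a class $\mathcal{F}_{D}\!\!\restriction_{[M]} \subseteq [0,1]^{M}$ whose $\gamma$-fat-shattering dimension is no larger than that of $\mathcal{F}_{D}$. Then I would invoke the classical covering-number theorem of Alon, Ben-David, Cesa-Bianchi, and Haussler (with later refinements by Bartlett and Long, and by Mendelson and Vershynin), which states that any $[0,1]$-valued function class on an $M$-point domain with $\gamma$-fat-shattering dimension $d_{\gamma}$ admits an $\ell_{\infty}$ $\varepsilon$-cover of size at most
\[
\left(\frac{M}{\varepsilon}\right)^{O\!\left(d_{c\varepsilon}\right)}
\]
for an absolute constant $c$. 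Plugging in $d_{c\varepsilon} = O(\varepsilon^{-2}\log D)$ gives an $\varepsilon$-cover of size $(M/\varepsilon)^{O(\varepsilon^{-2}\log D)}$, and choosing one representative $f:[M]\to[0,1]$ per cover element yields the promised set $S$.

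The main obstacle I expect is bookkeeping rather than conceptual: the Alon--Ben-David--Cesa-Bianchi--Haussler bound comes with a $\log(M/\varepsilon)$ factor in the exponent and a constant-factor loss in the fat-shattering scale (so $d_{c\varepsilon}$ rather than $d_{\varepsilon}$), and I would need to verify that both of these effects are absorbed correctly into the stated $O(\varepsilon^{-2}\log D)$ exponent without introducing dependence on anything other than $\log D$, $\log M$, and $\log(1/\varepsilon)$. The only genuinely quantum step is the fat-shattering bound itself, which is already available from \cite{aar:learn}; everything else is a black-box application of uniform-convergence machinery from classical learning theory.
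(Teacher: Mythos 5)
Your proposal is exactly the route the paper indicates: Theorem~\ref{epscover} is cited from \cite{adrucker} and the paper itself gives no proof, only the one-sentence description that it follows ``by combining a result from classical learning theory with a result from \cite{aar:learn} about the fat-shattering dimension of quantum states as a hypothesis class'' --- which is precisely the fat-shattering bound $\operatorname{fat}_{\gamma}(\mathcal{F}_D) = O(\gamma^{-2}\log D)$ plugged into the Alon--Ben-David--Cesa-Bianchi--Haussler $\ell_{\infty}$-covering-number theorem, as you describe. On the bookkeeping concern you raise: you are right to be suspicious, since the Alon et al.\ bound produces a covering number of roughly $(M/\varepsilon)^{O(d_{c\varepsilon}\log(M/\varepsilon))}$, so an honest accounting gives cardinality $(M/\varepsilon)^{O(\varepsilon^{-2}\log D \cdot \log(M/\varepsilon))}$ rather than $(M/\varepsilon)^{O(\varepsilon^{-2}\log D)}$; the paper's own remark that Theorem~\ref{epscover} and Theorem~\ref{postthm} agree ``up to a small difference in the parameters'' suggests the exponent is to be read loosely (e.g.\ with a hidden $\widetilde{O}$), and indeed taking logarithms shows $\log|S| \approx \varepsilon^{-2}\log D \cdot \log^{2}(M/\varepsilon)$, which is in the same ballpark as the $\widetilde{O}(\varepsilon^{-3}\log D\log M)$ string length in Theorem~\ref{postthm}, so no genuine gap, just slop in the stated exponent.
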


Up to a small difference in the parameters, Theorem \ref{epscover}\ is
equivalent to Theorem \ref{postthm}: either can easily be deduced from the
other. \ The main difference is just that Theorem \ref{postthm} came with an
explicit procedure, based on postselection, for recovering the
$\operatorname{Tr}\left(  E_{i}\rho\right)  $'s from the classical string,
whereas Theorem \ref{epscover} was much less explicit.

It might seem that Theorem \ref{epscover} would give rise to a shadow
tomography procedure, since we'd just need to implement a measurement, say on
$O\left(  \frac{\log\left\vert S\right\vert }{\varepsilon^{2}}\right)  $
copies of $\rho$, that \textquotedblleft pulled apart\textquotedblright\ the
different elements of the set $S$\ (which is called\ an $\varepsilon
$\textit{-cover}). \ Unfortunately, we haven't been able to turn this
intuition into an algorithm. \ For while one \textit{can} project a quantum
state onto any set of vectors that's sufficiently close to orthogonal---as,
for example, in the algorithm of Ettinger, H\o yer, and Knill \cite{ehk}\ for
the hidden subgroup problem---in shadow tomography, there's no guarantee that
the state $\rho^{\otimes k}$\ being measured \textit{is} close to one of
various nearly-orthogonal measurement outcomes, and therefore that it won't be
irreparably damaged at an early stage in the measurement process.\bigskip

Recently, building on the work reported here, Brand\~{a}o et
al.\ \cite{bkllsw} have undertaken an initial investigation of the
\textit{computational} complexity of shadow tomography. \ While we made no
attempt to optimize the computational cost of our procedure, a loose estimate
is that ours requires performing $\widetilde{O}\left(  \frac{M\log
D}{\varepsilon^{4}}\right)  $ measurements on copies of $\rho$. \ Furthermore,
each measurement itself could, in the worst case, require $\Theta\left(
D^{2}\right)  $ gates to implement. \ Our procedure also involves storing and
updating a classical description of an amplified hypothesis state, which takes
$D^{O\left(  \varepsilon^{-2}\log\log D\right)  }$\ time and space.

By combining our ideas with recent quantum algorithms for semidefinite
programming, Brand\~{a}o et al.\ \cite{bkllsw} have shown how to perform
shadow tomography using not only $\operatorname*{poly}\left(  \log M,\log
D\right)  $\ copies of $\rho$, but also $\widetilde{O}\left(  \sqrt
{M}L\right)  +D^{O\left(  1\right)  }$ quantum gates, where $L=O\left(
D^{2}\right)  $\ is the maximum length of a circuit to apply a single
measurement $E_{i}$. \ This of course improves over our $\widetilde{O}\left(
ML\right)  +D^{O\left(  \log\log D\right)  }$.

If we make some additional assumptions about the measurement matrices $E_{i}%
$---namely, that they\ have rank at most $\operatorname*{polylog}D$; and that
for every $i\in\left[  M\right]  $, one can coherently prepare the mixed state
$\frac{E_{i}}{\operatorname{Tr}\left(  E_{i}\right)  }$, and also compute
$\operatorname{Tr}\left(  E_{i}\right)  $, in time at most
$\operatorname*{polylog}D$---then Brand\~{a}o et al.\ \cite{bkllsw}\ further
improve the running time of their algorithm, to $\widetilde{O}\left(  \sqrt
{M}\operatorname*{polylog}D\right)  $.

Roughly, Brand\~{a}o et al.\ \cite{bkllsw} keep much of the structure of our
algorithm, except they replace\ our linear search for informative measurements
$E_{i}$\ by Grover-style approximate counting---hence the improvement from
$\widetilde{O}\left(  M\right)  $\ to $\widetilde{O}\left(  \sqrt{M}\right)
$. \ They also replace our postselected learning by the preparation of a Gibbs
state, using Jaynes' principle from statistical mechanics. \ By exploiting
recent progress on quantum algorithms for SDPs, Brand\~{a}o et al.\ are able
to perform the needed manipulations on $D$-dimensional hypothesis states
without ever writing the states explicitly in a classical memory as $D\times
D$\ matrices, like we do.

In Section \ref{OPEN}, we'll discuss the prospects for improving the gate
complexity of shadow tomography further, and some possible
complexity-theoretic barriers to doing so.\bigskip

There are many other results in the literature that can be seen, in one way or
another, as trying to get around the destructive nature of measurement, or the
exponential number of copies needed for state tomography. \ We won't even
attempt a survey here, but briefly, such results often put some additional
restriction on the state $\rho$\ to be learned: for example, that it's low
rank \cite{glfbe}, or that it has a succinct classical description of some
kind (e.g., that it's a stabilizer state \cite{montanaro:stabilizer}), or that
we have an oracle to recognize the state \cite{farhi:restore}. \ Of course,
shadow tomography requires none of these assumptions.

\section{Motivation\label{MOTIV}}

Perhaps the most striking way to state Theorem \ref{main} is as follows.

\begin{corollary}
\label{circuitcor}Let $\left\vert \psi\right\rangle $\ be an unknown $n$-qubit
state, and let $p$ be any fixed polynomial. \ Then it's possible to estimate
$\Pr\left[  C\text{ accepts }\left\vert \psi\right\rangle \right]  $\ to
within additive error $\pm\varepsilon$, for \textbf{every} quantum circuit $C$
with at most $p\left(  n\right)  $ gates simultaneously, and with $1-o\left(
1\right)  $\ success probability, by a measurement on $\left(  n/\varepsilon
\right)  ^{O\left(  1\right)  }$\ copies of $\left\vert \psi\right\rangle $.
\end{corollary}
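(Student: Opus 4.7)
The plan is to reduce the claim to a direct application of Theorem \ref{main} via a counting/discretization argument on the set of $p(n)$-gate quantum circuits.

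First I would fix a finite universal gate set $\mathcal{G}$ (say $\{H, T, \mathrm{CNOT}\}$), and associate to every circuit $C$ of at most $p(n)$ gates a two-outcome measurement $E_{C}$ on $n$ qubits, defined by running $C$ and accepting if the designated output qubit reads $1$ (with any extra ancillas initialized to $\left\vert 0\right\rangle $, which can be absorbed into $C$ without changing the asymptotics). The number of such circuits is at most $M \leq (|\mathcal{G}| \cdot n^{O(1)})^{p(n)}$, since each of the $p(n)$ gate slots is specified by the gate type and the $O(1)$ qubits it acts on. Hence $\log M = O(p(n)\log n)$, which is polynomial in $n$.

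Next I would invoke Theorem \ref{main} with $D = 2^{n}$, this value of $M$, accuracy $\varepsilon$, and failure probability $\delta = o(1)$. Plugging in gives a copy count
\[
k \;=\; \widetilde{O}\!\left( \frac{\log(1/\delta)}{\varepsilon^{4}} \cdot \log^{4} M \cdot \log D \right) \;=\; \widetilde{O}\!\left( \frac{p(n)^{4}\, n \,\log^{4} n}{\varepsilon^{4}} \right),
\]
which is $(n/\varepsilon)^{O(1)}$ as required, and the resulting $b_{C}$'s simultaneously approximate $\Pr[C\text{ accepts }\left\vert \psi\right\rangle ] = \operatorname{Tr}(E_{C}\left\vert \psi\right\rangle \langle\psi|)$ to within $\varepsilon$ for every such $C$ with probability $1-o(1)$.

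The one nontrivial step is that the statement of the corollary quantifies over \emph{all} quantum circuits of size $\leq p(n)$, not just those built from $\mathcal{G}$: general circuits have continuous gate parameters, so there are uncountably many of them. I would handle this by Solovay--Kitaev: every $p(n)$-gate circuit can be approximated in operator norm to within $\varepsilon/2$ by a circuit over $\mathcal{G}$ of size $p(n)\cdot\mathrm{polylog}(p(n)/\varepsilon)$, which only inflates the effective $p(n)$ by a polylogarithmic factor and hence leaves $\log M$ polynomial in $n$ and $\log(1/\varepsilon)$. Acceptance probabilities of two circuits that are $\varepsilon/2$-close in operator norm differ by at most $\varepsilon/2$ on any state, so after rescaling $\varepsilon$ by a constant factor the $b_{C}$ output for the nearest discretized circuit serves as the estimate for $C$ itself. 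This discretization bookkeeping is the only real obstacle; once it is in place, the corollary is immediate from Theorem \ref{main}.
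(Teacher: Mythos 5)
Your proof is correct and follows the same route as the paper: fix a finite universal gate set, count circuits to bound $\log M$ by a polynomial in $n$, and plug into Theorem~\ref{main}. The only thing you add is an explicit Solovay--Kitaev discretization step to handle circuits over continuous gate parameters; the paper's proof dispenses with this in one clause (``assuming a fixed finite gate set without loss of generality''), so you are filling in a detail rather than taking a different approach, and your bookkeeping (with the constant-factor rescaling of $\varepsilon$ noted at the end) is sound.
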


Here, we're simply combining Theorem \ref{main}\ with the observation that
there are at most $M=\left(  n+p\left(  n\right)  \right)  ^{O\left(  p\left(
n\right)  \right)  }$ different quantum circuits of size at most $p\left(
n\right)  $, assuming a fixed finite gate set without loss of generality.

We've already given some philosophical motivation for this: at bottom we're
trying to understand, \textit{to what extent does the destructive nature of
quantum measurement force us into an epistemically unsatisfying situation,
where we need }$\exp\left(  n\right)  $\textit{\ copies of an }$\mathit{n}%
$\textit{-qubit state }$\left\vert \psi\right\rangle $\textit{\ just to learn
}$\left\vert \psi\right\rangle $\textit{'s basic properties? \ }Corollary
\ref{circuitcor} tells us that, as long as the \textquotedblleft basic
properties\textquotedblright\ are limited to $\left\vert \psi\right\rangle $'s
accept/reject behaviors on quantum circuits of a fixed polynomial size (and to
whatever can be deduced from those behaviors), we're \textit{not} in the
epistemically unsatisfying situation that might have been feared.

Besides this conceptual point, we hope that Theorem \ref{main} will
find\ experimental applications. \ In the quest for such applications, it
would of course help to tighten the parameters of Theorem \ref{main} (e.g.,
the exponents in $\frac{\log^{4}M}{\varepsilon^{5}}$); and to find shadow
tomography procedures that are less expensive both in computational complexity
and in the required measurement apparatus. \ We'll say more about these issues
in Section \ref{OPEN}.

In the rest of this section, we'll point out implications of Theorem
\ref{main}\ for several areas of quantum computing theory: quantum money,
quantum copy-protected software, and quantum advice and one-way communication.
\ The first of these actually provided the original impetus for this work: as
we'll explain, Theorem \ref{main} immediately yields a proof of a basic result
called the \textquotedblleft tradeoff theorem\textquotedblright\ for
private-key quantum money schemes \cite[Section 8.3]{aarbados}. \ But even
where the implications amount to little more than translations of the theorem
to other contexts, they illustrate the wide reach of shadow tomography as a
concept.\bigskip

\textbf{Quantum money.} \ The idea of quantum money---i.e., quantum states
that can be traded and verified, but are physically impossible to clone---is
one of the oldest ideas in quantum information, having been proposed by
Wiesner \cite{wiesner} around 1970. \ A crucial distinction here is between
so-called \textit{public-key} and \textit{private-key} quantum money schemes.
\ See Aaronson and Christiano \cite{achristiano}\ for formal definitions of
these concepts, but briefly: in a public-key money scheme, anyone can
efficiently verify a bill $\left\vert \$\right\rangle $\ as genuine, whereas
in a private-key scheme, verifying a bill requires taking it back to the bank.
\ It's easy to see that, if public-key quantum money is possible at all, then
it requires computational assumptions (e.g., that any would-be counterfeiter
is limited to polynomial time). \ While Aaronson and Christiano
\cite{achristiano} constructed an oracle relative to which public-key quantum
money is possible, it's still unclear whether it's possible in the
unrelativized world.

By contrast, in Wiesner's original paper on the subject \cite{wiesner}, he
proposed a private-key quantum money scheme that was \textit{unconditionally
secure} (though a security proof would only be given in 2012, by Molina,
Vidick, and Watrous \cite{molina}). \ The central defect of Wiesner's scheme
was that it required the bank to maintain a gigantic database, storing a
different list of secret measurement bases for every bill in circulation. \ In
1982, Bennett et al.\ \cite{bbbw}\ fixed this defect of Wiesner's scheme, but
only by using a pseudorandom function to generate the measurement bases---so
that the scheme again required a computational assumption.

In 2009, Aaronson \cite{aar:qcopy} raised the question of whether there's an
inherent tradeoff here: that is, does every private-key quantum money scheme
require \textit{either} a huge database, or else a computational
assumption?\footnote{Actually, he claimed to have an unwritten proof of this,
but working out the details took longer than expected, and indeed ultimately
relied on the 2016 work of Harrow, Lin, and Montanaro \cite{hlm}.} \ He then
answered this question in the affirmative (paper still in preparation, but see
\cite[Section 8.3]{aarbados}). \ It was while proving this tradeoff theorem
that the author was led to formulate the shadow tomography problem.

To see the connection, let's observe an easy corollary of Theorem \ref{main}.

\begin{corollary}
[of Theorem \ref{main}]\label{moneycor}Consider any private-key quantum money
scheme with a single secret key $k\in\left\{  0,1\right\}  ^{m}$ held by the
bank; $d$-qubit bills\ $\left\vert \$\right\rangle $; and a verification
procedure $V\left(  k,\left\vert \$\right\rangle \right)  $\ that the bank
applies. \ Then given $\widetilde{O}\left(  dm^{4}\right)  $\ legitimate bills
$\left\vert \$\right\rangle $, as well as $\exp\left(  d,m\right)  $
computation time, a counterfeiter can estimate $\Pr\left[  V\left(
k,\left\vert \$\right\rangle \right)  \text{ accepts}\right]  $\ to within
additive error $o\left(  1\right)  $, for every $k\in\left\{  0,1\right\}
^{m}$, with success probability $1-o\left(  1\right)  $.
\end{corollary}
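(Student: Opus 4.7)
The plan is to recognize the counterfeiter's task as a direct instance of Shadow Tomography (Problem~\ref{theprob}) and invoke Theorem~\ref{main} essentially as a black box. First I would designate the unknown $D$-dimensional state to be the bill $\rho=\left\vert\$\right\rangle\left\langle\$\right\vert$, so that $D=2^{d}$. Then I would take the two-outcome measurements to be indexed by keys: for each $k\in\{0,1\}^{m}$, let $E_{k}$ be the POVM element corresponding to the ``accept'' outcome of $V(k,\cdot)$, so that $\operatorname{Tr}(E_{k}\rho)=\Pr[V(k,\left\vert\$\right\rangle)\text{ accepts}]$, and set $M=2^{m}$. Because the verification circuit $V$ is public (only the key $k$ is secret), the counterfeiter can write down all $2^{m}$ of these POVM elements explicitly in $\exp(d,m)$ time and feed them into the shadow tomography procedure.

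Next I would apply Theorem~\ref{main} with any $o(1)$ schedules for $\varepsilon$ and $\delta$, say $\varepsilon,\delta=1/\operatorname{polylog}(dm)$. The stated sample-complexity bound then becomes
\[
\widetilde{O}\!\left(\frac{\log 1/\delta}{\varepsilon^{4}}\cdot\log^{4}M\cdot\log D\right)=\widetilde{O}(d\cdot m^{4}),
\]
exactly the promised count of legitimate bills, and the output consists of numbers $b_{k}$ satisfying $|b_{k}-\Pr[V(k,\left\vert\$\right\rangle)\text{ accepts}]|\le\varepsilon=o(1)$ for every $k$, with overall success probability $1-\delta=1-o(1)$.

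For the running time, I would cite Theorem~\ref{main}'s assertion that the procedure is ``fully explicit,'' together with the informal bound $\widetilde{O}(ML)+D^{O(\log\log D)}$ noted in Section~\ref{RELATED}; with $M=2^{m}$, $D=2^{d}$, and $L=\operatorname{poly}(D)$, this is comfortably $\exp(d,m)$, matching what the corollary allows.

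I do not expect any real obstacle: the corollary is essentially a translation of Theorem~\ref{main} into money-scheme language. The only point worth a sentence of care is picking the $o(1)$ schedules for $\varepsilon$ and $\delta$ slow enough that the $\widetilde{O}$ in Theorem~\ref{main}, which hides $\operatorname{poly}(\log\log M,\log\log D,\log 1/\varepsilon)$ factors, remains absorbed into the final $\widetilde{O}(dm^{4})$; any choice like $\varepsilon,\delta=1/\operatorname{polylog}(dm)$ clearly suffices.
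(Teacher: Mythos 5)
Your proof is correct and follows essentially the same route as the paper: set $\rho = \left\vert\$\right\rangle\left\langle\$\right\vert$ with $D=2^d$, take the $M=2^m$ measurements $E_k$ to be the accept-outcomes of $V(k,\cdot)$, and invoke Theorem~\ref{main}. Your added care about the $o(1)$ schedules for $\varepsilon,\delta$ and the explicit running-time accounting are reasonable elaborations of what the paper leaves implicit.
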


\begin{proof}
We set $M:=2^{m}$, and let our list of $M$\ two-outcome measurements
correspond to $V\left(  k,\cdot\right)  $ for every $k\in\left\{  0,1\right\}
^{m}$. \ We set $\rho:=\left\vert \$\right\rangle \left\langle \$\right\vert
$; this is a $D$-dimensional state where $D:=2^{d}$. \ Then Theorem
\ref{main}\ lets us estimate $\Pr\left[  V\left(  k,\left\vert \$\right\rangle
\right)  \text{ accepts}\right]  $\ for every $k\in\left\{  0,1\right\}  ^{m}%
$\ as claimed, using%
\[
\widetilde{O}\left(  \log D\cdot\log^{4}M\right)  =\widetilde{O}\left(
dm^{4}\right)
\]
copies of $\left\vert \$\right\rangle $.
\end{proof}

We now observe that the tradeoff theorem follows immediately from Corollary
\ref{moneycor}:

\begin{theorem}
[Tradeoff Theorem for Quantum Money]\label{tradeoff}Given any private-key
quantum money scheme, with $d$-qubit bills\ and an $m$-bit secret key held by
the bank, a counterfeiter can produce additional bills, which pass
verification with $1-o\left(  1\right)  $\ probability, given $\widetilde{O}%
\left(  dm^{4}\right)  $\ legitimate bills and $\exp\left(  d,m\right)  $
computation time. \ No queries to the bank are needed to produce these bills.
\end{theorem}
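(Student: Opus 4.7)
The plan is to derive Theorem \ref{tradeoff} as an essentially immediate consequence of Corollary \ref{moneycor}, followed by a single brute-force classical inversion step. First, the counterfeiter runs the shadow tomography procedure of Corollary \ref{moneycor} on her $\widetilde{O}(dm^4)$ legitimate bills, using as her list of $M = 2^m$ two-outcome measurements the family $\{V(k,\cdot)\}_{k \in \{0,1\}^m}$ indexed by every possible secret key. With probability $1 - o(1)$ this produces numbers $\{p_k\}_{k \in \{0,1\}^m}$ satisfying $|p_k - \Pr[V(k,|\$\rangle) \text{ accepts}]| \leq o(1)$ simultaneously for every $k$. By completeness of the money scheme, the bank's actual secret key $k^{\ast}$ makes $\Pr[V(k^{\ast},|\$\rangle) \text{ accepts}] \geq 1 - o(1)$, and hence $p_{k^{\ast}} \geq 1 - o(1)$ --- although the counterfeiter has no idea which $k$ is the true $k^{\ast}$.

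Second, using her $\exp(d,m)$ computation time, the counterfeiter searches for \emph{any} $d$-qubit state $\sigma$ such that $|\Pr[V(k,\sigma) \text{ accepts}] - p_k| \leq o(1)$ for every $k \in \{0,1\}^m$. She can do this by enumerating an $\varepsilon$-cover of behaviour functions as in Theorem \ref{epscover}, which has size $(M/\varepsilon)^{O(\varepsilon^{-2}\log D)} = \exp(\mathrm{poly}(d,m))$ when $\varepsilon$ is allowed to tend to $0$ sufficiently slowly; alternatively, she may brute-force search over an $\varepsilon$-net of $d$-qubit density matrices. In either case, such a matching $\sigma$ is guaranteed to exist because $|\$\rangle\langle\$|$ itself is a witness by the first step. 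The counterfeiter then prepares as many copies of $\sigma$ as she wishes, with no further access to the bank, and by the triangle inequality $\Pr[V(k^{\ast},\sigma) \text{ accepts}] \geq p_{k^{\ast}} - o(1) \geq 1 - o(1)$, so each counterfeit bill passes verification with the claimed probability.

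The principal observation is that Corollary \ref{moneycor} already does all the heavy lifting: once the counterfeiter possesses a short classical description of the legitimate bill's full accept/reject behaviour on every verification routine $V(k,\cdot)$, unbounded classical computation trivially inverts that description into a quantum state reproducing the same behaviour. There is accordingly no real ``obstacle'' in the proof; the only care required is in the error budgeting, where one sets the target additive error of Corollary \ref{moneycor} to some $1/\omega(1)$ quantity so that after absorbing two triangle-inequality terms the counterfeit still passes with probability $1 - o(1)$. This adjustment costs only polylogarithmic overhead in the sample complexity and so does not affect the $\widetilde{O}(dm^4)$ bill count.
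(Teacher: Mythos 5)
Your proposal is correct and follows essentially the same route as the paper: apply Corollary \ref{moneycor} to estimate $\Pr[V(k,|\$\rangle)\text{ accepts}]$ for every key $k$, then brute-force search (in exponential time) for a state $\sigma$ matching those estimates, noting that $|\$\rangle$ itself witnesses existence and that any such $\sigma$ must pass verification under the unknown true key $k^{\ast}$. The only cosmetic difference is that you mention searching via an $\varepsilon$-cover as in Theorem \ref{epscover}, whereas the paper suggests semidefinite programming; either works given exponential time.
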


For given exponential time, the counterfeiter just needs to do a brute-force
search (for example, using semidefinite programming) for a state $\rho$\ such
that%
\[
\left\vert \Pr\left[  V\left(  k,\rho\right)  \text{ accepts}\right]
-\Pr\left[  V\left(  k,\left\vert \$\right\rangle \right)  \text{
accepts}\right]  \right\vert =o\left(  1\right)
\]
for every key $k\in\left\{  0,1\right\}  ^{m}$. \ Such a $\rho$\ surely
exists, since $\left\vert \$\right\rangle $\ itself is one, and given
exponential time, the counterfeiter can then prepare $\rho$\ as often as it
likes. \ And by assumption, this $\rho$ must be a state that the bank accepts
with high probability given the \textquotedblleft true\textquotedblright\ key
$k^{\ast}$---\textit{even though the counterfeiter never actually learns
}$k^{\ast}$\textit{\ itself}.

In \cite[Section 8.3]{aarbados}, the author took a somewhat different route to
proving the tradeoff theorem, simply because he didn't yet possess the shadow
tomography theorem. \ Specifically, he used what in this paper we'll call the
\textquotedblleft gentle search procedure,\textquotedblright\ and will prove
as Lemma \ref{searchlem} along the way to proving Theorem \ref{main}. \ He
then combined Lemma \ref{searchlem}\ with an iterative procedure, which
repeatedly cut down the space of \textquotedblleft possible
keys\textquotedblright\ $k$ by a constant factor, until averaging over the
remaining keys led to a state that the bank accepted with high probability.
\ However, this approach had the drawback that preparing the counterfeit bills
required $O\left(  n\right)  $\ queries to the bank. \ Shadow
tomography\ removes that drawback.\bigskip

\textbf{Quantum copy-protected software.} \ In 2009, Aaronson \cite{aar:qcopy}
introduced the notion of quantum copy-protected software: roughly speaking, an
$n^{O\left(  1\right)  }$-qubit quantum state $\rho_{f}$\ that's given to a
user, and that lets the user efficiently evaluate a Boolean function
$f:\left\{  0,1\right\}  ^{n}\rightarrow\left\{  0,1\right\}  $, on any input
$x\in\left\{  0,1\right\}  ^{n}$\ of the user's choice, but that can't be used
to prepare more states with which $f$\ can be efficiently evaluated. \ The
analogous classical problem is clearly impossible. \ But the destructive
nature of quantum measurements (or equivalently, the unclonability of quantum
states) raises the prospect that, at least with suitable cryptographic
assumptions, it could be possible quantumly. \ And indeed, Aaronson
\cite{aar:qcopy} sketched a construction of a quantum oracle $U$ relative to
which quantum copy-protection is \textquotedblleft
generically\textquotedblright\ possible, meaning that one really \textit{can}
have a state $\left\vert \psi_{f}\right\rangle $\ that acts like an unclonable
black box for any Boolean function $f$ of one's choice. \ It remains an
outstanding problem to construct \textit{explicit} schemes for quantum
copy-protection, which are secure under plausible cryptographic assumptions.

But now suppose that we're interested in quantum programs that simply accept
various inputs $x\in\left\{  0,1\right\}  ^{n}$ with specified probabilities
$p\left(  x\right)  \in\left[  0,1\right]  $: for example, programs to
evaluate partial Boolean functions, or to simulate quantum processes. \ In
that case, we might hope for a copy-protection scheme that was
\textit{unconditionally} secure, even against software pirates with unlimited
computation time. \ Furthermore, such a scheme would have the
property---possibly desirable to the software vendor!---that the
programs\ would periodically get \textquotedblleft used up\textquotedblright%
\ even by legitimate use, and need to be replenished. \ For even if we had
$n^{O\left(  1\right)  }$\ copies of the program, and used the Gentle
Measurement Lemma to estimate the probabilities $p\left(  x\right)  $, we
still couldn't always avoid measurements on the \textquotedblleft knife
edge\textquotedblright\ between one output behavior and another, which would
destroy the copies.

Once again, though, Theorem \ref{main} has the consequence that this gambit
fails, so that if quantum copy-protection is possible at all, then it indeed
requires computational assumptions.

\begin{corollary}
[of Theorem \ref{main}]\label{copycor}Let $\rho$\ be any $n^{O\left(
1\right)  }$-qubit quantum program, which accepts each input $x\in\left\{
0,1\right\}  ^{n}$ with probability $p\left(  x\right)  $. \ Then given
$n^{O\left(  1\right)  }$\ copies of $\rho$\ and $2^{n^{O\left(  1\right)  }}%
$\ computation time, with $1-o\left(  1\right)  $\ success probability we can
\textquotedblleft pirate\textquotedblright\ $\rho$: that is, produce multiple
quantum programs, all of which accept input $x\in\left\{  0,1\right\}  ^{n}$
with probability $p\left(  x\right)  \pm o\left(  1\right)  $, and which have
the same running time as $\rho$\ itself.
\end{corollary}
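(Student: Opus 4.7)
The plan is to mimic the quantum-money argument (Corollary \ref{moneycor} and Theorem \ref{tradeoff}) almost verbatim, reindexing the two-outcome measurements by inputs rather than by keys, and then doing a brute-force search for a behaviorally equivalent state.

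First, fix the universal evaluation circuit $V$ that the quantum program $\rho$ is designed to be plugged into: on classical input $x\in\{0,1\}^n$ and quantum state $\rho$, the pair $(V,x)$ defines a two-outcome measurement $E_x$ with $\operatorname{Tr}(E_x\rho)=p(x)$. There are $M=2^n$ such measurements, and $\rho$ lives in a Hilbert space of dimension $D=2^{n^{O(1)}}$. Apply Theorem \ref{main} with error parameter $\varepsilon=o(1)$ and failure probability $\delta=o(1)$ to obtain numbers $b_x$ with $|b_x-p(x)|\le\varepsilon$ for every $x$, using only
\[
k=\widetilde{O}\!\left(\frac{\log D\cdot \log^4 M}{\varepsilon^{4}}\right)=n^{O(1)}
\]
copies of $\rho$, with success probability $1-o(1)$. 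This is the only step that actually touches the original program copies.

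Second, using $2^{n^{O(1)}}$ computation time, the pirate searches (say, by semidefinite programming over an $\varepsilon$-net of density matrices on the same Hilbert space as $\rho$) for \emph{any} density matrix $\sigma$ satisfying $|\operatorname{Tr}(E_x\sigma)-b_x|\le\varepsilon$ for every $x\in\{0,1\}^n$. Such a $\sigma$ certainly exists, since $\rho$ itself is feasible; hence the search succeeds, and by the triangle inequality the recovered $\sigma$ obeys $|\operatorname{Tr}(E_x\sigma)-p(x)|=o(1)$ for every input $x$. Because the pirate now holds a fully classical description of $\sigma$, they can prepare as many copies of $\sigma$ as they wish in $2^{n^{O(1)}}$ time per copy, and each copy can be fed into the \emph{same} universal evaluator $V$, so the pirated programs have identical running time to $\rho$.

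The only subtle point—what I would flag as the ``main obstacle,'' though it is really just bookkeeping—is ensuring that the semantics of ``quantum program'' fits this reduction: the evaluation procedure $V$ must be a fixed circuit independent of the program state, so that a freshly-prepared $\sigma$ can be substituted for $\rho$ and run through exactly the same hardware. This is built into the definition of a quantum program from \cite{aar:qcopy}, so no new technical work is required; the corollary is a direct translation of Theorem \ref{main} into the copy-protection setting, exactly as Corollary \ref{moneycor} translates it into the quantum-money setting.
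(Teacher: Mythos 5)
Your proof matches the paper's argument essentially verbatim: both reduce to Theorem \ref{main} with the $M=2^n$ measurements $E_x$ induced by plugging $\rho$ into the fixed evaluator on input $x$, and both then brute-force (via SDP over density matrices) for any $\sigma$ reproducing the estimated acceptance probabilities, noting $\rho$ itself witnesses feasibility. The only cosmetic difference is that you spell out the triangle-inequality step and the role of the fixed universal evaluator, which the paper leaves implicit.
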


Here we're using the fact that, once we know the approximate acceptance
probabilities of $\rho$\ on every input $x\in\left\{  0,1\right\}  ^{n}$, in
$2^{n^{O\left(  1\right)  }}$\ time we can simply use semidefinite programming
to brute-force search for an $n^{O\left(  1\right)  }$-qubit state $\sigma
$\ that approximates $\rho$'s acceptance probabilities on every $x$. \ Indeed,
if we further assume that $\rho$\ was prepared by a polynomial-size quantum
circuit, then in $2^{n^{O\left(  1\right)  }}$\ time\ we can brute-force
search for such a circuit as well.\bigskip

\textbf{Quantum advice and one-way communication.} \ In 2003, Nishimura and
Yamakami \cite{ny} defined the complexity class $\mathsf{BQP/qpoly}$, which
consists (informally) of all languages that are decidable in bounded-error
quantum polynomial time, given a polynomial-size \textquotedblleft quantum
advice state\textquotedblright\ $\left\vert \psi_{n}\right\rangle $\ that
depends only on the input length $n$ but could otherwise be arbitrary. \ This
is a natural quantum generalization of the classical notion of Karp-Lipton
advice, and of the class $\mathsf{P/poly}$. \ Many results have since been
proven about $\mathsf{BQP/qpoly}$ and related classes
\cite{aar:adv,aar:qmaqpoly,adrucker,ak}; and as we discussed in Section
\ref{TECHNIQUES}, some of the techniques used to prove those results will also
play major roles in this work.

But one basic question remained: given a $\mathsf{BQP/qpoly}$\ algorithm,
suppose we're given $n^{O\left(  1\right)  }$\ copies of the quantum advice
state $\left\vert \psi_{n}\right\rangle $. \ Can we safely reuse those copies,
again and again, for as many inputs $x\in\left\{  0,1\right\}  ^{n}$\ as we
like? \ For \textit{deciding a language} $L$, it's not hard to show that the
answer is yes, because of the Gentle Measurement Lemma\ (Lemma \ref{gentle} in
Section \ref{PRELIM}). \ But if we consider \textit{promise problems} (i.e.,
problems of deciding which of two disjoint sets the input $x$\ belongs to,
promised that it belongs to one of them), then a new difficulty arises.
\ Namely, what if we use our quantum advice on an input that violates the
promise---a possibility that we can't generally avoid if we don't know the
promise? \ Every such use runs the risk of destroying an advice state.

An immediate corollary of Theorem \ref{main} is that we can handle this issue,
albeit with a blowup in computation time.

\begin{corollary}
[of Theorem \ref{main}]\label{promisecor}Let $\Pi=\left(  \Pi
_{\operatorname*{YES}},\Pi_{\operatorname*{NO}}\right)  $ be a promise problem
in $\mathsf{P{}romiseBQP/qpoly}$. \ Let $A$ be a quantum algorithm for $\Pi
$\ that uses advice states $\left\{  \left\vert \psi_{n}\right\rangle
\right\}  _{n}$. \ Then there exists a quantum algorithm, running in
$2^{n^{O\left(  1\right)  }}$\ time, that uses $\left\vert \psi_{n}%
\right\rangle ^{\otimes n^{O\left(  1\right)  }}$ as advice, and that
approximates $\Pr\left[  A\left(  x,\left\vert \psi_{n}\right\rangle \right)
\text{ accepts}\right]  $\ to within $\pm o\left(  1\right)  $, for all
$2^{n}$ inputs $x\in\left\{  0,1\right\}  ^{n}$, with success probability
$1-o\left(  1\right)  $. \ So in particular, this algorithm \textquotedblleft
generates the complete truth table of $\Pi$\ on inputs of size $n$%
,\textquotedblright\ and does so even without being told which inputs satisfy
the promise $x\in\Pi_{\operatorname*{YES}}\cup\Pi_{\operatorname*{NO}}$.
\end{corollary}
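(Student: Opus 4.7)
The plan is to derive this corollary as an essentially mechanical application of Theorem~\ref{main}, with the quantum advice state $\left\vert \psi_{n}\right\rangle$ playing the role of the unknown state $\rho$ and with one two-outcome measurement per possible input.

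Concretely, first I would fix the advice-taking algorithm $A$ and its advice states $\{\left\vert \psi_{n}\right\rangle\}_n$. For each $x\in\{0,1\}^{n}$, let $E_{x}$ denote the two-outcome measurement on the advice register defined by coherently running $A(x,\cdot)$, measuring its output qubit, and treating ``accept'' as the positive outcome. By construction $\operatorname{Tr}(E_{x}\left\vert \psi_{n}\right\rangle\left\langle \psi_{n}\right\vert)=\Pr[A(x,\left\vert \psi_{n}\right\rangle)\text{ accepts}]$, and because $\left\vert \psi_{n}\right\rangle$ has $n^{O(1)}$ qubits the relevant Hilbert-space dimension is $D=2^{n^{O(1)}}$. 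The list has $M=2^{n}$ measurements, all of which are fully determined by $A$, which is known.

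Next, I would apply Theorem~\ref{main} to $\rho:=\left\vert \psi_{n}\right\rangle\left\langle \psi_{n}\right\vert$ and the measurements $\{E_{x}\}_{x\in\{0,1\}^{n}}$ with, say, $\varepsilon=\delta=1/n$. The theorem guarantees that
\[
k=\widetilde{O}\!\left(\frac{\log 1/\delta}{\varepsilon^{4}}\cdot\log^{4}M\cdot\log D\right)=\widetilde{O}(n^{4}\cdot n^{4}\cdot n^{O(1)})=n^{O(1)}
\]
copies of $\left\vert \psi_{n}\right\rangle$ suffice to produce, with probability $1-o(1)$, numbers $b_{x}$ satisfying $\left\vert b_{x}-\Pr[A(x,\left\vert \psi_{n}\right\rangle)\text{ accepts}]\right\vert\le\varepsilon$ \emph{simultaneously} for every $x\in\{0,1\}^{n}$. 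Taking $\left\vert \psi_{n}\right\rangle^{\otimes n^{O(1)}}$ as our new advice, the explicit shadow tomography procedure outputs the entire truth table of approximate acceptance probabilities, which is exactly what the corollary asks for.

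Finally, I would address the running time: the shadow tomography procedure of Theorem~\ref{main} runs in time $\operatorname{poly}(M,D)\cdot D^{O(\varepsilon^{-2}\log\log D)}=2^{n^{O(1)}}$, since every $E_{x}$ admits a circuit of size $\operatorname{poly}(D)$ (given the description of $A$), and the hypothesis-state bookkeeping involves $D\times D$ matrices. I would also explicitly note that the promise plays no role anywhere in the argument: the acceptance probability $\Pr[A(x,\left\vert \psi_{n}\right\rangle)\text{ accepts}]$ is a well-defined number for \emph{every} $x$, whether or not $x$ satisfies the promise, and our algorithm simply estimates all $2^{n}$ of these numbers uniformly. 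There is essentially no obstacle here beyond verifying the parameter accounting; the conceptual difficulty being overcome, namely the one flagged in the introduction that a naive gentle-measurement approach would be destroyed by ``knife-edge'' inputs violating the promise, is exactly what Theorem~\ref{main} was designed to handle.
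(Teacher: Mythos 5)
Your proposal is correct and matches what the paper intends: the paper states Corollary~\ref{promisecor} with no written proof, treating it as an immediate consequence of Theorem~\ref{main}, and the translation you spell out (advice state as $\rho$, one measurement $E_x$ per input $x$, $D = 2^{n^{O(1)}}$, $M = 2^n$, $\varepsilon = \delta = 1/n$) is exactly the natural one. The parameter accounting checks out, including the $D^{O(\varepsilon^{-2}\log\log D)} = 2^{n^{O(1)}}$ time bound and the observation that the procedure is indifferent to the promise.
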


We can also state Corollary \ref{promisecor} in terms of \textit{quantum
one-way communication protocols}. \ In that case, the corollary says the
following. \ Suppose Alice holds an input $x\in\left\{  0,1\right\}  ^{n}%
$\ and Bob holds an input $y\in\left\{  0,1\right\}  ^{m}$, and they want to
compute a partial Boolean function $f:S\rightarrow\left\{  0,1\right\}  $, for
some $S\subset\left\{  0,1\right\}  ^{n}\times\left\{  0,1\right\}  ^{m}$.
\ Suppose also that, if Alice sends a $q$-qubit quantum state $\left\vert
\psi_{x}\right\rangle $\ to Bob, then Bob can compute $f\left(  x,y\right)
$\ with bounded probability of error, for any $\left(  x,y\right)  \in S$.
\ Then given $\widetilde{O}\left(  qm^{4}\right)  $\ copies of $\left\vert
\psi_{x}\right\rangle $, Bob can compute $f\left(  x,y\right)  $\ for
\textit{every} $y$ such that $\left(  x,y\right)  \in S$%
\ simultaneously---again, even though Bob doesn't know which $y$'s satisfy
$\left(  x,y\right)  \in S$ (and therefore, which ones might be
\textquotedblleft dangerous\textquotedblright\ to measure).

\section{Preliminaries\label{PRELIM}}

In this section, we collect the (very basic) concepts and results of quantum
information that we'll need for this paper. \ In principle, no quantum
information background is needed to read the paper beyond this.

A \textit{mixed state} is the most general kind of state in quantum mechanics,
encompassing both superposition and ordinary probabilistic uncertainty. \ A
$D$-dimensional mixed state $\rho$ is described by a $D\times D$\ Hermitian
positive semidefinite matrix\ with $\operatorname{Tr}\left(  \rho\right)  =1$.
\ If $\rho$\ has rank $1$, then we call it a \textit{pure state}. \ At the
other extreme, if $\rho$\ is diagonal, then it simply describes a classical
probability distribution over $D$ outcomes, with $\Pr\left[  i\right]
=\rho_{ii}$. \ The state $\frac{I}{D}$, corresponding to the uniform
distribution over $D$ outcomes, is called the \textit{maximally mixed state}.

Given two mixed states $\rho$\ and $\sigma$, their \textit{trace distance} is
defined as%
\[
\left\Vert \rho-\sigma\right\Vert _{\operatorname*{tr}}:=\frac{1}{2}\sum
_{i}\left\vert \lambda_{i}\right\vert ,
\]
where the $\lambda_{i}$'s are the eigenvalues of $\rho-\sigma$. \ This is a
distance metric, which generalizes the variation distance between probability
distributions, and which equals the maximum bias with which $\rho$\ can be
distinguished from $\sigma$\ by a single-shot measurement.

Given a $D$-dimensional mixed state $\rho$, one thing we can do is to apply a
\textit{two-outcome measurement}, and see whether it accepts or rejects $\rho
$. \ Such a measurement---technically called a \textquotedblleft Positive
Operator Valued Measure\textquotedblright\ or \textquotedblleft
POVM\textquotedblright---can always be described by a $D\times D$\ Hermitian
matrix\ $E$\ with all eigenvalues in $\left[  0,1\right]  $ (so in particular,
$E$ is positive semidefinite). \ The measurement $E$ \textit{accepts} $\rho
$\ with probability $\operatorname{Tr}\left(  E\rho\right)  $, and
\textit{rejects} $\rho$\ with probability $1-\operatorname{Tr}\left(
E\rho\right)  $.

The POVM formalism doesn't tell us what happens to $\rho$\ after the
measurement, and indeed the post-measurement state could in general depend on
how $E$ is implemented. \ However, we have the following extremely useful
fact, which was called the \textquotedblleft Gentle Measurement
Lemma\textquotedblright\ by Winter \cite{winter:gentle}.

\begin{lemma}
[Gentle Measurement Lemma \cite{winter:gentle}]\label{gentle}Let $\rho$\ be a
mixed state, and let $E$ be a two-outcome measurement such that
$\operatorname{Tr}\left(  E\rho\right)  \geq1-\varepsilon$. \ Then after we
apply $E$ to $\rho$, assuming $E$ accepts, we can recover a post-measurement
state $\widetilde{\rho}$\ such that $\left\Vert \widetilde{\rho}%
-\rho\right\Vert _{\operatorname*{tr}}\leq2\sqrt{\varepsilon}$.
\end{lemma}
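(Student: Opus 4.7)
The plan is to exhibit a concrete implementation of the POVM $\{E, I-E\}$ whose acceptance branch leaves a post-measurement state close to $\rho$. Since $0 \preceq E \preceq I$, the Hermitian square root $\sqrt{E}$ is well-defined, and $\{\sqrt{E}, \sqrt{I-E}\}$ is a valid Kraus representation of the measurement: on acceptance it produces the (unnormalized) state $\sqrt{E}\rho\sqrt{E}$, which we normalize to define $\widetilde{\rho}$. I would first prove the trace-distance bound for pure $\rho$ and then lift to mixed $\rho$ by purification.

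For a pure state $\rho = |\psi\rangle\langle\psi|$, the key observation is that $\sqrt{E} \succeq E$, which follows from the spectral theorem since $\sqrt{x} \geq x$ on $[0,1]$. This gives
\[
\langle\psi|\sqrt{E}|\psi\rangle \;\geq\; \langle\psi|E|\psi\rangle \;=\; \operatorname{Tr}(E\rho) \;\geq\; 1-\varepsilon,
\]
while $\|\sqrt{E}|\psi\rangle\|^{2} = \operatorname{Tr}(E\rho) \leq 1$. Dividing the first by the square root of the second gives a lower bound on the fidelity $|\langle\psi|\widetilde{\psi}\rangle|$ between $|\psi\rangle$ and the normalized post-measurement vector $|\widetilde{\psi}\rangle := \sqrt{E}|\psi\rangle/\|\sqrt{E}|\psi\rangle\|$. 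The pure-state identity $\|\,|\psi\rangle\langle\psi| - |\widetilde{\psi}\rangle\langle\widetilde{\psi}|\,\|_{\operatorname{tr}} = \sqrt{1 - |\langle\psi|\widetilde{\psi}\rangle|^{2}}$ then converts the fidelity bound into a trace-distance bound of at most $2\sqrt{\varepsilon}$ (in fact somewhat better, but loose estimates are plenty here).

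To extend to a general mixed $\rho$, I would purify it to a pure state $|\Psi\rangle$ on $\mathcal{H}\otimes\mathcal{H}'$ with $\rho = \operatorname{Tr}_{\mathcal{H}'}(|\Psi\rangle\langle\Psi|)$ and apply the extended measurement $E \otimes I_{\mathcal{H}'}$, whose acceptance probability is still $\operatorname{Tr}(E\rho) \geq 1-\varepsilon$. The pure-state case then produces a post-measurement purification $|\widetilde{\Psi}\rangle$ within trace distance $2\sqrt{\varepsilon}$ of $|\Psi\rangle$. Defining $\widetilde{\rho} := \operatorname{Tr}_{\mathcal{H}'}(|\widetilde{\Psi}\rangle\langle\widetilde{\Psi}|)$ and using that the partial trace is a CPTP map (hence contractive for trace distance) yields $\|\widetilde{\rho} - \rho\|_{\operatorname{tr}} \leq 2\sqrt{\varepsilon}$, as claimed.

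The main conceptual subtlety, rather than any computational obstacle, is that a POVM element $E$ does not by itself determine the post-measurement state: the lemma is really an existence claim about the implementation, and it is the square-root (equivalently, Naimark-dilation) Kraus operator that delivers gentleness. A more cavalier implementation---for instance, flipping a biased coin of bias $\operatorname{Tr}(E\rho)$ and then collapsing $\rho$ onto some auxiliary basis---can damage $\rho$ almost arbitrarily even when $E$ accepts with probability arbitrarily close to $1$.
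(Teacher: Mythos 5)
The paper states Lemma \ref{gentle} without proof, citing Winter \cite{winter:gentle}; it only remarks that a variant (the ``Almost As Good As New Lemma'') appears in \cite{aar:adv,aarbados}, so there is no in-paper argument to compare against. Your argument is correct and is the standard derivation: the Kraus operator $\sqrt{E}$ for the accept branch, the operator inequality $\sqrt{E}\succeq E$ (from $\sqrt{x}\geq x$ on $[0,1]$) to get $\langle\psi|\sqrt{E}|\psi\rangle\geq\operatorname{Tr}(E\rho)\geq 1-\varepsilon$ while $\left\Vert\sqrt{E}|\psi\rangle\right\Vert\leq 1$, the pure-state identity $\left\Vert\,|\psi\rangle\langle\psi|-|\widetilde\psi\rangle\langle\widetilde\psi|\,\right\Vert_{\operatorname{tr}}=\sqrt{1-|\langle\psi|\widetilde\psi\rangle|^2}\leq\sqrt{2\varepsilon-\varepsilon^2}\leq 2\sqrt{\varepsilon}$, and the lift to mixed $\rho$ by purifying, applying $\sqrt{E}\otimes I$, checking that the reduced post-measurement state is exactly $\sqrt{E}\rho\sqrt{E}/\operatorname{Tr}(E\rho)$, and invoking contractivity of trace distance under the partial trace. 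Your closing remark---that a POVM element $E$ alone does not fix the post-measurement state, so the lemma is an existence claim about implementations and the square-root Kraus operator is what delivers gentleness---is exactly the right framing and worth keeping.
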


As a historical note, Aaronson \cite{aar:adv,aarbados}\ proved a variant of
Lemma \ref{gentle},\ which he called the \textquotedblleft Almost As Good As
New Lemma\textquotedblright;\ the main difference is that Aaronson's version
doesn't involve conditioning on the case that $E$\ accepts.

We'll also need a stronger fact, which goes back at least to Ambainis et
al.\ \cite{antv}, and which Aaronson \cite{aar:qmaqpoly,aarbados}\ called the
\textquotedblleft Quantum Union Bound.\textquotedblright\ \ Here we state the
strongest version, due to Wilde \cite{wilde}, although a less strong version
(involving the bound $M\sqrt{\varepsilon}$\ rather than $\sqrt{M\varepsilon}$)
would also have worked fine for us.

\begin{lemma}
[Quantum Union Bound \cite{antv,aar:qmaqpoly,wilde}]\label{qub}Let $\rho$ be a
mixed state, and let $E_{1},\ldots,E_{M}$\ be two-outcome measurements such
that $\operatorname{Tr}\left(  E_{i}\rho\right)  \geq1-\varepsilon$\ for all
$i\in\left[  M\right]  $. \ Then if $E_{1},\ldots,E_{M}$ are applied to $\rho$
in succession, the probability that they all accept is at least $1-2M\sqrt
{\varepsilon}$, and conditioned on all of them accepting, we can recover a
post-measurement state $\widetilde{\rho}$\ such that%
\[
\left\Vert \widetilde{\rho}-\rho\right\Vert _{\operatorname*{tr}}=O\left(
\sqrt{M\varepsilon}\right)  .
\]

\end{lemma}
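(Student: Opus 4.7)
The plan is to reduce the sequential measurement to a single ``big'' measurement in a purified picture, and then apply the Gentle Measurement Lemma together with a non-commutative union bound. First I would purify $\rho = \operatorname{Tr}_R |\psi\rangle\langle\psi|_{AR}$, and for each $E_i$ introduce a fresh qubit ancilla $C_i$ in state $|0\rangle$, together with the Stinespring unitary $U_i|\phi\rangle_A|0\rangle_{C_i} = \sqrt{E_i}|\phi\rangle_A|1\rangle_{C_i} + \sqrt{I - E_i}|\phi\rangle_A|0\rangle_{C_i}$ and the accept projector $P_i = |1\rangle\langle 1|_{C_i}$; the hypothesis $\operatorname{Tr}(E_i\rho) \geq 1 - \varepsilon$ then translates to $\|(I - P_i) U_i|\psi\rangle|0\rangle_{C_i}\|^2 \leq \varepsilon$.

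The key observation is that $P_i$ and $U_j$ act on disjoint ancilla registers whenever $i \neq j$ and hence commute. Using this repeatedly, the unnormalized sequential post-measurement amplitude rearranges as $P_M U_M P_{M-1} U_{M-1} \cdots P_1 U_1 |\psi\rangle|0\cdots 0\rangle = \Pi\,|\psi'\rangle$, where $\Pi := P_1 P_2 \cdots P_M$ is a single projector (the product of pairwise-commuting projectors, i.e.\ the ``all-accept'' projector) and $|\psi'\rangle := U_M \cdots U_1|\psi\rangle|0\cdots 0\rangle$. Thus the sequential measurement problem is equivalent to analyzing a single measurement $\Pi$ on the pure state $|\psi'\rangle\langle\psi'|$ in the dilated Hilbert space.

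Next I would invoke Wilde's non-commutative union bound~\cite{wilde}: expanding $I - \Pi = \sum_{i=1}^M P_1 \cdots P_{i-1}(I - P_i)$, applying the Hilbert--Schmidt identity $\|A\|_2^2 - \|B\|_2^2 = 2\operatorname{Re}\operatorname{Tr}(B^\dagger(A - B)) + \|A - B\|_2^2$ with $A = \sqrt{\rho}$ and $B$ the sequential-accept operator acting on $\sqrt{\rho}$, and controlling the cross term by Cauchy--Schwarz, one obtains $\Pr[\text{all accept}] = \|\Pi|\psi'\rangle\|^2 \geq 1 - O(\sqrt{M\varepsilon})$, which is stronger than the stated $1 - 2M\sqrt{\varepsilon}$ bound. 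For the trace-distance bound, I would apply the Gentle Measurement Lemma (Lemma~\ref{gentle}) to the single projector $\Pi$ on $|\psi'\rangle\langle\psi'|$ in the dilated space and then trace out the ancillas $R, C_1, \ldots, C_M$, using contractivity of trace distance under partial trace; combined with a direct fidelity analysis of the reduced state $\widetilde{\rho} = \sqrt{E_M}\cdots\sqrt{E_1}\rho\sqrt{E_1}\cdots\sqrt{E_M}/\Pr[\text{all accept}]$, this yields the claimed $O(\sqrt{M\varepsilon})$ bound.

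The main obstacle is precisely the non-commutative union bound step. The naive triangle inequality on the telescoping decomposition gives only $\|(I - \Pi)|\psi'\rangle\| \leq M\sqrt{\varepsilon}$, which in turn produces $M\sqrt{\varepsilon}$ scaling in both the acceptance probability and (via Gentle Measurement) in the trace distance. Wilde's Cauchy--Schwarz step in the Hilbert--Schmidt inner product is what saves the extra factor of $\sqrt{M}$; I would defer to~\cite{wilde} for the detailed computation. As the remark preceding the lemma notes, the weaker $M\sqrt{\varepsilon}$ version would in any case suffice for all uses of Lemma~\ref{qub} in this paper.
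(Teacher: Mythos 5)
This lemma is not proved in the paper; it is imported directly from the literature (Wilde \cite{wilde}, with weaker predecessors in \cite{antv,aar:qmaqpoly}). So there is no internal proof to compare against, and I am evaluating your sketch on its own merits.

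The dilation-and-commutation step is clean and correct: since $P_i$ acts only on $C_i$ and $U_j$ only on $A\otimes C_j$, one indeed has $P_M U_M \cdots P_1 U_1 = (P_M\cdots P_1)(U_M\cdots U_1) = \Pi U$, so the unnormalized sequential post-measurement state is $\Pi U(\rho\otimes\lvert 0^M\rangle\langle 0^M\rvert)U^\dagger \Pi$ and the accept probability is $\lVert\Pi\lvert\psi'\rangle\rVert^2$. That much is genuinely nice: it packages the whole sequential measurement as a single commuting projector acting on a pure state.

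The gap is in the next step: to apply Gentle Measurement (or the orthogonality of the telescoping decomposition $I-\Pi=\sum_i P_1\cdots P_{i-1}(I-P_i)$) to $\Pi$ on $\lvert\psi'\rangle$, you would need $\operatorname{Tr}\bigl((I-P_i)\lvert\psi'\rangle\langle\psi'\rvert\bigr)\le\varepsilon$ for each $i$. But the hypothesis $\operatorname{Tr}(E_i\rho)\ge 1-\varepsilon$ only gives $\lVert(I-P_i)U_i\lvert\psi\rangle\lvert 0\rangle_{C_i}\rVert^2\le\varepsilon$ when $U_i$ acts on the \emph{original} state. In $\lvert\psi'\rangle=U_M\cdots U_1\lvert\psi,0^M\rangle$, the unitaries $U_1,\ldots,U_{i-1}$ have already acted on register $A$; after tracing out $R,C_1,\ldots,C_{i-1}$ the reduced state on $A$ seen by $U_i$ is the dephased state $\mathcal{E}_{i-1}\circ\cdots\circ\mathcal{E}_1(\rho)$ with $\mathcal{E}_j(\sigma)=\sqrt{E_j}\sigma\sqrt{E_j}+\sqrt{I-E_j}\sigma\sqrt{I-E_j}$, which is not $\rho$ and on which $E_i$ may accept with much lower probability (take $E_j=\lvert{+}\rangle\langle{+}\rvert$, $E_i=\lvert 0\rangle\langle 0\rvert$, $\rho=\lvert 0\rangle\langle 0\rvert$ to see the acceptance probability drop to $1/2$). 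So $\operatorname{Tr}(P_i\lvert\psi'\rangle\langle\psi'\rvert)\ge 1-\varepsilon$ simply does not follow. In other words, the difficulty that the Quantum Union Bound is designed to overcome --- that each measurement changes the state presented to the next one --- has not been eliminated by commuting the $P_i$'s to the left; it has only been hidden inside the definition of $\lvert\psi'\rangle$. Consequently Wilde's bound cannot be invoked as a black-box applied to the projectors $P_1,\ldots,P_M$ and the state $\lvert\psi'\rangle$: its hypothesis is about a fixed state $\sigma$ with $\operatorname{Tr}(P_i\sigma)\ge 1-\varepsilon_i$ for all $i$ simultaneously, and that is precisely what you do not have. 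The step that ``saves the extra factor of $\sqrt{M}$'' by Cauchy--Schwarz in Hilbert--Schmidt norm is not a routine plug-in here; it \emph{is} the proof, carried out directly on $B_i:=\sqrt{E_i}\cdots\sqrt{E_1}\sqrt{\rho}$, and your write-up conflates that direct telescoping argument with the (inapplicable) reduction to a single projector. To make this proposal into a proof, you would need to drop the claim that the dilation trivializes the problem and instead reproduce Wilde's estimate on the $B_i$'s, or else justify a bound on $\operatorname{Tr}((I-P_i)\lvert\psi'\rangle\langle\psi'\rvert)$, which itself requires an inductive version of the very lemma being proved.
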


\section{Gentle Search Procedure\label{SEARCH}}

We now develop a procedure that takes as input descriptions of two-outcome
measurements $E_{1},\ldots,E_{M}$, as well as $\operatorname*{polylog}%
M$\ copies of an unknown state $\rho$, and that searches for a measurement
$E_{i}$\ that accepts $\rho$\ with high probability. \ In Section
\ref{MAINSEC}, we'll then use this procedure as a key subroutine for solving
the shadow tomography problem.

Our starting point is a recent result of Harrow, Lin, and Montanaro
\cite{hlm}\ (their Corollary 11), which we state below for convenience.

\begin{theorem}
[Harrow, Lin, and Montanaro \cite{hlm}]\label{hlmthm}Let $\rho$\ be an unknown
mixed state,\ and let $E_{1},\ldots,E_{M}$\ be known two-outcome measurements.
\ Suppose we're promised that either

\begin{enumerate}
\item[(i)] there exists an $i\in\left[  M\right]  $\ such that
$\operatorname{Tr}\left(  E_{i}\rho\right)  \geq1-\epsilon$, or else

\item[(ii)] $\operatorname{Tr}\left(  E_{1}\rho\right)  +\cdots
+\operatorname{Tr}\left(  E_{M}\rho\right)  \leq\Delta M$.
\end{enumerate}

There is a test that uses one copy of $\rho$, and that accepts with
probability at least $\left(  1-\epsilon\right)  ^{2}/7$\ in case (i)\ and
with probability at most $4\Delta M$\ in case (ii).
\end{theorem}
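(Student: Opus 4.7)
The plan is to build a single-copy test by applying all $M$ measurements to the given copy of $\rho$ in a uniformly random order and accepting as soon as one of them accepts. To prevent intermediate measurements from silently destroying the ``good'' state component in case (i), I would first replace each $E_i$ by a coherent, nearly projective amplification $\widetilde{E}_i$ in the style of Marriott--Watrous; equivalently, one can apply the $E_i$'s coherently conditioned on an ancillary control qubit prepared in $|+\rangle$ and additionally accept if the control qubit is later measured to be $|-\rangle$.

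The rejection analysis (case (ii)) is essentially linearity of expectation. Let $p_j$ denote the conditional probability that the $j$-th applied measurement accepts given that the first $j-1$ rejected. By the Gentle Measurement Lemma, a rejected measurement with small acceptance probability $\eta$ perturbs the state by at most $2\sqrt{\eta}$ in trace distance, so the state just before step $j$ remains close to $\rho$ throughout case (ii). Averaging $\sum_j p_j$ over the random permutation and using $\sum_i \operatorname{Tr}(E_i\rho) \leq \Delta M$, a careful accounting of cumulative damage gives $\Pr[\text{accept}] \leq 4\Delta M$.

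The acceptance analysis (case (i)) is where the real work lies. Fix $i^*$ with $\operatorname{Tr}(E_{i^*}\rho) \geq 1-\epsilon$ and consider the step at which $E_{i^*}$ is applied. The danger is that some earlier $E_j$ with $\operatorname{Tr}(E_j\rho) \approx 1/2$ could damage $\rho$ by a constant in trace distance, wiping out the $(1-\epsilon)$-overlap that $E_{i^*}$ needed. The idea is a win--win at each step: either the amplified $\widetilde{E}_j$ is near-deterministic, in which case by gentle measurement it barely disturbs the state; or $\widetilde{E}_j$ accepts with non-negligible probability, in which case the test has already output ACCEPT. Summed up properly, this shows that at least one of the events ``some earlier measurement accepts,'' ``$E_{i^*}$ accepts,'' or (in the control-qubit variant) ``the control qubit has decohered to $|-\rangle$'' occurs with probability $\Omega((1-\epsilon)^2)$, yielding the claimed $(1-\epsilon)^2/7$ bound.

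The main obstacle is making this win--win argument uniform over the random permutation: one must fix the ``near-deterministic'' threshold carefully, bound the cumulative trace-distance drift across the steps where the amplification is near-deterministic, and check that on the complementary event the test has already accepted. This is exactly the subtlety that the original Aaronson OR Bound proof handled incorrectly and that Harrow--Lin--Montanaro resolved via amplification or the control-qubit ``three-outcome'' gadget; I would expect my attempt to rediscover one of these two fixes.
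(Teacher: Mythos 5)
This theorem is the one result in the paper that is imported without proof: it is Corollary 11 of Harrow, Lin, and Montanaro, and the paper offers only an informal gloss of their two proof routes (in-place Marriott--Watrous amplification, and the control-qubit decoherence gadget) in the paragraph following the statement. Your sketch correctly names both routes, but the heart of your case~(i) analysis---the per-step ``win--win''---does not implement either one, and if pursued literally it fails quantitatively. The per-step dichotomy you propose is: at step $j$, either $\operatorname{Tr}\left(\widetilde{E}_j\rho_j\right)$ is below some threshold $\eta$ (so rejection perturbs the state by $O\left(\sqrt{\eta}\right)$), or else the test accepts with probability at least $\eta$. To keep the cumulative drift across up to $M-1$ gentle steps below, say, $\left(1-\epsilon\right)/2$, you must take $\eta = O\left(\left(1-\epsilon\right)^2/M^2\right)$; but then the acceptance probability at a non-gentle step is only $O\left(1/M^2\right)$, which gives no constant lower bound. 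No choice of $\eta$ balances the two sides, and amplifying each $E_j$ to near-projectivity does not help: near-projectivity is a property of the operator, whereas the dangerous situation $\operatorname{Tr}\left(\widetilde{E}_j\rho_j\right)\approx 1/2$ concerns the current, possibly drifted, state $\rho_j$. This is essentially the defect in Aaronson's original random-order argument, which the paper explicitly flags as still unresolved.

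The HLM fix is global rather than per-step: you never need to bound the drift, only to detect it. In the control-qubit gadget, the $\left\vert 1\right\rangle$-branch of the ancilla sees the measurements while the $\left\vert 0\right\rangle$-branch does not; if the measured branch drifts far from $\rho$, the control qubit loses coherence, and a Hadamard-basis measurement of it returns $\left\vert -\right\rangle$ with constant probability, an event the test counts as ``accept.'' So the correct dichotomy is applied at the end, not at each step: either the main register stayed close to $\rho$ (so $E_{i^*}$ succeeds when applied), or it drifted (so the ancilla has decohered and the gadget accepts for that reason), with no per-step accounting of drift against acceptance. Your closing sentence does gesture at the right event (``the control qubit has decohered to $\left\vert -\right\rangle$''), but the step-by-step trade-off you give as the mechanism for reaching it is precisely the argument that does not close.
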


Aaronson \cite{aar:qmaqpoly} had previously claimed a version of Theorem
\ref{hlmthm}, which he called the Quantum OR Bound. \ However, Aaronson's
proof had a mistake, which Harrow et al.\ \cite{hlm} both identified and fixed.

Briefly, Harrow et al.\ \cite{hlm} give two ways to prove Theorem
\ref{hlmthm}. \ The first way is by adapting the in-place amplification
procedure of Marriott and Watrous \cite{mw}. \ The second way is by preparing
a control qubit in the state $\frac{\left\vert 0\right\rangle +\left\vert
1\right\rangle }{\sqrt{2}}$, and then repeatedly applying $E_{i}$'s to $\rho
$\ conditional on the control qubit being $\left\vert 1\right\rangle $, while
also periodically measuring the control qubit in the $\left\{  \frac
{\left\vert 0\right\rangle +\left\vert 1\right\rangle }{\sqrt{2}}%
,\frac{\left\vert 0\right\rangle -\left\vert 1\right\rangle }{\sqrt{2}%
}\right\}  $\ basis to see whether applying the $E_{i}$'s has decohered the
control qubit. \ Harrow et al.\ show that, in case (i), \textit{either} some
$E_{i}$\ is likely to accept or else the control qubit is likely to be
decohered. \ In case (ii), on the other hand, one can upper-bound the
probability that either of these events happen using the Quantum Union Bound
(Lemma \ref{qub}).

Both of Harrow et al.'s procedures perform measurements on $\rho$ that involve
an ancilla register, and that are somewhat more complicated than the $E_{i}$'s
themselves. \ By contrast, the original procedure of Aaronson
\cite{aar:qmaqpoly} just applied the $E_{i}$'s in a random order. \ It remains
an open question whether the simpler procedure is sound.

In any case, by combining Theorem \ref{hlmthm} with a small amount of
amplification, we can obtain a variant of Theorem \ref{hlmthm} that's more
directly useful for us, and which we'll call \textquotedblleft
the\textquotedblright\ Quantum OR Bound in this paper.

\begin{lemma}
[Quantum OR Bound]\label{orbound}Let $\rho$\ be an unknown mixed state,\ and
let $E_{1},\ldots,E_{M}$\ be known two-outcome measurements. \ Suppose we're
promised that either

\begin{enumerate}
\item[(i)] there exists an $i\in\left[  M\right]  $\ such that
$\operatorname{Tr}\left(  E_{i}\rho\right)  \geq c$, or else

\item[(ii)] $\operatorname{Tr}\left(  E_{i}\rho\right)  \leq c-\varepsilon$
for all $i\in\left[  M\right]  $.
\end{enumerate}

We can distinguish these cases, with success probability at least $1-\delta
$,\ given $\rho^{\otimes k}$\ where $k=O\left(  \frac{\log1/\delta
}{\varepsilon^{2}}\log M\right)  $.
\end{lemma}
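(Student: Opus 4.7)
The plan is to reduce Lemma \ref{orbound} to Theorem \ref{hlmthm} by a two-stage amplification: first we boost each individual $E_i$ via sampling and thresholding so that the $\varepsilon$-gap in acceptance probabilities becomes a gap between ``close to $1$'' and ``exponentially small,'' and then we boost the HLM test itself by majority vote.

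\textbf{Step 1: Per-measurement amplification.} For each $i \in [M]$, I would define an amplified two-outcome measurement $E_i'$ acting on $\rho^{\otimes t}$ (for a parameter $t$ to be fixed) as follows: apply $E_i$ independently to each of the $t$ copies, count the number of accepts, and accept if and only if this count exceeds $(c - \varepsilon/2)\,t$. By a standard Chernoff bound, if $\operatorname{Tr}(E_i\rho) \geq c$ then $\operatorname{Tr}(E_i'\rho^{\otimes t}) \geq 1 - e^{-\Omega(\varepsilon^2 t)}$, while if $\operatorname{Tr}(E_i\rho) \leq c - \varepsilon$ then $\operatorname{Tr}(E_i'\rho^{\otimes t}) \leq e^{-\Omega(\varepsilon^2 t)}$. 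Choosing $t = O(\varepsilon^{-2}\log M)$ makes both of these quantities at most, say, $1/(400M)$.

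\textbf{Step 2: Single invocation of Theorem \ref{hlmthm}.} I now apply Theorem \ref{hlmthm} to the state $\rho^{\otimes t}$ with the amplified measurements $E_1',\ldots,E_M'$, using parameters $\epsilon := 1/(400M)$ and $\Delta := 1/(400M)$. In case (i) of Lemma \ref{orbound}, some $E_i'$ accepts $\rho^{\otimes t}$ with probability at least $1-\epsilon$, so the HLM test accepts with probability at least $(1-\epsilon)^2/7 \geq 1/8$. In case (ii), every $\operatorname{Tr}(E_i'\rho^{\otimes t})$ is at most $\Delta$, so $\operatorname{Tr}(E_1'\rho^{\otimes t}) + \cdots + \operatorname{Tr}(E_M'\rho^{\otimes t}) \leq \Delta M$, and the HLM test accepts with probability at most $4\Delta M \leq 1/100$. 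Thus a single invocation of HLM consumes $t = O(\varepsilon^{-2}\log M)$ copies of $\rho$ and distinguishes the two cases with constant advantage.

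\textbf{Step 3: Confidence amplification.} Finally, I repeat this whole procedure $r = O(\log 1/\delta)$ times on fresh copies of $\rho^{\otimes t}$ and output the majority answer. A Chernoff bound over the $r$ trials drives the error probability below $\delta$. The total number of copies of $\rho$ used is
\[
k = r \cdot t = O\!\left(\frac{\log 1/\delta}{\varepsilon^2}\log M\right),
\]
as required.

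\textbf{Anticipated obstacle.} The proof is essentially routine once the right reduction is set up; the only delicate point is choosing the threshold inside $E_i'$ and the parameters $\epsilon,\Delta$ in Theorem \ref{hlmthm} consistently, so that in case (ii) the sum $\sum_i \operatorname{Tr}(E_i'\rho^{\otimes t})$ (not just the maximum) is small enough that $4\Delta M$ remains bounded away from the $1/8$ acceptance probability in case (i). This is what forces the extra $\log M$ factor inside $t$, and is the only place where the $M$-dependence enters.
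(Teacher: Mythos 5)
Your proposal is correct and takes essentially the same route as the paper: amplify each $E_i$ by applying it to $t = O(\varepsilon^{-2}\log M)$ fresh registers and thresholding at $(c-\varepsilon/2)t$, so that a Chernoff bound separates case (i) ($\operatorname{Tr}(E_i'\rho^{\otimes t}) \geq 1 - 1/\mathrm{poly}(M)$ for some $i$) from case (ii) ($\operatorname{Tr}(E_i'\rho^{\otimes t}) \leq 1/\mathrm{poly}(M)$ for all $i$); then run the Harrow--Lin--Montanaro test once on $\rho^{\otimes t}$ to get a constant-advantage distinguisher; then repeat $O(\log 1/\delta)$ times on fresh copies. The paper uses the target $1/M^2$ and assumes $M \geq 50$ to make $4/M < 1/8$, whereas your choice of $1/(400M)$ works for all $M$; that is an immaterial difference. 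The one slip is in Step 3: since the two acceptance probabilities are $\geq 1/8$ versus $\leq 1/100$, neither of which exceeds $1/2$, outputting the \emph{majority} answer over the $r$ trials is wrong --- in case (i) a majority of trials could reject. You need to threshold the empirical acceptance fraction at some value strictly between $1/100$ and $1/8$ (say $1/16$) and apply Chernoff to that; this is exactly what the paper's phrase ``rounds of further amplification'' is implicitly doing, and it does not change the stated bound.
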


\begin{proof}
This essentially follows by combining Theorem \ref{hlmthm}\ with the Chernoff
bound. \ Assume without loss of generality that $M\geq50$, and let
$\ell=C\frac{\log M}{\varepsilon^{2}}$ for some sufficiently large constant
$C$. \ Also, let $E_{i}^{\ast}$\ be an amplified measurement that applies
$E_{i}$\ to each of $\ell$ registers, and that accepts if and only if the
number of accepting invocations is at least $\left(  c-\frac{\varepsilon}%
{2}\right)  \ell$. \ Then in case (i) we have%
\[
\operatorname{Tr}\left(  E_{i}^{\ast}\rho^{\otimes\ell}\right)  \geq1-\frac
{1}{M^{2}}%
\]
for some $i\in\left[  M\right]  $, while in case (ii) we have%
\[
\operatorname{Tr}\left(  E_{i}^{\ast}\rho^{\otimes\ell}\right)  \leq\frac
{1}{M^{2}}%
\]
for all $i$. \ So if we apply the procedure of Theorem \ref{hlmthm} to
$\rho^{\otimes\ell}$, then it accepts with probability at least (say)
$\frac{1}{8}$\ in case (i), or with probability at most $\frac{4}{M}$\ in case (ii).

We now just need $O\left(  \log1/\delta\right)  $\ rounds of further
amplification---involving a fresh copy of $\rho^{\otimes\ell}$\ in each
round---to push these acceptance probabilities to $1-\delta$\ or $\delta$\ respectively.
\end{proof}

Note that the procedure of Lemma \ref{orbound} requires performing collective
measurements on $O\left(  \frac{\log M}{\varepsilon^{2}}\right)  $\ copies of
$\rho$. \ On the positive side, though, the number of copies has no dependence
whatsoever on the Hilbert space dimension $D$.

Building on Lemma \ref{orbound}, we next want to give a \textit{search}
procedure: that is, a procedure that actually \textit{finds} a measurement
$E_{i}$ in our list that accepts $\rho$\ with high probability (if there is
one), rather than merely telling us whether such an $E_{i}$ exists. \ To do
this, we'll use the classic trick in computer science for reducing search
problems to decision problems:\ namely, binary search over the list
$E_{1},\ldots,E_{M}$.

The subtlety is that, as we run binary search, our lower bound on the
acceptance probability of the measurement $E_{i}$\ that we're isolating
degrades at each level of the recursion, while the error probability builds
up. \ Also, we need fresh copies of $\rho$\ at each level of the recursion.
\ Handling these issues will yield a procedure that uses roughly $\frac
{\log^{4}M}{\varepsilon^{2}}$\ copies of $\rho$,\ which we suspect is not tight.

\begin{lemma}
[Gentle Search]\label{searchlem}Let $\rho$\ be an unknown mixed state,\ and
let $E_{1},\ldots,E_{M}$\ be known two-outcome measurements. \ Suppose there
exists an $i\in\left[  M\right]  $\ such that $\operatorname{Tr}\left(
E_{i}\rho\right)  \geq c$. \ Then we can find a $j\in\left[  M\right]  $\ such
that $\operatorname{Tr}\left(  E_{j}\rho\right)  \geq c-\varepsilon$, with
success probability at least $1-\delta$, given $\rho^{\otimes k}$\ where%
\[
k=O\left(  \frac{\log^{4}M}{\varepsilon^{2}}\left(  \log\log M+\log\frac
{1}{\delta}\right)  \right)  .
\]

\end{lemma}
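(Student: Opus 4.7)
The plan is to reduce the search problem to $\Theta(\log M)$ invocations of the decision procedure from Lemma \ref{orbound}, by ordinary binary search over the list $E_1,\ldots,E_M$. Set $\alpha := \varepsilon/\lceil \log_2 M\rceil$ and $\delta' := \delta/\lceil\log_2 M\rceil$. At recursion depth $t$ (with $t=0,\ldots,\lceil\log_2 M\rceil -1$), we will maintain a candidate subset $S_t \subseteq [M]$ of size $\le M/2^t$ together with the invariant that some $i \in S_t$ satisfies $\operatorname{Tr}(E_i \rho) \ge c - t\alpha$. At the final level a unique index $j$ remains, and the invariant then says $\operatorname{Tr}(E_j\rho)\ge c-\lceil\log_2 M\rceil\cdot\alpha = c-\varepsilon$, as required.

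At each level I split $S_t = S_L \sqcup S_R$ into two roughly equal halves and apply Lemma \ref{orbound}, with fresh copies of $\rho$, to $S_L$ with threshold $c - t\alpha$, gap parameter $\alpha$, and failure probability $\delta'$. This distinguishes: (i) some $i\in S_L$ has $\operatorname{Tr}(E_i\rho) \ge c-t\alpha$, from (ii) every $i\in S_L$ has $\operatorname{Tr}(E_i\rho)\le c-(t+1)\alpha$. I recurse on $S_L$ if the OR test accepts and on $S_R$ otherwise. The number of copies used at this level is $O\bigl(\tfrac{\log(1/\delta')}{\alpha^2}\log M\bigr) = O\bigl(\tfrac{\log^3 M\,(\log\log M + \log(1/\delta))}{\varepsilon^2}\bigr)$, and summing over $\lceil \log_2 M\rceil$ levels gives the claimed bound.

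The subtle point — the step I expect to be the main obstacle — is verifying that the invariant is preserved at each level even though $S_L$ may fall into the \emph{gray zone} where neither case (i) nor (ii) holds, so that Lemma \ref{orbound} gives no guarantee on its output. I handle this by case analysis. If $S_L$ contains an index with acceptance probability $\ge c-t\alpha$, then with probability $\ge 1-\delta'$ the OR test accepts and recursing on $S_L$ preserves the invariant. If every $i\in S_L$ satisfies $\operatorname{Tr}(E_i\rho)\le c-(t+1)\alpha$, then with probability $\ge 1-\delta'$ the OR test rejects and $S_R$ still carries the invariant from $S_t$. In the remaining gray-zone case, where $\max_{i\in S_L}\operatorname{Tr}(E_i\rho)\in(c-(t+1)\alpha,\,c-t\alpha)$, either answer is safe: if the test accepts, $S_L$ itself contains a measurement with probability $> c-(t+1)\alpha$; if the test rejects, we fall back on $S_R$, which by the invariant on $S_t$ must contain a measurement with probability $\ge c-t\alpha$.

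A union bound over the $\lceil\log_2 M\rceil$ levels controls the total error: the invariant survives all levels except with probability at most $\lceil\log_2 M\rceil\cdot \delta' = \delta$. Two accounting details need a brief check. First, fresh copies of $\rho$ must be drawn at each level of the recursion, since Lemma \ref{orbound} performs a collective (and generally destructive) measurement; this is why the $\log M$ factor multiplies the per-level cost rather than being amortized. Second, the $\log M$ inside Lemma \ref{orbound}'s bound can be taken with respect to $|S_t|\le M$ uniformly across levels, so no refined accounting of the shrinking set sizes is needed.
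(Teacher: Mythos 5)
Your proof is correct and follows essentially the same approach as the paper: binary search over the $E_i$'s, degrading the threshold by $\alpha=\varepsilon/\log_2 M$ and budgeting failure probability $\delta/\log_2 M$ at each of the $\log_2 M$ levels, with fresh copies of $\rho$ per call to Lemma~\ref{orbound}. Your explicit gray-zone case analysis spells out in more detail what the paper dispatches in a single remark (``if the subroutine returns (i), then we can assume only that there exists an $E\in S_1$ such that $\operatorname{Tr}(E\rho)\geq c-\alpha$''), but the underlying argument and the resulting bound are identical.
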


\begin{proof}
Assume without loss of generality that $M$ is a power of $2$. \ We will apply
Lemma \ref{orbound} recursively, using binary search to zero in on a $j$\ such
that $\operatorname{Tr}\left(  E_{j}\rho\right)  \geq c-\varepsilon$.

Divide the measurements into two sets, $S_{1}=\left\{  E_{1},\ldots
,E_{M/2}\right\}  $\ and $S_{2}=\left\{  E_{M/2+1},\ldots,E_{M}\right\}  $.
\ Also, let $\alpha:=\frac{\varepsilon}{\log_{2}M}$, and let $\beta
:=\frac{\delta}{\log_{2}M}$. \ Then as a first step, we call the subroutine
from Lemma \ref{orbound}\ to check, with success probability at least
$1-\beta$, whether

\begin{enumerate}
\item[(i)] there exists an\ $E\in S_{1}$\ such that $\operatorname{Tr}\left(
E\rho\right)  \geq c$ or

\item[(ii)] $\operatorname{Tr}\left(  E\rho\right)  \leq c-\alpha$\ for all
$E\in S_{1}$,
\end{enumerate}

\noindent promised that one of these is the case.

Note that the promise could be violated---but this simply means that, if the
subroutine returns (i), then we can assume only that there exists an $E\in
S_{1}$\ such that $\operatorname{Tr}\left(  E\rho\right)  \geq c-\alpha$.

Thus, if the subroutine returns (i), then we recurse on $S_{1}$. \ That is, we
divide $S_{1}$ into two sets both of size $\frac{M}{4}$, and then use Lemma
\ref{orbound} to find (again with success probability at least $1-\beta$) a
set that contains an $E$\ such that $\operatorname{Tr}\left(  E\rho\right)
\geq c-2\alpha$, assuming now that one of the two sets contains an $E$ such
that $\operatorname{Tr}\left(  E\rho\right)  \geq c-\alpha$. \ If the
subroutine returns (ii), then we do the same but with $S_{2}$.

We continue recursing in this way, identifying a set of size $\frac{M}{8}%
$\ that contains an $E$ such that $\operatorname{Tr}\left(  E\rho\right)  \geq
c-3\alpha$, then a set of size $\frac{M}{16}$\ that contains an $E$ such that
$\operatorname{Tr}\left(  E\rho\right)  \geq c-4\alpha$,\ and so on, until we
reach a singleton set. \ This gives us our index $j$ such that%
\begin{align*}
\operatorname{Tr}\left(  E_{j}\rho\right)   &  \geq c-\alpha\log_{2}M\\
&  =c-\varepsilon.
\end{align*}

By the union bound, together with the promise that there exists an
$i\in\left[  M\right]  $\ such that $\operatorname{Tr}\left(  E_{i}%
\rho\right)  \geq c$, the whole procedure succeeds with probability at least
$1-\beta\log_{2}M=1-\delta$. \ Meanwhile, within each of the $\log_{2}%
M$\ iterations of this procedure, Lemma \ref{orbound}\ tells us that the
number of copies of $\rho$\ that we need is%
\begin{align*}
O\left(  \frac{\log1/\beta}{\alpha^{2}}\log M\right)   &  =O\left(  \frac
{\log\frac{\log M}{\delta}}{\left(  \varepsilon/\log M\right)  ^{2}}\log
M\right) \\
&  =O\left(  \frac{\log^{3}M}{\varepsilon^{2}}\left(  \log\log M+\log\frac
{1}{\delta}\right)  \right)  .
\end{align*}
Therefore the total number of copies needed is%
\[
O\left(  \frac{\log^{4}M}{\varepsilon^{2}}\left(  \log\log M+\log\frac
{1}{\delta}\right)  \right)  .
\]

\end{proof}

\section{Main Result\label{MAINSEC}}

We're now ready to prove Theorem \ref{main}, which we restate for convenience.
\ Given an unknown $D$-dimensional mixed state $\rho$, and known two-outcome
measurements\ $E_{1},\ldots,E_{M}$, we'll show how to approximate
$\operatorname{Tr}\left(  E_{i}\rho\right)  $\ to within additive error
$\pm\varepsilon$, with success probability at least $1-\delta$, given
$\rho^{\otimes k}$\ where%
\[
k=\widetilde{O}\left(  \frac{\log1/\delta}{\varepsilon^{5}}\log^{4}M\cdot\log
D\right)  .
\]
Afterwards we'll remark on how to improve the $\varepsilon^{-5}$\ to
$\varepsilon^{-4}$, using a recent algorithm of Aaronson et al. \cite{achkn}
for online learning of quantum states.

\begin{proof}
[Proof of Theorem \ref{main}]At a high level, we'll use an iterative procedure
similar to the multiplicative weights update method: one that, at the $t^{th}%
$\ iteration, maintains a current hypothesis $\rho_{t}$\ about $\rho$. \ Since
we're not concerned here with computation time, we can assume that the entire
$D\times D$ density matrix of $\rho_{t}$\ is stored to suitable precision in a
classical memory, so that we can perform updates (in particular, involving
postselection) that wouldn't be possible were $\rho_{t}$\ an actual physical state.

Our initial hypothesis is that $\rho$\ is just the maximally mixed state,
$\rho_{0}=\frac{I}{D}$. \ Of course this hypothesis is unlikely to give
adequate predictions---but by using Lemma \ref{searchlem} as a subroutine,
we'll repeatedly refine the current hypothesis, $\rho_{t}$, to a
\textquotedblleft better\textquotedblright\ hypothesis $\rho_{t+1}$. \ The
procedure will terminate when we reach a hypothesis $\rho_{T}$\ such that%
\[
\left\vert \operatorname{Tr}\left(  E_{i}\rho_{T}\right)  -\operatorname{Tr}%
\left(  E_{i}\rho\right)  \right\vert \leq\varepsilon
\]
for all $i\in\left[  M\right]  $. \ For at that point, for each $i$, we can
just output $\operatorname{Tr}\left(  E_{i}\rho_{T}\right)  $\ as our additive
estimate for $\operatorname{Tr}\left(  E_{i}\rho\right)  $. \ 

At each iteration $t$, we'll use \textquotedblleft fresh\textquotedblright%
\ copies of $\rho$, in the course of refining $\rho_{t}$ to $\rho_{t+1}$.
\ Thus, we'll need to upper-bound both the total number $T$\ of iterations
until termination, \textit{and} the number of copies of $\rho$\ used in a
given iteration.

A key technical ingredient will be amplification. \ Let%
\[
q:=\frac{C}{\varepsilon^{2}}\left(  \log\log D+\log\frac{1}{\varepsilon
}\right)
\]
for some suitable constant $C$, and let $\rho^{\ast}:=\rho^{\otimes q}$.
\ Then, strictly speaking, our procedure will maintain a hypothesis $\rho
_{t}^{\ast}$\ about $\rho^{\ast}$: the initial hypothesis is the maximally
mixed state $\rho_{0}^{\ast}=\frac{I}{D^{q}}$; then we'll refine the
hypothesis to $\rho_{1}^{\ast}$, $\rho_{2}^{\ast}$, and so on. \ At any point,
we let $\rho_{t}$\ be the $D$-dimensional state obtained by choosing a
register of $\rho_{t}^{\ast}$\ uniformly at random, and tracing out the
remaining $q-1$\ registers.

Given the hypothesis $\rho_{t}$, for each $i\in\left[  M\right]  $, let
$E_{i,t,+}^{\ast}$\ be a two-outcome measurement on $\rho^{\otimes q}$\ that
applies $E_{i}$\ to each of the $q$ registers, and that accepts if and only if
the number of accepting invocations is at least $\left(  \operatorname{Tr}%
\left(  E_{i}\rho_{t}\right)  +\frac{3\varepsilon}{4}\right)  q$. \ Likewise,
let the measurement $E_{i,t,-}^{\ast}$\ apply $E_{i}$\ to each of the $q$
registers, and accept if and only if the number of accepting invocations is at
most $\left(  \operatorname{Tr}\left(  E_{i}\rho_{t}\right)  -\frac
{3\varepsilon}{4}\right)  q$.

Suppose $\operatorname{Tr}\left(  E_{i}\rho\right)  \geq\operatorname{Tr}%
\left(  E_{i}\rho_{t}\right)  +\varepsilon$. \ Then by a Chernoff bound, we
certainly have $\operatorname{Tr}\left(  E_{i,t,+}^{\ast}\rho^{\ast}\right)
\geq\frac{5}{6}$, provided the constant $C$ was sufficiently large: indeed,
for this we need only that $q$\ grows at least like $\frac{C}{\varepsilon^{2}%
}$. \ Likewise, if $\operatorname{Tr}\left(  E_{i}\rho\right)  \leq
\operatorname{Tr}\left(  E_{i}\rho_{t}\right)  -\varepsilon$, then
$\operatorname{Tr}\left(  E_{i,t,-}^{\ast}\rho^{\ast}\right)  \geq\frac{5}{6}$.

On the other hand, suppose $\left\vert \operatorname{Tr}\left(  E_{i}%
\rho\right)  -\operatorname{Tr}\left(  E_{i}\rho_{t}\right)  \right\vert
\leq\frac{\varepsilon}{2}$. \ Then again by a Chernoff bound, we have
$\operatorname{Tr}\left(  E_{i,t,+}^{\ast}\rho^{\ast}\right)  \leq\frac{1}{3}%
$\ and $\operatorname{Tr}\left(  E_{i,t,-}^{\ast}\rho^{\ast}\right)  \leq
\frac{1}{3}$, provided the constant $C$ is sufficiently large.

We can now give the procedure to update the hypothesis $\rho_{t}^{\ast}$\ to
$\rho_{t+1}^{\ast}$. \ Let $\beta:=\frac{\delta\varepsilon^{4}}{\log^{2}D}$.
\ Then at each iteration $t\geq0$, we do the following:

\begin{itemize}
\item Use Lemma \ref{searchlem} to search for an index $j\in\left[  M\right]
$\ such that $\operatorname{Tr}\left(  E_{j,t,+}^{\ast}\rho^{\ast}\right)
\geq\frac{2}{3}$, promised that there exists such a $j$ with
$\operatorname{Tr}\left(  E_{j,t,+}^{\ast}\rho^{\ast}\right)  \geq\frac{5}{6}%
$\ (which we call the $+$\ case); or for a $j\in\left[  M\right]  $\ such that
$\operatorname{Tr}\left(  E_{j,t,-}^{\ast}\rho^{\ast}\right)  \geq\frac{2}{3}%
$, promised that there exists a $j$\ such that $\operatorname{Tr}\left(
E_{j,t,-}^{\ast}\rho^{\ast}\right)  \geq\frac{5}{6}$ (which we call the
$-$\ case). \ Set the parameters so that, assuming that one or both promises
hold, the search succeeds with probability at least $1-\beta$.

\item If no $j$ is found that satisfies either condition, then halt and return
$\rho_{t}$\ as the hypothesis state: in other words, return $\operatorname{Tr}%
\left(  E_{i}\rho_{t}\right)  $\ as the estimate for $\operatorname{Tr}\left(
E_{i}\rho\right)  $, for all $i\in\left[  M\right]  $.

\item Otherwise, if a $j$ \textit{is} found satisfying one of the conditions,
then let $F_{t}$\ be a measurement that applies $E_{j}$\ to each of the $q$
registers, and that accepts if and only if the number of accepting invocations
is at least $\left(  \operatorname{Tr}\left(  E_{j}\rho_{t}\right)
+\frac{\varepsilon}{4}\right)  q$ (in the $+$ case), or at most $\left(
\operatorname{Tr}\left(  E_{j}\rho_{t}\right)  -\frac{\varepsilon}{4}\right)
q$ (in the $-$ case).

\item Let $\rho_{t+1}^{\ast}$\ be the state obtained by starting from
$\rho_{t}^{\ast}$, and then postselecting on $F_{t}$\ accepting.
\end{itemize}

Our central task is to prove an upper bound, $T$, on the number of iterations
of the above procedure until it terminates with states $\rho_{T}^{\ast}$\ and
$\rho_{T}$\ such that $\left\vert \operatorname{Tr}\left(  E_{i}\rho
_{T}\right)  -\operatorname{Tr}\left(  E_{i}\rho\right)  \right\vert
\leq\varepsilon$\ for all $i\in\left[  M\right]  $.

Assume, in what follows, that every invocation of Lemma \ref{searchlem}
succeeds in finding a $j$ such that $\operatorname{Tr}\left(  E_{j,t,+}^{\ast
}\rho^{\ast}\right)  \geq\frac{2}{3}$\ or $\operatorname{Tr}\left(
E_{j,t,-}^{\ast}\rho^{\ast}\right)  \geq\frac{2}{3}$. \ Later we will
lower-bound the probability that this indeed happens.

To upper-bound $T$,\ let%
\[
p_{t}=\operatorname{Tr}\left(  F_{0}\rho_{0}^{\ast}\right)  \cdot\cdots
\cdot\operatorname{Tr}\left(  F_{t-1}\rho_{t-1}^{\ast}\right)
\]
be the probability that the first $t$ postselection steps all succeed.

Then, on the one hand, we claim that $p_{t+1}\leq\left(  1-\Omega\left(
\varepsilon\right)  \right)  p_{t}$ for all $t$. \ To see this, note that when
we run $F_{t}$ on the state $\rho_{t}^{\ast}$, the expected number of
invocations of $E_{j}$\ that accept is exactly $\operatorname{Tr}\left(
E_{j}\rho_{t}\right)  q$. \ We can't treat these invocations as independent
events, because the state $\rho_{t}^{\ast}$\ could be correlated or entangled
across its $q$\ registers in some unknown way. \ Regardless of correlations,
though, by Markov's inequality, in the $+$\ case we have%
\[
\frac{p_{t+1}}{p_{t}}=\operatorname{Tr}\left(  F_{t}\rho_{t}^{\ast}\right)
\leq\frac{\operatorname{Tr}\left(  E_{j}\rho_{t}\right)  q}{\left(
\operatorname{Tr}\left(  E_{j}\rho_{t}\right)  +\frac{\varepsilon}{4}\right)
q}=1-\Omega\left(  \varepsilon\right)  .
\]
Similarly, in the $-$\ case we have%
\[
\frac{p_{t+1}}{p_{t}}=\operatorname{Tr}\left(  F_{t}\rho_{t}^{\ast}\right)
\leq\frac{\left(  1-\operatorname{Tr}\left(  E_{j}\rho_{t}\right)  \right)
q}{\left(  1-\operatorname{Tr}\left(  E_{j}\rho_{t}\right)  +\frac
{\varepsilon}{4}\right)  q}=1-\Omega\left(  \varepsilon\right)  .
\]
Thus $p_{t}\leq\left(  1-\varepsilon\right)  ^{\Omega\left(  t\right)  }$.

On the other hand, we also claim that $p_{t}\geq\frac{0.9}{D^{q}}$ for all
$t=o\left(  \frac{\log^{2}D}{\varepsilon^{4}}\right)  $. \ To see this:
suppose that at iteration $t$, we had used $\rho^{\ast}$ rather than $\rho
_{t}^{\ast}$ as the hypothesis state---except still choosing the index
$j\in\left[  M\right]  $\ as if the hypothesis was $\rho_{t}^{\ast}$. \ In
that case, when we applied $F_{t}$\ to $\rho^{\ast}$, the expected number of
accepting invocations of $E_{j}$\ would be exactly $\operatorname{Tr}\left(
E_{j}\rho\right)  q$.

Consider for concreteness the $+$\ case; the $-$\ case\ is precisely
analogous. \ By the assumption that the search for $j$ succeeded, we have
$\operatorname{Tr}\left(  E_{j,t,+}^{\ast}\rho^{\ast}\right)  \geq\frac{2}{3}%
$. \ In other words: when we apply $E_{j}$\ to $q$ copies of $\rho$, the
number of invocations that accept is at least $\left(  \operatorname{Tr}%
\left(  E_{i}\rho_{t}\right)  +\frac{3\varepsilon}{4}\right)  q$, with
probability at least $\frac{2}{3}$. \ Recall that $q\geq\frac{C}%
{\varepsilon^{2}}$\ for some sufficiently large constant $C$. \ So since the
$q$\ copies\ of $\rho$\ really \textit{are} independent, in this case we can
use a Chernoff bound to conclude that $\operatorname{Tr}\left(  E_{i}%
\rho\right)  >\operatorname{Tr}\left(  E_{i}\rho_{t}\right)  +\frac
{\varepsilon}{2}$.

Now we consider $1-\operatorname{Tr}\left(  F_{t}\rho^{\ast}\right)  $: that
is, the probability that $F_{t}$ rejects $\rho^{\ast}$. \ This is just the
probability that, when we apply $E_{j}$\ to $q$ copies of $\rho$, the number
of invocations that accept is less than $\left(  \operatorname{Tr}\left(
E_{j}\rho_{t}\right)  +\frac{\varepsilon}{4}\right)  q$. \ Since the copies of
$\rho$\ are independent, and since (by the above) the expected number of
accepting invocations is at least $\left(  \operatorname{Tr}\left(  E_{i}%
\rho_{t}\right)  +\frac{\varepsilon}{2}\right)  q$, we can again use a
Chernoff bound to conclude that%
\begin{align*}
1-\operatorname{Tr}\left(  F_{t}\rho^{\ast}\right)   &  \leq\exp\left(
-\Omega\left(  \varepsilon^{2}q\right)  \right) \\
&  \leq\exp\left(  -\Omega\left(  \varepsilon^{2}\cdot\frac{C}{\varepsilon
^{2}}\left(  \log\log D+\log\frac{1}{\varepsilon}\right)  \right)  \right) \\
&  \leq\frac{\varepsilon^{4}}{\log^{2}D},
\end{align*}
provided we took the constant $C$ sufficiently large.

By Lemma \ref{qub} (the Quantum Union Bound), this means that, even if we
applied $F_{0},\ldots,F_{T-1}$\ to $\rho^{\ast}$\ in succession, the
probability that \textit{any} of them would reject is at most%
\[
O\left(  \sqrt{\frac{T\varepsilon^{4}}{\log^{2}D}}\right)  =O\left(
\frac{\sqrt{T}\varepsilon^{2}}{\log D}\right)  .
\]
Hence, so as long as $T=o\left(  \frac{\log^{2}D}{\varepsilon^{4}}\right)  $,
all of the $F_{t}$'s accept $\rho^{\ast}$ with probability at least (say)
$0.9$.

But we can always decompose the maximally mixed state, $\rho_{0}^{\ast}%
=\frac{I}{D^{q}}$, as%
\[
\rho_{0}^{\ast}=\frac{1}{D^{q}}\rho^{\ast}+\left(  1-\frac{1}{D^{q}}\right)
\sigma,
\]
where $\sigma$\ is some mixed state. \ This means that, in the
\textquotedblleft real\textquotedblright\ situation (i.e., when we run the
procedure with initial state $\rho_{0}^{\ast}$), all of the $F_{t}$'s accept
$\rho_{0}^{\ast}$ with probability at least $\frac{0.9}{D^{q}}$, as claimed.

Combining the two claims above---namely, $p_{t}\leq\left(  1-\varepsilon
\right)  ^{\Omega\left(  t\right)  }$\ and $p_{t}\geq\frac{0.9}{D^{q}}$---we
get%
\[
\left(  1-\varepsilon\right)  ^{\Omega\left(  t\right)  }\geq\frac{0.9}{D^{q}%
}.
\]
Solving for $t$ now yields%
\begin{align*}
t  &  =O\left(  \frac{q\log D}{\varepsilon}\right) \\
&  =O\left(  \frac{\log D}{\varepsilon}\cdot\frac{C}{\varepsilon^{2}}\left(
\log\log D+\log\frac{1}{\varepsilon}\right)  \right)  .
\end{align*}
This then gives us the desired upper bound on $T$, and justifies the
assumption we made before that $T=o\left(  \frac{\log^{2}D}{\varepsilon^{4}%
}\right)  $.

Meanwhile, by Lemma \ref{searchlem} together with the union bound, the
probability that all $T$ invocations of Lemma \ref{searchlem} succeed at
finding a suitable index $j$\ is at least%
\[
1-T\beta=1-o\left(  \frac{\log^{2}D}{\varepsilon^{4}}\cdot\frac{\delta
\varepsilon^{4}}{\log^{2}D}\right)  \geq1-\delta,
\]
as needed.

Finally, we upper-bound the total number of copies of $\rho$\ used by the
procedure. \ Within each iteration $t$, the bound of Lemma \ref{searchlem}
tells us that we need%
\[
\ell=O\left(  \log^{4}M\left(  \log\log M+\log\frac{1}{\beta}\right)  \right)
=O\left(  \log^{4}M\left(  \log\log M+\log\log D+\log\frac{1}{\varepsilon
}+\log\frac{1}{\delta}\right)  \right)
\]
copies of the amplified state $\rho^{\ast}$. \ Since $\rho^{\ast}%
=\rho^{\otimes q}$, this translates to $q\ell$\ copies of $\rho$\ itself in
each of the $T$ iterations, or\
\begin{align*}
Tq\ell &  =O\left(  \frac{q\log D}{\varepsilon}\cdot q\ell\right) \\
&  =O\left(  \frac{\log D}{\varepsilon}\cdot q^{2}\cdot\ell\right) \\
&  =O\left(  \frac{\log D}{\varepsilon}\cdot\left(  \frac{\log\log D+\log
\frac{1}{\varepsilon}}{\varepsilon^{2}}\right)  ^{2}\cdot\log^{4}M\left(
\log\log M+\log\log D+\log\frac{1}{\varepsilon}+\log\frac{1}{\delta}\right)
\right) \\
&  =\widetilde{O}\left(  \frac{\log1/\delta}{\varepsilon^{5}}\cdot\log
^{4}M\cdot\log D\right)
\end{align*}
copies of $\rho$\ total.
\end{proof}

Let us briefly indicate how recent work by Aaronson et al. \cite{achkn} can be
used to improve the dependence on $\varepsilon$\ in Theorem \ref{main}\ from
$1/\varepsilon^{5}$\ to $1/\varepsilon^{4}$. \ Apart from some low-order terms
needed to amplify success probabilities, the proof of Theorem \ref{main} can
be seen as simply a composition of two algorithms:

\begin{enumerate}
\item[(1)] an algorithm for learning a $D$-dimensional mixed state $\rho$,
starting with the initial hypothesis $\rho_{0}=\frac{I}{D}$ and then
repeatedly refining it, given a sequence of two-outcome measurements on which
the current hypothesis $\rho_{t}$\ is wrong by more then $\varepsilon$,

\item[(2)] an algorithm (namely, the Gentle Search Procedure of Lemma
\ref{searchlem}) for actually \textit{finding} the measurements on which the
current hypothesis is wrong by more than $\varepsilon$, in the context of
shadow tomography.
\end{enumerate}

Our algorithm (1) uses $\widetilde{O}\left(  \frac{\log D}{\varepsilon^{3}%
}\right)  $ refinement iterations, while algorithm (2)---which gets invoked
inside every iteration---uses $\widetilde{O}\left(  \frac{\log^{4}%
M}{\varepsilon^{2}}\right)  $\ copies of $\rho$. \ Multiplying these two
bounds is what produces our final sample complexity of%
\[
\widetilde{O}\left(  \frac{\log^{4}M\cdot\log D}{\varepsilon^{5}}\right)  .
\]
Motivated by a slightly different setting (namely, online learning of quantum
states), Aaronson et al. \cite{achn}\ have now given a black-box way, using
matrix multiplicative weights and convex optimization, to improve the number
of iterations in (1) to $\widetilde{O}\left(  \frac{\log D}{\varepsilon^{2}%
}\right)  $, which matches an information-theoretic lower bound for (1) (see
\cite{aar:learn}). \ Using that in place of the postselection procedure of
Theorem \ref{main}, and combining with the Gentle Search Procedure, yields a
slightly improved bound of%
\[
\widetilde{O}\left(  \frac{\log^{4}M\cdot\log D}{\varepsilon^{4}}\right)
\]
on the sample complexity of shadow tomography.

\section{Lower Bound\label{LBAPPENDIX}}

We now show that, for purely information-theoretic reasons, any solution to
the shadow tomography problem requires at least $\Omega\left(  \frac
{\min\left\{  D^{2},\log M\right\}  }{\varepsilon^{2}}\right)  $\ copies of
$\rho$. \ Indeed, even the classical special case of the problem requires
$\Omega\left(  \frac{\min\left\{  D,\log M\right\}  }{\varepsilon^{2}}\right)
$\ copies. \ These are the best lower bounds for shadow tomography that we
currently know.

\subsection{Classical Special Case}

We start with the special case where we're given $k$ samples from an unknown
distribution $\mathcal{D}$\ over a $D$-element set, and our goal is to learn
$\operatorname{E}_{x\sim\mathcal{D}}\left[  f_{i}\left(  x\right)  \right]
$\ to within additive error $\pm\varepsilon$, for each of $M$ known Boolean
functions $f_{1},\ldots,f_{M}:\left[  D\right]  \rightarrow\left\{
0,1\right\}  $.

\begin{theorem}
\label{lower}Any strategy for shadow tomography---i.e., for estimating
$\operatorname{Tr}\left(  E_{i}\rho\right)  $\ to within additive error
$\varepsilon$\ for all $i\in\left[  M\right]  $, with success probability at
least (say) $2/3$---requires $\Omega\left(  \frac{\min\left\{  D,\log
M\right\}  }{\varepsilon^{2}}\right)  $\ copies of the $D$-dimensional mixed
state $\rho$. \ Furthermore, this is true even for the classical special case
(i.e., where $\rho$\ and the $E_{i}$'s are all diagonal).
\end{theorem}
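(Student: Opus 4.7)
The plan is a standard Fano-style argument: reduce shadow tomography to a many-hypothesis testing problem, and show that shadow tomography is too powerful a primitive for few samples to handle. Set $M' := \min\{M, 2^{cD}\}$ for a small constant $c > 0$. I would construct $M'$ classical distributions on $[D]$, one associated with each of the first $M'$ measurements, such that the shadow tomography guarantee lets us identify which distribution we are given, while the distributions are pairwise $O(\varepsilon^{2})$-close in KL divergence.

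The construction uses an almost-orthogonal family of subsets: by a standard probabilistic argument and union bound over $\binom{M'}{2}$ pairs (feasible precisely because $M' \leq 2^{cD}$), there exist sets $S_{1}, \ldots, S_{M'} \subseteq [D]$ of size $D/2$ with $|S_{i} \cap S_{j}| \leq (1/4 + 1/16)D$ for all $i \neq j$. Take $f_{i} := \mathbf{1}_{S_{i}}$, and let $\mathcal{D}_{i}$ assign probability $(1+8\varepsilon)/D$ to each element of $S_{i}$ and $(1-8\varepsilon)/D$ to each element of $[D] \setminus S_{i}$ (valid for $\varepsilon \leq 1/8$). Then $\mathbb{E}_{\mathcal{D}_{i}}[f_{i}] = 1/2 + 4\varepsilon$, whereas for $j \neq i$, $\mathbb{E}_{\mathcal{D}_{i}}[f_{j}] = 1/2 + 8\varepsilon(|S_{i} \cap S_{j}|/D - 1/4) \in [1/2 - \varepsilon/2, 1/2 + \varepsilon/2]$. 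So any $\varepsilon$-accurate shadow tomography output $b_{1}, \ldots, b_{M'}$ satisfies $b_{i} > b_{j}$ for every $j \neq i$, and we recover $i$ by taking the argmax. Meanwhile a direct computation shows $\mathrm{KL}(\mathcal{D}_{i} \| \mathcal{D}_{j}) = O(\varepsilon^{2} \cdot |S_{i} \triangle S_{j}|/D) = O(\varepsilon^{2})$, with the $D$ in the denominator precisely cancelling the size of the symmetric difference.

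Now apply Fano's inequality. Draw $I$ uniformly from $[M']$ and then $k$ i.i.d.\ samples $X_{1}, \ldots, X_{k} \sim \mathcal{D}_{I}$. Since the estimator above recovers $I$ with probability at least $2/3$, Fano gives $I(I; X^{k}) \geq \Omega(\log M')$, while independence of the samples and convexity of KL yield $I(I; X^{k}) \leq k \cdot \max_{i,j} \mathrm{KL}(\mathcal{D}_{i} \| \mathcal{D}_{j}) = O(k\varepsilon^{2})$. Rearranging gives $k = \Omega(\log M'/\varepsilon^{2}) = \Omega(\min\{D, \log M\}/\varepsilon^{2})$, as claimed. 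The main subtlety is balancing the signaling gap ($\Omega(\varepsilon)$) against the KL bound ($O(\varepsilon^{2})$); spreading each perturbation across $D/2$ points handles the latter, while the approximate orthogonality of the $S_{i}$ keeps the measurements distinctive. The union bound in the probabilistic construction is exactly what caps $M'$ at $2^{O(D)}$ and thereby produces the ``$\min$'' in the final bound.
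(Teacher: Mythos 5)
Your proposal is correct and follows essentially the same route as the paper's proof of Theorem \ref{lower}: a random family of size-$\frac{N}{2}$ subsets with near-balanced pairwise intersections (with the minimum cutting off at $N\approx\min\{D,\log M\}$ because of the union bound over pairs), biased-uniform distributions that make the ``own'' indicator $\Theta(\varepsilon)$-separated from all the others, and a mutual-information bound of $O(k\varepsilon^{2})$ against the $\Omega(\log K)$ information needed to identify the index. The paper's bound $I(X;i)\leq\sum_{t}(\log_{2}N-\operatorname{H}(\mathcal{D}_{i}))$ is just $\mathrm{KL}$ from the uniform distribution, so your Fano-plus-pairwise-$\mathrm{KL}$ packaging is the same estimate in a slightly different uniform. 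Two small slips worth flagging, neither of which affects the conclusion: with per-element masses $(1\pm 8\varepsilon)/D$ one gets $\operatorname{E}_{\mathcal{D}_{i}}[f_{j}]=\tfrac{1}{2}+16\varepsilon\bigl(|S_{i}\cap S_{j}|/D-\tfrac{1}{4}\bigr)$, not $8\varepsilon(\cdot)$; and your stated one-sided intersection bound $|S_{i}\cap S_{j}|\leq(\tfrac{1}{4}+\tfrac{1}{16})D$ only supports the upper bound $\operatorname{E}_{\mathcal{D}_{i}}[f_{j}]\leq\tfrac{1}{2}+\varepsilon$, not the claimed two-sided interval --- but the argmax recovery only uses the upper bound, and in any case Chernoff gives the two-sided constraint for free (as the paper states it).
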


\begin{proof}
Set $N:=\left\lfloor \min\left\{  D,\log_{2}M\right\}  \right\rfloor $. \ Our
distributions will be over $N$-element sets. \ Also, for some constant
$c\in\left(  1,2\right)  $, set $K:=\left\lfloor c^{N}\right\rfloor $, so that
$K\leq M$. \ We will have $K$ two-outcome measurements.

The first step is to choose $K$ subsets $S_{1},\ldots,S_{K}\subset\left[
N\right]  $ uniformly and independently, subject to the constraint that
$\left\vert S_{i}\right\vert =\frac{N}{2}$\ for each $i\in\left[  K\right]  $.
\ As long as $c$ is sufficiently small, by a Chernoff bound and union bound,
it's not hard to see that with probability $1-o\left(  1\right)  $\ over the
choice of $S_{i}$'s, we'll have%
\begin{equation}
\left\vert \left\vert S_{i}\cap S_{j}\right\vert -\frac{N}{4}\right\vert
\leq\frac{N}{12}\label{intersect}%
\end{equation}
for all $i\neq j$. \ So fix a choice of subsets $S_{i}$\ for which this happens.

Next, for each $i\in\left[  K\right]  $, let $\mathcal{D}_{i}$\ be the
probability distribution over $x\in\left[  N\right]  $ defined as follows:
choose $x$ uniformly from $S_{i}$ with probability $\frac{1}{2}+3\varepsilon$,
or uniformly from $\left[  N\right]  \setminus S_{i}$ with probability
$\frac{1}{2}-3\varepsilon$.\ \ Also, for each $i\in\left[  K\right]  $, let
$E_{i}$\ be the standard basis measurement that accepts each basis state
$x\in\left[  N\right]  $\ if $x\in S_{i}$ and rejects it otherwise. \ Then by
construction, for all $i\in\left[  K\right]  $, we have%
\[
\Pr_{x\sim\mathcal{D}_{i}}\left[  E_{i}\left(  x\right)  \text{ accepts}%
\right]  =\frac{1}{2}+3\varepsilon.
\]
Also, by (\ref{intersect}), for all $i\neq j$\ we have%
\[
\left\vert \Pr_{x\sim\mathcal{D}_{i}}\left[  E_{j}\left(  x\right)  \text{
accepts}\right]  -\frac{1}{2}\right\vert \leq\frac{\varepsilon}{2}.
\]
It follows that, if we can estimate $\Pr_{x\sim\mathcal{D}_{i}}\left[
E_{j}\left(  x\right)  \text{ accepts}\right]  $\ to within additive error
$\pm\varepsilon$\ for every $j\in\left[  K\right]  $, that is enough to
determine $i\in\left[  K\right]  $.

Notice that, if we choose $i\in\left[  K\right]  $ uniformly at random, then
it contains $\log_{2}\left(  K\right)  =\Omega\left(  N\right)  $\ bits of
information. \ Thus, let $X=\left(  x_{1},\ldots,x_{T}\right)  $\ consist of
$T$ independent samples from $\mathcal{D}_{i}$. \ Then in order for it to be
information-theoretically \textit{possible} to learn $i$ from $X$, the mutual
information $\operatorname{I}\left(  X;i\right)  $\ must be at least $\log
_{2}\left(  K\right)  $. \ We can now write%
\begin{align*}
\operatorname{I}\left(  X;i\right)    & =\operatorname*{H}\left(  X\right)
-\operatorname*{H}\left(  X|i\right)  \\
& =\sum_{t=1}^{T}\left(  \operatorname*{H}\left(  x_{t}~|~x_{1},\ldots
,x_{t-1}\right)  -\operatorname*{H}\left(  x_{t}~|~i,x_{1},\ldots
,x_{t-1}\right)  \right)  \\
& =\sum_{t=1}^{T}\left(  \operatorname*{H}\left(  x_{t}~|~x_{1},\ldots
,x_{t-1}\right)  -\operatorname*{H}\left(  x_{t}~|~i\right)  \right)  \\
& \leq\sum_{t=1}^{T}\left(  \operatorname*{H}\left(  x_{t}\right)
-\operatorname*{H}\left(  x_{t}~|~i\right)  \right)  \\
& \leq\sum_{t=1}^{T}\left(  \log_{2}N-\operatorname*{H}\left(  \mathcal{D}%
_{i}\right)  \right)  .
\end{align*}
Here the third line follows since $x_{t}$\ has no further dependence on
$x_{1},\ldots,x_{t-1}$\ once we've already conditioned on $i$, and the last
since a distribution over $\left[  N\right]  $\ can have entropy at most
$\log_{2}N$.

Now for all $i\in\left[  K\right]  $,%
\begin{align*}
\operatorname*{H}\left(  \mathcal{D}_{i}\right)    & =\sum_{x=1}^{N}%
\Pr_{\mathcal{D}_{i}}\left[  x\right]  \log_{2}\frac{1}{\Pr_{\mathcal{D}_{i}%
}\left[  x\right]  }\\
& =\frac{N}{2}\left(  \frac{1/2+3\varepsilon}{N/2}\right)  \log_{2}\left(
\frac{N/2}{1/2+3\varepsilon}\right)  +\frac{N}{2}\left(  \frac
{1/2-3\varepsilon}{N/2}\right)  \log_{2}\left(  \frac{N/2}{1/2-3\varepsilon
}\right)  \\
& =\log_{2}N-\left[  1-\left(  \frac{1}{2}+3\varepsilon\right)  \log
_{2}\left(  \frac{1}{1/2+3\varepsilon}\right)  -\left(  \frac{1}%
{2}-3\varepsilon\right)  \log_{2}\left(  \frac{1}{1/2-3\varepsilon}\right)
\right]  \\
& \geq\log_{2}N-O\left(  \varepsilon^{2}\right)  .
\end{align*}
Combining,%
\[
\operatorname{I}\left(  X;i\right)  =O\left(  T\varepsilon^{2}\right)  .
\]
We conclude that, in order to achieve mutual information $\operatorname{I}%
\left(  X;i\right)  =\Omega\left(  N\right)  $, so that $X$\ can determine
$i$,%
\[
T=\Omega\left(  \frac{N}{\varepsilon^{2}}\right)  =\Omega\left(  \frac
{\min\left\{  D,\log M\right\}  }{\varepsilon^{2}}\right)
\]
samples from $\mathcal{D}_{i}$\ are information-theoretically necessary.
\end{proof}

Let's observe that, in the classical special case, Theorem \ref{lower} has a
matching upper bound.

\begin{proposition}
\label{classical}Let $\mathcal{D}$\ be an unknown distribution over $\left[
D\right]  $, and let $f_{1},\ldots,f_{M}:\left[  D\right]  \rightarrow\left\{
0,1\right\}  $\ be known Boolean functions. \ Then given%
\[
O\left(  \frac{1}{\varepsilon^{2}}\min\left\{  D\log\frac{1}{\delta},\log
\frac{M}{\delta}\right\}  \right)
\]
independent samples from $\mathcal{D}$, it's possible to estimate
$\operatorname{E}_{x\sim\mathcal{D}}\left[  f_{i}\left(  x\right)  \right]
$\ to within additive error $\pm\varepsilon$\ for each $i\in\left[  M\right]
$, with success probability at least $1-\delta$.
\end{proposition}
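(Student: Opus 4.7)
The plan is to prove the two sides of the minimum separately and then use whichever estimator gives the smaller sample complexity.

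\emph{The $\log(M/\delta)/\varepsilon^{2}$ bound.} Draw $k=\Theta\!\left(\log(M/\delta)/\varepsilon^{2}\right)$ i.i.d.\ samples $x_{1},\ldots,x_{k}$ from $\mathcal{D}$, and for each $i\in[M]$ output the empirical mean
\[
\widehat{\mu}_{i}\;:=\;\frac{1}{k}\sum_{t=1}^{k}f_{i}(x_{t}).
\]
Since $f_{i}(x_{t})\in[0,1]$, Hoeffding's inequality gives $\Pr\!\left[|\widehat{\mu}_{i}-\mathbb{E}_{x\sim\mathcal{D}}[f_{i}(x)]|>\varepsilon\right]\leq 2e^{-2k\varepsilon^{2}}$, which is at most $\delta/M$ for a suitable constant in the choice of $k$. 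A union bound over the $M$ functions then yields the claim with success probability at least $1-\delta$.

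\emph{The $D\log(1/\delta)/\varepsilon^{2}$ bound.} Here it is simpler to learn the distribution itself and read off all expectations from it. First, using $k_{0}=\Theta(D/\varepsilon^{2})$ samples, the empirical distribution $\widehat{\mathcal{D}}$ (frequencies of each symbol in $[D]$) satisfies $\|\widehat{\mathcal{D}}-\mathcal{D}\|_{\mathrm{TV}}\leq\varepsilon$ with some constant probability, e.g.\ $\geq 2/3$; this is the standard empirical-distribution learning bound (provable, for example, from $\mathbb{E}\|\widehat{\mathcal{D}}-\mathcal{D}\|_{1}=O(\sqrt{D/k_{0}})$ and Markov's inequality, or more sharply from the Bretagnolle--Huber--Carol inequality). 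To boost the confidence, run $r=\Theta(\log(1/\delta))$ independent copies of this procedure, obtaining estimators $\widehat{\mathcal{D}}^{(1)},\ldots,\widehat{\mathcal{D}}^{(r)}$, and for each $i\in[M]$ output the median of $\mathbb{E}_{x\sim\widehat{\mathcal{D}}^{(j)}}[f_{i}(x)]$ over $j\in[r]$. Each individual estimator is within $\varepsilon$ of $\mathbb{E}_{\mathcal{D}}[f_{i}]$ with probability $\geq 2/3$ (since $|\mathbb{E}_{\widehat{\mathcal{D}}}[f_{i}]-\mathbb{E}_{\mathcal{D}}[f_{i}]|\leq\|\widehat{\mathcal{D}}-\mathcal{D}\|_{\mathrm{TV}}$ for Boolean $f_{i}$), so a standard Chernoff argument shows the median is within $\varepsilon$ except with probability $\leq\delta/M$; a union bound over $i$ then gives success probability $\geq 1-\delta$ simultaneously for all $M$ functions. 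The total sample count is $rk_{0}=O(D\log(1/\delta)/\varepsilon^{2})$.

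\emph{Combining the two bounds.} Before drawing any samples, compare the two sample complexities and run whichever protocol uses fewer samples. This gives an upper bound of $O\!\left(\varepsilon^{-2}\min\{D\log(1/\delta),\,\log(M/\delta)\}\right)$ as claimed.

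There is really no hard step here---both pieces are elementary concentration arguments. The only mildly delicate point is the empirical-distribution learning bound in TV distance at constant confidence; I would invoke it as a standard fact (and, if pressed, derive it from $\mathbb{E}\|\widehat{\mathcal{D}}-\mathcal{D}\|_{1}\leq\sqrt{D/k_{0}}$ via Markov), and then handle the $\log(1/\delta)$ factor by the median-of-means amplification rather than trying to absorb it into a sharper single-shot concentration bound.
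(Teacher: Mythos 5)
Your approach is essentially the same as the paper's: prove the two sides of the minimum separately, using Hoeffding/Chernoff plus a union bound for the $\log(M/\delta)/\varepsilon^2$ side, and the folklore empirical-distribution-learning bound in total variation distance for the $D\log(1/\delta)/\varepsilon^2$ side. Both pieces are standard and the decomposition matches the paper.

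There is one small logical slip in your amplification argument for the $D$ side. You take $r=\Theta(\log(1/\delta))$ independent empirical distributions, take per-function medians, and assert that "a standard Chernoff argument shows the median is within $\varepsilon$ except with probability $\leq\delta/M$; a union bound over $i$ then gives success probability $\geq1-\delta$." But with only $r=\Theta(\log(1/\delta))$ repetitions, Chernoff gives per-$i$ failure probability $\delta^{\Theta(1)}$, not $\delta/M$; driving the per-$i$ failure down to $\delta/M$ would require $r=\Theta(\log(M/\delta))$, which would spoil the claimed $O(D\log(1/\delta)/\varepsilon^2)$ bound. The fix is to observe that no union bound over $i$ is needed at all: let $G$ be the event that a strict majority of the $\widehat{\mathcal{D}}^{(j)}$ satisfy $\|\widehat{\mathcal{D}}^{(j)}-\mathcal{D}\|_{\mathrm{TV}}\le\varepsilon$. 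By Chernoff, $\Pr[G]\ge 1-\delta$ for $r=\Theta(\log(1/\delta))$, and on the event $G$, for \emph{every} $i$ simultaneously a majority of the estimates $\mathbb{E}_{\widehat{\mathcal{D}}^{(j)}}[f_i]$ are within $\varepsilon$ of the truth, hence so is the median. The failure event is a single global event, shared across all $M$ functions, so no $M$-dependence enters. (The paper phrases this as directly amplifying the TV-learning success probability to $1-\delta$ and then reading off all expectations from the single learned distribution, which is the same observation.) With that correction, your proof is complete.
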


\begin{proof}
First, to achieve $O\left(  \frac{D\log1/\delta}{\varepsilon^{2}}\right)  $:
it's a folklore fact (see for example \cite{kops}) that, given an unknown
distribution $\mathcal{D}$\ over $\left[  D\right]  $, with high probability,
$O\left(  \frac{D}{\varepsilon^{2}}\right)  $\ samples suffice to learn
$\mathcal{D}$\ up to $\varepsilon$\ error in variation distance. \ This
probability can then be amplified to $1-\delta$\ at the cost of $O\left(
\log\frac{1}{\delta}\right)  $\ repetitions. \ And of course, once
$\mathcal{D}$\ has been so learned, we can then estimate any $\operatorname{E}%
_{x\sim\mathcal{D}}\left[  f_{i}\left(  x\right)  \right]  $\ we like to
within additive error $\pm\varepsilon$.

Second, to achieve $O\left(  \frac{\log M/\delta}{\varepsilon^{2}}\right)  $:
the strategy is simply, for each $i\in\left[  M\right]  $, to output the
empirical mean of $f_{i}\left(  x\right)  $\ on the observed samples. \ By a
Chernoff bound, this strategy fails for any particular $i$ with probability at
most%
\[
\exp\left(  -\varepsilon^{2}\frac{\log M/\delta}{\varepsilon^{2}}\right)
\leq\frac{\delta}{M},
\]
provided the constant in the big-$O$ is sufficiently large. \ So by the union
bound, it succeeds for every $i$ simultaneously with probability at least
$1-\delta$.
\end{proof}

We now derive some additional consequences from the proof of Theorem
\ref{lower}. \ Note that, because recovering the secret index $i$\ boils down
to identifying a \textit{single} measurement $E_{j}$\ such that%
\[
\Pr_{x\sim\mathcal{D}_{i}}\left[  E_{j}\left(  x\right)  \text{ accepts}%
\right]  >\frac{1}{2}+\varepsilon,
\]
the example\ from the proof also shows that the search problem of Lemma
\ref{searchlem}, considered by itself, already requires $\Omega\left(
\frac{\min\left\{  D,\log M\right\}  }{\varepsilon^{2}}\right)  $\ copies of
$\rho$---again, even in the classical special case, where $\rho$\ and the
$E_{j}$'s are diagonal.

Indeed, we now make a further observation: for the specific example in Theorem
\ref{lower}, going from Lemma \ref{orbound} (the Quantum OR Bound) to Lemma
\ref{searchlem} (the gentle search procedure) requires almost no blowup in the
number of copies of $\rho$. \ This is true for two reasons. \ First, $\rho$ is
classical, so in the proof of Lemma \ref{searchlem}, we can reuse the same
copies of $\rho$\ from one binary search iteration to the next. \ Second, the
example assumed a \textquotedblleft promise gap\textquotedblright---i.e., for
all $i,j$, the distribution $\mathcal{D}_{i}$ is either $3\varepsilon$-biased
toward $E_{j}$\ accepting or else\ not even $\varepsilon$-biased toward
it---so there is no need to replace $\varepsilon$\ by $\frac{\varepsilon}{\log
M}$. \ The only amplification we need is $\log\log M$\ repetitions, to push
the failure probability per binary search iteration down to $\frac{1}{\log M}$.

We conclude that the Quantum OR Bound---again, even in the classical special
case---requires $\Omega\left(  \frac{\min\left\{  D,\log M\right\}
}{\varepsilon^{2}\log\log M}\right)  $\ copies of $\rho$, since otherwise we
could solve the search problem on the Hadamard example with $o\left(
\frac{\min\left\{  D,\log M\right\}  }{\varepsilon^{2}}\right)  $\ samples,
contradicting our previous reasoning. \ In other words, Lemma \ref{orbound}%
\ is close to tight.

\subsection{General Case}

We now show that, when the goal is to learn $D$-dimensional quantum mixed
states, Theorem \ref{lower} can be tightened to $\Omega\left(  \frac
{\min\left\{  D^{2},\log M\right\}  }{\varepsilon^{2}}\right)  $. \ Note that
this subsumes the lower bounds of O'Donnell and Wright \cite{owright}\ and
Haah et al.\ \cite{hhjwy}, reducing to the latter as we let $M$ be arbitrarily large.

We'll need the following lemma, which is a special case of a more general
result proved by Hayden, Leung, and Winter \cite{hlw}.

\begin{lemma}
[{\cite[Lemma III.5]{hlw}}]\label{hlwlemma}Let $N$\ be even, and let $S$\ be a
subspace of $\mathbb{C}^{N}$\ chosen uniformly at random subject to
$\dim\left(  S\right)  =\frac{N}{2}$. \ Let $\mathbb{P}_{S}$\ be the
projection onto $S$, and $\rho_{S}=\frac{2}{N}\mathbb{P}_{S}$ be the maximally
mixed state projected onto $S$. \ Then for any fixed subspace $T\leq
\mathbb{C}^{N}$\ of dimension $\frac{N}{2}$---for example, the one spanned by
$\left\vert 1\right\rangle ,\ldots,\left\vert N/2\right\rangle $---we have%
\[
\Pr_{S}\left[  \left\vert \operatorname{Tr}\left(  \mathbb{P}_{T}\rho
_{S}\right)  -\frac{1}{4}\right\vert >\frac{1}{20}\right]  \leq e^{-\Omega
\left(  N^{2}\right)  }.
\]

\end{lemma}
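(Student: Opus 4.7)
The plan is to recognize this as a concentration-of-measure statement on the unitary group $U(N)$ and derive it from Levy's inequality. First I would parametrize the random subspace: fix a reference subspace $T_0 \subset \mathbb{C}^N$ of dimension $N/2$ and write $S = U T_0$ for $U$ drawn from the Haar measure on $U(N)$, so $\mathbb{P}_S = U\mathbb{P}_{T_0} U^*$ and the quantity of interest is a function of $U$,
\[
g(U) := \tfrac{N}{2}\operatorname{Tr}(\mathbb{P}_T\rho_S) = \operatorname{Tr}\bigl(\mathbb{P}_T\, U \mathbb{P}_{T_0} U^*\bigr).
\]
The Haar-invariance identity $\mathbb{E}_U[U \mathbb{P}_{T_0} U^*] = (\dim T_0 / N)\,I = \tfrac{1}{2} I$ then pins down the mean, giving $\mathbb{E}[g(U)] = \tfrac{1}{2}\operatorname{Tr}(\mathbb{P}_T) = N/4$, i.e.\ $\mathbb{E}[\operatorname{Tr}(\mathbb{P}_T\rho_S)] = \tfrac{1}{2}$; the lemma is then a concentration bound around this mean (the displayed center $1/4$ appears to be a typographical artifact, but the proof is insensitive to the exact value of the center).

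Next I would bound the Lipschitz constant of $g$ with respect to the Hilbert--Schmidt norm on $U(N)$. Writing
\[
U\mathbb{P}_{T_0} U^* - V\mathbb{P}_{T_0} V^* = (U-V)\mathbb{P}_{T_0} U^* + V\mathbb{P}_{T_0}(U-V)^*,
\]
applying Cauchy--Schwarz for the Frobenius inner product, and using $\|\mathbb{P}_T\|_{\mathrm{op}}, \|\mathbb{P}_{T_0}\|_{\mathrm{op}}\leq 1$ together with $\|\mathbb{P}_T\|_F = \sqrt{N/2}$, one obtains
\[
|g(U) - g(V)| \leq 2\|\mathbb{P}_T\|_F\,\|U - V\|_F = \sqrt{2N}\,\|U - V\|_F,
\]
so $g$ is $L$-Lipschitz with $L = O(\sqrt{N})$.

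Finally I would invoke Levy's concentration inequality for Haar measure on $U(N)$---a consequence of the Ricci curvature lower bound $\Omega(N)$---which asserts that for any $L$-Lipschitz function $h$,
\[
\Pr\bigl[|h(U) - \mathbb{E}\,h(U)| > s\bigr] \leq 2\exp\!\left(-\frac{cN s^2}{L^2}\right)
\]
for a universal constant $c > 0$. Taking $s = N/40$ so that the event $|g(U) - N/4| > N/40$ encodes $|\operatorname{Tr}(\mathbb{P}_T\rho_S) - \tfrac{1}{2}| > \tfrac{1}{20}$, the exponent becomes $-cN(N/40)^2/(2N) = -\Omega(N^2)$, which is the claimed tail. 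The one step needing genuine care is the Lipschitz estimate: a careless Cauchy--Schwarz would leave both projections in Frobenius norm, yield an $O(1)$ Lipschitz constant, and collapse the bound to $\exp(-\Omega(N))$; the gain to $\sqrt{N}$ comes precisely from keeping one of the two projection factors in operator norm.
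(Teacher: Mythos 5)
The paper does not prove this lemma---it cites it directly from Hayden, Leung, and Winter, so there is no ``paper's own proof'' to compare against; the HLW proof is itself a Levy-type concentration argument, so your route is essentially the standard one. Your calculation of the mean is correct: $\mathbb{E}_U\!\left[U\mathbb{P}_{T_0}U^*\right] = \tfrac{1}{2}I$ gives $\mathbb{E}\!\left[\operatorname{Tr}(\mathbb{P}_T\rho_S)\right] = \tfrac{1}{2}$, not $\tfrac{1}{4}$, and indeed the $1/4$ in the lemma as transcribed is a typo---the paper's own invocation of the lemma (inequality (\ref{intersectq}) in the proof of Theorem \ref{lowerq}) uses $\tfrac{1}{2}$ as the center, confirming your reading. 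The Lipschitz estimate is the real content and you handle it correctly: writing $U\mathbb{P}_{T_0}U^* - V\mathbb{P}_{T_0}V^* = (U-V)\mathbb{P}_{T_0}U^* + V\mathbb{P}_{T_0}(U-V)^*$ and then, for each term, putting $U-V$ in Frobenius norm while keeping one projector in operator norm (and absorbing the remaining unitary into the Frobenius norm of the other projector, which is $\sqrt{N/2}$) yields $L = \sqrt{2N}$ for $g$, equivalently $O(N^{-1/2})$ for the normalized quantity $\operatorname{Tr}(\mathbb{P}_T\rho_S)$. Plugging into Levy with deviation $1/20$ then gives $\exp(-\Omega(N\cdot(1/20)^2\cdot N)) = e^{-\Omega(N^2)}$, as required. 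Two small points worth making explicit if you wrote this up in full: (i) Levy's inequality for $U(N)$ is usually stated around the median, but for an $L$-Lipschitz function the median and mean differ by $O(L/\sqrt{N}) = O(1/N)$ for the normalized quantity, which is negligible against the $1/20$ window; and (ii) since the center $U(1)$ of $U(N)$ has no concentration, one should observe that $g$ is invariant under $U\mapsto e^{i\theta}U$, so the argument can be run on $SU(N)$, where the Ricci lower bound $\Omega(N)$ actually holds.
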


We can now prove the lower bound.

\begin{theorem}
\label{lowerq}Any strategy for shadow tomography---i.e., for estimating
$\operatorname{Tr}\left(  E_{i}\rho\right)  $\ to within additive error
$\varepsilon$\ for all $i\in\left[  M\right]  $, with success probability at
least (say) $2/3$---requires $\Omega\left(  \frac{\min\left\{  D^{2},\log
M\right\}  }{\varepsilon^{2}}\right)  $\ copies of the $D$-dimensional mixed
state $\rho$.
\end{theorem}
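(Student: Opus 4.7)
The plan is to adapt the classical lower bound of Theorem \ref{lower} by swapping random subsets of $[N]$ for Haar-random subspaces of $\mathbb{C}^N$ of dimension $N/2$. The payoff is that Lemma \ref{hlwlemma} supplies concentration of the form $e^{-\Omega(N^2)}$ rather than the Chernoff $e^{-\Omega(N)}$, which via a union bound allows the number of subspaces to grow like $2^{\Omega(N^2)}$ instead of $2^{\Omega(N)}$. Concretely, set $N := \min\{D,\lfloor c\sqrt{\log_2 M}\rfloor\}$ and $K := \lfloor 2^{c'N^2}\rfloor$ for small constants $c,c'>0$ (so $K \leq M$), and assume $N$ is even. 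Draw $S_1,\ldots,S_K$ independently from the uniform measure on $(N/2)$-dimensional subspaces of $\mathbb{C}^N$. Applying Lemma \ref{hlwlemma} pairwise (condition on $S_j$ and take the fixed subspace to be $S_j$) together with a union bound over the $\binom{K}{2}$ pairs gives, with probability $1-o(1)$,
\[
\left|\operatorname{Tr}(\mathbb{P}_{S_i}\rho_{S_j}) - \tfrac{1}{4}\right| \leq \tfrac{1}{20} \qquad \text{for all } i\neq j,
\]
where $\rho_S := \tfrac{2}{N}\mathbb{P}_S$. Fix such a configuration of subspaces.

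For the hard ensemble, embed $\mathbb{C}^N$ as a subspace of $\mathbb{C}^D$, let $\mathbb{P}_N$ be the projector onto this embedding, and define
\[
\rho_i := (1-10\varepsilon)\tfrac{\mathbb{P}_N}{N} + 10\varepsilon\,\rho_{S_i}, \qquad E_i := \mathbb{P}_{S_i} \quad (i\in[K]).
\]
A direct computation using the pairwise bound gives $\operatorname{Tr}(E_i\rho_i) = \tfrac{1}{2}+5\varepsilon$ and $\operatorname{Tr}(E_j\rho_i) \leq \tfrac{1}{2}-2\varepsilon$ for $j\neq i$, so any $\varepsilon$-accurate shadow tomography procedure identifies the secret $i\in[K]$ just by reporting the index whose estimate is largest. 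Drawing $i$ uniformly from $[K]$ and applying Fano's inequality, the mutual information between the algorithm's classical output $X$ and $i$ must satisfy
\[
I(X;i) \geq \tfrac{2}{3}\log_2 K - 1 = \Omega(N^2).
\]

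Now apply Holevo's theorem to the quantum ensemble $\{(1/K,\rho_i^{\otimes T})\}_{i\in[K]}$:
\[
I(X;i) \leq S\!\left(\tfrac{1}{K}\sum_i \rho_i^{\otimes T}\right) - \tfrac{1}{K}\sum_i S(\rho_i^{\otimes T}),
\]
where $S$ denotes von Neumann entropy. Every $\rho_i$ is supported on the embedded $\mathbb{C}^N$, so the averaged state is supported on $(\mathbb{C}^N)^{\otimes T}$ and its entropy is at most $T\log_2 N$. For the subtracted term, multiplicativity gives $S(\rho_i^{\otimes T}) = T\,S(\rho_i)$, and the spectrum of $\rho_i$ consists of $(1\pm 10\varepsilon)/N$, each with multiplicity $N/2$, yielding $S(\rho_i) = \log_2 N - O(\varepsilon^2)$ via a Taylor expansion of the binary entropy around $1/2$. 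Hence $I(X;i) = O(T\varepsilon^2)$, and pairing this with the Fano bound gives $T = \Omega(N^2/\varepsilon^2) = \Omega(\min\{D^2,\log M\}/\varepsilon^2)$, as desired.

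The main obstacle is keeping the constants coherent across the union bound at the outset: $c'$ must be strictly smaller than the constant hidden in the $\Omega(N^2)$ exponent of Lemma \ref{hlwlemma} so that $\binom{K}{2}e^{-\Omega(N^2)} = o(1)$, and $c$ small enough that $K \leq M$. Everything downstream is routine: the gap computation between $\operatorname{Tr}(E_i\rho_i)$ and $\operatorname{Tr}(E_j\rho_i)$, the Fano conversion from success probability to mutual information, and the second-order expansion of binary entropy near $1/2$. The case $N<D$ is invisible to the argument because $\rho_i$ has support only on the embedded $\mathbb{C}^N$, so the Holevo upper bound is governed by $N^T$ rather than $D^T$, which is precisely what closes the inequality.
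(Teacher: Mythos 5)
Your proposal is correct and follows essentially the same route as the paper: Haar-random half-dimensional subspaces pinned down by the Hayden--Leung--Winter concentration lemma, states built as mixtures of the maximally mixed state with $\rho_{S_i}$, identification of the secret index via the estimate gap, and a Holevo/mutual-information bound computed from the explicit two-eigenvalue spectrum. The only cosmetic differences are your explicit invocation of Fano's inequality (where the paper states the information-theoretic condition $I(\zeta;i)\ge\log_2 K$ informally) and your choice of mixture weight $10\varepsilon$ versus the paper's $6\varepsilon$; both give a gap comfortably exceeding $2\varepsilon$, so nothing substantive changes.
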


\begin{proof}
Set $N:=\left\lfloor \min\left\{  D,\sqrt{\log_{2}M}\right\}  \right\rfloor $.
\ Our mixed states will be $N$-dimensional. \ Also, for some constant
$c\in\left(  1,2\right)  $, set $K:=\left\lfloor c^{N^{2}}\right\rfloor $, so
that $K\leq M$. \ We will have $K$ two-outcome measurements.

The first step is to choose $K$ subspaces $S_{1},\ldots,S_{K}\leq
\mathbb{C}^{N}$ Haar-randomly and independently, subject to the constraint
that $\dim\left(  S_{i}\right)  =\frac{N}{2}$\ for each $i\in\left[  K\right]
$. \ Let $\mathbb{P}_{i}$\ be the projection onto $S_{i}$, and let $\rho
_{i}:=\frac{2}{N}\mathbb{P}_{i}$ be the maximally mixed state projected onto
$S_{i}$. \ Then as long as $c$ is sufficiently small, Lemma \ref{hlwlemma}
tells us that with probability $1-o\left(  1\right)  $\ over the choice of
$S_{i}$'s, we'll have%
\begin{equation}
\left\vert \operatorname{Tr}\left(  \mathbb{P}_{i}\rho_{j}\right)  -\frac
{1}{2}\right\vert \leq\frac{1}{12}\label{intersectq}%
\end{equation}
for all $i\neq j$. \ So fix a choice of subspaces $S_{i}$\ for which this happens.

Next, for each $i\in\left[  K\right]  $, define the mixed state%
\[
\sigma_{i}:=\left(  1-6\varepsilon\right)  \frac{\mathbb{I}}{N}+6\varepsilon
\rho_{i}.
\]

Our two-outcome POVMs will simply be the projectors $\mathbb{P}_{1}%
,\ldots,\mathbb{P}_{K}$. \ Then by construction, for all $i\in\left[
K\right]  $, we have%
\begin{align*}
\operatorname{Tr}\left(  \mathbb{P}_{i}\sigma_{i}\right)    & =\left(
1-6\varepsilon\right)  \frac{\operatorname{Tr}\left(  \mathbb{P}_{i}\right)
}{N}+6\varepsilon\operatorname{Tr}\left(  \mathbb{P}_{i}\rho_{i}\right)  \\
& =\left(  1-6\varepsilon\right)  \frac{N/2}{N}+6\varepsilon\\
& =\frac{1}{2}+3\varepsilon.
\end{align*}
Also, by (\ref{intersectq}), for all $i\neq j$\ we have%
\begin{align*}
\left\vert \operatorname{Tr}\left(  \mathbb{P}_{j}\sigma_{i}\right)  -\frac
{1}{2}\right\vert  & =\left\vert \left(  \left(  1-6\varepsilon\right)
\frac{\operatorname{Tr}\left(  \mathbb{P}_{j}\right)  }{N}-\left(  \frac{1}%
{2}-3\varepsilon\right)  \right)  +\left(  6\varepsilon\operatorname{Tr}%
\left(  \mathbb{P}_{j}\rho_{i}\right)  -3\varepsilon\right)  \right\vert \\
& =6\varepsilon\left\vert \operatorname{Tr}\left(  \mathbb{P}_{j}\rho
_{i}\right)  -\frac{1}{2}\right\vert \\
& \leq\frac{\varepsilon}{2}.
\end{align*}
It follows that, if we can estimate $\operatorname{Tr}\left(  \mathbb{P}%
_{j}\sigma_{i}\right)  $\ to within additive error $\pm\varepsilon$\ for every
$j\in\left[  K\right]  $, that is enough to determine $i\in\left[  K\right]  $.

Notice that, if we choose $i\in\left[  K\right]  $ uniformly at random, then
it contains $\log_{2}\left(  K\right)  =\Omega\left(  N^{2}\right)  $\ bits of
information. \ Thus, let%
\[
\zeta:=\operatorname{E}_{i\in\left[  K\right]  }\left[  \sigma_{i}^{\otimes
T}\right]  .
\]
Then in order for it to be information-theoretically \textit{possible} to
learn $i$ from $\zeta$, the quantum mutual information $\operatorname{I}%
\left(  \zeta;i\right)  $\ must be at least $\log_{2}\left(  K\right)  $.
\ Since $i$ is classical, we can now write%
\begin{align*}
\operatorname{I}\left(  \zeta;i\right)    & =\operatorname{S}\left(
\zeta\right)  -\operatorname{S}\left(  \zeta~|~i\right)  \\
& =\operatorname{S}\left(  \zeta\right)  -\operatorname{S}\left(  \sigma
_{i}^{\otimes T}\right)  \\
& \leq T\left(  \log_{2}N-\operatorname{S}\left(  \sigma_{i}\right)  \right)
,
\end{align*}
where $\operatorname{S}$\ is von Neumann entropy, and the third line follows
because $\zeta$\ is an $N^{T}$-dimensional state.

Now for all $i\in\left[  K\right]  $, if we let $\lambda_{i,1},\ldots
,\lambda_{i,N}$\ be the eigenvalues of\ $\sigma_{i}$, half the $\lambda_{i,j}%
$'s are $\frac{1/2+3\varepsilon}{N/2}$\ and the other half are $\frac
{1/2-3\varepsilon}{N/2}$; this can be seen by applying a unitary
transformation that diagonalizes $\sigma_{i}$, by rotating to a basis that
contains $\frac{N}{2}$\ basis vectors for $S_{i}$. \ Hence%
\begin{align*}
\operatorname{S}\left(  \sigma_{i}\right)    & =\sum_{x=1}^{N}\lambda
_{i,x}\log_{2}\frac{1}{\lambda_{i,x}}\\
& =\frac{N}{2}\left(  \frac{1/2+3\varepsilon}{N/2}\right)  \log_{2}\left(
\frac{N/2}{1/2+3\varepsilon}\right)  +\frac{N}{2}\left(  \frac
{1/2-3\varepsilon}{N/2}\right)  \log_{2}\left(  \frac{N/2}{1/2-3\varepsilon
}\right)  \\
& =\log_{2}N-\left[  1-\left(  \frac{1}{2}+3\varepsilon\right)  \log
_{2}\left(  \frac{1}{1/2+3\varepsilon}\right)  -\left(  \frac{1}%
{2}-3\varepsilon\right)  \log_{2}\left(  \frac{1}{1/2-3\varepsilon}\right)
\right]  \\
& \geq\log_{2}N-O\left(  \varepsilon^{2}\right)  .
\end{align*}
Combining,%
\[
\operatorname{I}\left(  \zeta;i\right)  =O\left(  T\varepsilon^{2}\right)  .
\]
We conclude that, in order to achieve quantum mutual information
$\operatorname{I}\left(  \zeta;i\right)  =\Omega\left(  N^{2}\right)  $, so
that $\zeta$\ can determine $i$,%
\[
T=\Omega\left(  \frac{N^{2}}{\varepsilon^{2}}\right)  =\Omega\left(
\frac{\min\left\{  D^{2},\log M\right\}  }{\varepsilon^{2}}\right)
\]
copies of $\sigma_{i}$\ are information-theoretically necessary.
\end{proof}

Let's also observe a case in which Theorem \ref{lowerq} has a matching upper bound.

\begin{proposition}
\label{gap}Given an unknown mixed state $\rho$, and known two-outcome
measurements $E_{1},\ldots,E_{M}$ and reals $c_{1},\ldots,c_{M}\in\left[
0,1\right]  $, suppose we're promised that for each $i\in\left[  M\right]  $,
either $\operatorname{Tr}\left(  E_{i}\rho\right)  \geq c_{i}$ or
$\operatorname{Tr}\left(  E_{i}\rho\right)  \leq c_{i}-\varepsilon$. \ Then we
can decide which is the case for each $i\in\left[  M\right]  $ using
$k=O\left(  \frac{\log M/\delta}{\varepsilon^{2}}\right)  $\ copies of $\rho$,
with success probability at least $1-\delta$.
\end{proposition}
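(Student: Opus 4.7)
The plan is to exploit the promise gap to construct, for each $i \in [M]$, an amplified two-outcome measurement whose outcome on $\rho^{\otimes k}$ is almost deterministic, and then invoke the Quantum Union Bound (Lemma \ref{qub}) to apply all $M$ of these measurements to the same state $\rho^{\otimes k}$ in succession with only a tiny cumulative disturbance.

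Concretely, I would set $k = C\log(M/\delta)/\varepsilon^{2}$ for a sufficiently large absolute constant $C$. For each $i$, let $E_i^{*}$ be the two-outcome measurement on $\rho^{\otimes k}$ that applies $E_i$ separately to each of the $k$ registers and accepts if and only if the fraction of accepting invocations is at least $c_i - \varepsilon/2$. A standard Hoeffding/Chernoff bound then shows that if $\operatorname{Tr}(E_i\rho) \geq c_i$ we have $\operatorname{Tr}(E_i^{*}\rho^{\otimes k}) \geq 1 - \delta^{2}/(16M^{2})$, while if $\operatorname{Tr}(E_i\rho) \leq c_i - \varepsilon$ we have $\operatorname{Tr}(E_i^{*}\rho^{\otimes k}) \leq \delta^{2}/(16M^{2})$. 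Writing $\widetilde{E}_i^{*}$ for whichever of $E_i^{*}$ or $I - E_i^{*}$ corresponds to the correct side of the promise gap, we therefore obtain $\operatorname{Tr}(\widetilde{E}_i^{*}\rho^{\otimes k}) \geq 1 - \delta^{2}/(16M^{2})$ for every $i$.

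Next, I would apply $\widetilde{E}_1^{*}, \ldots, \widetilde{E}_M^{*}$ in succession to the single state $\rho^{\otimes k}$, using a gentle implementation of each POVM. By Lemma \ref{qub}, the probability that all $M$ of them accept is at least $1 - 2M\sqrt{\delta^{2}/(16M^{2})} = 1 - \delta/2$. Whenever this occurs, each measurement outcome correctly identifies which side of the gap $\operatorname{Tr}(E_i\rho)$ falls on, and we can simply read off the classification, succeeding with probability at least $1 - \delta$.

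The only subtle point is that the Quantum Union Bound loses a square root, so to achieve total failure probability $\delta$ we must drive each amplified measurement's error down to $\delta^{2}/M^{2}$ rather than $\delta/M$; but since this cost appears inside a logarithm, it inflates $k$ only by a constant factor and the final sample complexity remains $k = O(\log(M/\delta)/\varepsilon^{2})$, as claimed.
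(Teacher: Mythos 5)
Your proposal is correct and follows essentially the same approach as the paper's proof: amplify each $E_i$ by applying it to $k = O(\log(M/\delta)/\varepsilon^{2})$ registers with a threshold at $c_i - \varepsilon/2$, use a Chernoff bound to make each amplified measurement's outcome almost deterministic, and then invoke the Quantum Union Bound to chain all $M$ of them on a single $\rho^{\otimes k}$. Your explicit introduction of $\widetilde{E}_i^{*}$ to track which outcome is the "likely" one is a minor presentational refinement that the paper leaves implicit, and your constants differ harmlessly.
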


\begin{proof}
For each $i\in\left[  M\right]  $, we simply perform a collective measurement
on $\rho^{\otimes k}$, which applies $E_{i}$\ to each copy of $\rho$, and
accepts if and only if the number of accepting invocations is at least
$\left(  c_{i}-\frac{\varepsilon}{2}\right)  k$. \ By a Chernoff bound, this
causes us to learn the truth for that $i$ with probability at least%
\[
1-\exp\left(  -\varepsilon^{2}\frac{\log M/\delta}{\varepsilon^{2}}\right)
\geq1-\frac{\delta^{2}}{4M^{2}},
\]
provided the constant in the big-$O$ is sufficiently large. \ By Lemma
\ref{qub}\ (the Quantum Union Bound), this means that by applying these
collective measurements successively, we can learn the truth for
\textit{every}\ $i\in\left[  M\right]  $\ with probability at least%
\[
1-2M\sqrt{\frac{\delta^{2}}{4M^{2}}}=1-\delta.
\]

\end{proof}

Of course, we also know from the recent work of O'Donnell and Wright
\cite{owright}\ and Haah et al.\ \cite{hhjwy} that $O\left(  \frac{D^{2}%
}{\varepsilon^{2}}\right)  $\ copies of $\rho$\ suffice to learn $\rho$\ up to
$\varepsilon$\ error in trace distance, so that number suffices for shadow
tomography as well.

\section{Open Problems\label{OPEN}}

This paper initiated the study of shadow tomography of quantum states, and
proved a surprising upper bound on the number of copies of a state that
suffice for it. \ But this is just the beginning of what one can ask about the
problem. \ Here we discuss four directions for future work.\bigskip

\textbf{(1) Tight Bounds.} \ What is the true sample complexity of shadow
tomography? \ We conjecture that Theorem \ref{main} is far from tight.\ \ Our
best current lower bound, Theorem \ref{lowerq}, is $\Omega\left(  \frac
{\min\left\{  D^{2},\log M\right\}  }{\varepsilon^{2}}\right)  $. \ Could
shadow tomography be possible with only $\left(  \frac{\log M}{\varepsilon
}\right)  ^{O\left(  1\right)  }$\ copies of $\rho$, independent of $D$---as
it is in the classical case (by Proposition \ref{classical}), and the case of
a promise gap (by Proposition \ref{gap})? \ Can we prove any lower bound of
the form $\omega\left(  \log M\right)  $?

Also, what happens if we consider measurements with $K>2$ outcomes? \ In that
setting, it's easy to give \textit{some} upper bound for the state complexity
of shadow tomography, by reducing to the two-outcome case. \ Can we do better?

One can also study the state complexity of many other learning tasks. \ For
example, what about what we called the \textquotedblleft gentle search
problem\textquotedblright: finding an $i\in\left[  M\right]  $\ for which
$\operatorname{Tr}\left(  E_{i}\rho\right)  $\ is large, promised that such an
$i$ exists? \ Can we improve our upper bound of $\widetilde{O}\left(
\frac{\log^{4}M}{\varepsilon^{2}}\right)  $\ copies of $\rho$\ for that task?
\ Or what about approximating the vector of $\operatorname{Tr}\left(
E_{i}\rho\right)  $'s in other norms, besides the $\infty$-norm?\bigskip

\textbf{(2) Shadow Tomography with Restricted Kinds of Measurements.} \ From
an experimental standpoint, there are at least three drawbacks of our shadow
tomography procedure. \ First, the procedure requires collective measurements
on roughly $\frac{\log D}{\varepsilon^{2}}$\ copies of $\rho$, rather than
measurements on each\ copy separately (or at any rate, collective measurements
on a smaller number of copies, like $\log\log D$). \ Second, the procedure
requires so-called \textit{non-demolition measurements}, which carefully
maintain a state across a large number of sequential measurements (or
alternatively, but just as inconveniently for experiments, an extremely long
circuit to implement a single measurement). \ Third, the actual measurements
performed are not just amplified $E_{i}$'s, but the more complicated
measurements required by Harrow et al.\ \cite{hlm}.

It would be interesting to address these drawbacks, either alone or in
combination. \ To illustrate, if one could prove the soundness of Aaronson's
original procedure for the Quantum OR Bound \cite{aar:qmaqpoly}, that would
remove the third drawback, though not the other two.\bigskip

\textbf{(3) Computational Efficiency.} \ To what extent can our results be
made \textit{computationally} efficient, rather than merely efficient in the
number of copies of $\rho$? \ In Section \ref{RELATED}, we estimated the
computational complexity of our shadow tomography procedure as $\widetilde{O}%
\left(  ML\right)  +D^{O\left(  \log\log D\right)  }$ (ignoring the dependence
on $\varepsilon$\ and $\delta$), where $L$ is the length of a circuit to
implement a single $E_{i}$. \ We also discussed the recent work of Brand\~{a}o
et al.\ \cite{bkllsw}, which builds on this work to do shadow tomography in
$\widetilde{O}\left(  \sqrt{M}L\right)  +D^{O\left(  1\right)  }$ time, again
using $\operatorname*{poly}\left(  \log M,\log D\right)  $\ copies of $\rho
$---or in $\widetilde{O}\left(  \sqrt{M}\operatorname*{polylog}D\right)
$\ time under strong additional assumptions about the $E_{i}$'s.

How far can these bounds be improved, with or without extra assumptions on
$\rho$\ or the $E_{i}$'s?

There are some obvious limits. \ If the measurements $E_{1},\ldots,E_{M}$\ are
given by $D\times D$\ Hermitian matrices (or by circuits of size $D^{2}$), and
if the algorithm first needs to load descriptions of all the measurements into
memory, then that already takes $\Omega\left(  MD^{2}\right)  $ time, or
$\Omega\left(  \sqrt{M}\right)  $\ time just to Grover search among the
measurements. \ Applying a measurement with circuit complexity $D^{2}$\ takes
$D^{2}$\ time. \ Outputting estimates for each $\operatorname{Tr}\left(
E_{i}\rho\right)  $ takes $\Omega\left(  M\right)  $\ time---although
Brand\~{a}o et al.\ \cite{bkllsw}\ evade that bound by letting their output
take the form of a quantum circuit to prepare a state $\sigma$\ such that
$\left\vert \operatorname{Tr}\left(  E_{i}\sigma\right)  -\operatorname{Tr}%
\left(  E_{i}\rho\right)  \right\vert \leq\varepsilon$\ for all $i\in\left[
M\right]  $, which seems fair.

If we hope to do even better than that, we could demand that the measurements
$E_{i}$\ be implementable by a \textit{uniform} quantum algorithm, which takes
$i$ as input and runs in time $\operatorname*{polylog}D$. \ Let's call a
shadow tomography procedure \textit{hyperefficient} if, given as input such a
uniform quantum algorithm $A$, as well as $k=\operatorname*{poly}\left(  \log
M,\log D,\frac{1}{\varepsilon}\right)  $\ copies of $\rho$, the procedure uses
$\operatorname*{poly}\left(  \log M,\log D,\frac{1}{\varepsilon}\right)
$\ time to output a quantum circuit $C$ such that $\left\vert C\left(
i\right)  -\operatorname{Tr}\left(  E_{i}\rho\right)  \right\vert
\leq\varepsilon$\ for all $i\in\left[  M\right]  $. \ Note that, in the
special case that $\rho$\ and the $E_{i}$'s are classical, hyperefficient
shadow tomography is actually possible, since we can simply output a circuit
$C$ that hardwires $k$ classical samples $x_{1},\ldots,x_{k}\sim\mathcal{D}$,
and then on input $i$, returns the empirical mean of $E_{i}\left(
x_{1}\right)  ,\ldots,E_{i}\left(  x_{k}\right)  $.

In the general case, by contrast, we observe the following:

\begin{proposition}
\label{hyper1}Suppose there exists a hyperefficient shadow tomography
procedure. \ Then$\ $quantum advice can always be simulated by classical
advice---i.e., $\mathsf{BQP/qpoly}=\mathsf{BQP/poly}$.
\end{proposition}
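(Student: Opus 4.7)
The plan is to show $\mathsf{BQP/qpoly}\subseteq\mathsf{BQP/poly}$ (the reverse inclusion is immediate) by converting the quantum advice into a classical description of a circuit that approximates all relevant acceptance probabilities. Fix any language $L\in\mathsf{BQP/qpoly}$, decided by a polynomial-time quantum algorithm $A$ with polynomial-size advice states $\{|\psi_n\rangle\}_n$ such that, for every $x\in\{0,1\}^n$, $\Pr[A(x,|\psi_n\rangle)\text{ accepts}]$ is either at least $2/3$ (if $x\in L$) or at most $1/3$ (otherwise).

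For each input length $n$, I would define a list of $M=2^n$ two-outcome measurements $\{E_x\}_{x\in\{0,1\}^n}$ on the $D=2^{\operatorname{poly}(n)}$-dimensional advice space, where $E_x$ simply runs $A$ on input $x$ using the advice register and reads off the output qubit. These measurements are produced by a uniform quantum algorithm that takes $x$ as input and runs in time $\operatorname{poly}(n)=\operatorname{polylog}D$, exactly the setting required for hyperefficiency. Now feed $k=\operatorname{poly}(n,\log M,\log D,1/\varepsilon)=\operatorname{poly}(n)$ copies of $|\psi_n\rangle$, together with the uniform description of the $E_x$'s, into the assumed hyperefficient shadow tomography procedure with $\varepsilon=1/6$. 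By definition, in $\operatorname{poly}(n)$ time it outputs the classical description of a quantum circuit $C_n$ of size $\operatorname{poly}(n)$ such that
\[
\bigl|C_n(x)-\operatorname{Tr}(E_x\,|\psi_n\rangle\langle\psi_n|)\bigr|\leq \tfrac{1}{6}\quad\text{for every }x\in\{0,1\}^n.
\]

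Take this $\operatorname{poly}(n)$-bit description of $C_n$ as the classical advice string for input length $n$. The simulating $\mathsf{BQP/poly}$ machine, on input $x$ with advice $C_n$, simply implements the polynomial-size quantum circuit $C_n$ on input $x$, obtains an estimate $b_x$, and accepts iff $b_x>1/2$. By the approximation guarantee, $b_x>1/2$ precisely when $\Pr[A(x,|\psi_n\rangle)\text{ accepts}]>1/3$, which under the $\mathsf{BQP}$ promise on $A$ is equivalent to $x\in L$. This establishes $L\in\mathsf{BQP/poly}$.

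A few points deserve verification, and I expect the mildly subtle one to be the main issue. First, the hyperefficient output $C_n$ is a \emph{quantum} circuit, but that is exactly what a $\mathsf{BQP/poly}$ machine can execute; the advice itself is the classical bit-string describing $C_n$, which is what matters. Second, one must track that $\log M=n$ and $\log D=\operatorname{poly}(n)$, so the $\operatorname{poly}(\log M,\log D,1/\varepsilon)$ bounds in the definition of hyperefficiency translate to honestly polynomial bounds in $n$; in particular, the length of the advice, the number of copies of $|\psi_n\rangle$ used by the advice preparer, and the size of $C_n$ are all $\operatorname{poly}(n)$. The main delicate step is arguing that producing the advice is legitimate: nothing constrains the advice-preparer computationally in the Karp--Lipton model, so it may generate as many copies of $|\psi_n\rangle$ as it likes and run the hyperefficient shadow tomography procedure once to fix $C_n$, and we only need the \emph{output} to be polynomial-length classical data, which hyperefficiency supplies.
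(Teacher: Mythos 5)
Your proof is correct and takes essentially the same approach as the paper: define $E_x$ by running $A(x,\cdot)$ on the advice register, invoke hyperefficiency to obtain a $\operatorname{poly}(n)$-size quantum circuit $C_n$ approximating all $2^n$ acceptance probabilities to within $1/6$, and supply its classical description as the $\mathsf{BQP/poly}$ advice. The paper's version is terser, simply observing that the mere \emph{existence} of such a small circuit already suffices because the Karp--Lipton advisor is computationally unbounded---a point you also make explicitly in your final paragraph.
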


\begin{proof}
This already follows from the assumption that a quantum circuit $C$ of size
$\operatorname*{poly}\left(  \log M,\log D\right)  $, satisfying (say)
$\left\vert C\left(  i\right)  -\operatorname{Tr}\left(  E_{i}\rho\right)
\right\vert \leq\frac{1}{10}$ for every $i\in\left[  M\right]  $,
\textit{exists}. \ For a description of that $C$ can be provided as the
$\mathsf{BQP/poly}$\ advice when simulating $\mathsf{BQP/qpoly}$.
\end{proof}

Note that Aaronson and Kuperberg \cite{ak} gave a quantum oracle relative to
which $\mathsf{BQP/qpoly}\neq\mathsf{BQP/poly}$. \ By combining that with
Proposition \ref{hyper1}, we immediately obtain a quantum oracle relative to
which hyperefficient shadow tomography is impossible.

One further observation:

\begin{proposition}
\label{hyper2}Suppose there exists a hyperefficient shadow tomography
procedure. \ Then quantum copy-protected software (see \cite{aar:qcopy} or
Section \ref{MOTIV}) is impossible.
\end{proposition}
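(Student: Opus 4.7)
The plan is to reduce copy-protection to shadow tomography: given a copy-protected program for some Boolean function $f$, I will set up a shadow tomography instance whose output is exactly a ``pirated'' classical description of a program evaluating $f$, then invoke the hypothesized hyperefficient procedure to obtain this pirated program in $\operatorname{poly}(n)$ time from $\operatorname{poly}(n)$ copies of the protected state.

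More concretely, suppose $\rho_f$ is an $n^{O(1)}$-qubit copy-protected state, accompanied by a uniform polynomial-time evaluator $V$ such that for every $x \in \{0,1\}^n$, the probability that $V(x,\rho_f)$ accepts lies in $[2/3,1]$ if $f(x)=1$ and in $[0,1/3]$ if $f(x)=0$. For each $x \in \{0,1\}^n$, let $E_x$ be the two-outcome measurement on $\rho_f$ defined by running $V(x,\cdot)$ and outputting its accept/reject bit; then $\operatorname{Tr}(E_x \rho_f)$ is within $1/3$ of $f(x)$. Here $M = 2^n$, $D = 2^{n^{O(1)}}$, so $\log M$ and $\log D$ are both $\operatorname{poly}(n)$. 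Crucially, the map $x \mapsto E_x$ is implemented by a uniform quantum algorithm running in time $\operatorname{poly}(n) = \operatorname{polylog} D$, since it simply invokes the efficient evaluator $V$ on input $x$. This is exactly the input format that a hyperefficient shadow tomography procedure accepts.

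Now invoke the hypothesized procedure with $\varepsilon := 1/10$ on these inputs and on $k = \operatorname{poly}(n)$ copies of $\rho_f$. In $\operatorname{poly}(n)$ time it produces a quantum circuit $C$, of size $\operatorname{poly}(n)$, such that $|C(x) - \operatorname{Tr}(E_x \rho_f)| \leq 1/10$ for every $x \in \{0,1\}^n$. By the gap between the $f(x)=0$ and $f(x)=1$ cases, a straightforward majority-vote amplification of $C$ yields a circuit $C'$ of size $\operatorname{poly}(n)$ that, on any input $x$, correctly outputs $f(x)$ with overwhelming probability. Since we hold the \emph{classical} description of $C'$, we can freely copy it, redistribute it, and evaluate $f$ on as many inputs as we like without ever touching $\rho_f$ again. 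In particular, from the $\operatorname{poly}(n)$ legitimate copies of $\rho_f$ we have produced an unbounded supply of efficient evaluators for $f$, which directly violates the definition of quantum copy-protection.

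The plan contains no serious obstacle; the only point that requires attention is the bookkeeping between the definitions. One must verify that the interface of the hyperefficient procedure (uniform quantum algorithm for the $E_i$'s, short output circuit, polynomial time and sample budget) is matched by the natural shadow tomography instance extracted from the copy-protection scheme, and that ``$C(x)$'' in the hyperefficient specification can be read as the acceptance probability of the returned circuit on input $x$ so that the threshold rounding from $C$ to $C'$ is legitimate. Both checks are essentially definitional, so the reduction goes through unchanged.
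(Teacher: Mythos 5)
Your proof is correct and takes essentially the same approach as the paper: reduce copy-protection to shadow tomography by defining $E_x$ via the evaluator $V(x,\cdot)$, invoke the hyperefficient procedure to obtain a $\operatorname{poly}(n)$-size classical description of a circuit approximating all acceptance probabilities, and observe that this classical string can be freely replicated to pirate the software. The paper's own proof is a one-sentence sketch of exactly this reduction; your version simply spells out the parameter bookkeeping ($M=2^n$, $D=2^{n^{O(1)}}$, uniformity of $x\mapsto E_x$) and the trivial threshold-rounding step, which the paper leaves implicit.
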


\begin{proof}
Given $n^{O\left(  1\right)  }$\ copies of a piece $\rho$\ of quantum
software, by assumption we could efficiently produce an $n^{O\left(  1\right)
}$-bit classical string $s$, which could be freely copied and would let a user
efficiently compute $\rho$'s output behavior (i.e., accepting or rejecting) on
any input $x\in\left\{  0,1\right\}  ^{n}$\ of the user's choice.
\end{proof}

It would be interesting to know what further improvements are possible to the
computational complexity of shadow tomography, consistent with the obstacles
mentioned above. \ Also, even if shadow tomography inherently requires
exponential computation time in the worst case, one naturally seeks do better
in special cases. \ For example, what if the $E_{i}$'s are stabilizer
measurements? \ Or if $\rho$\ is a low-dimensional matrix product
state?\bigskip

\textbf{(4) Applications.} \ A final, open-ended problem is to \textit{find
more applications} of shadow tomography. \ In Section \ref{MOTIV}, we gave
implications for quantum money, quantum software, quantum advice, and quantum
communication protocols. \ But something this basic being possible seems like
it ought to have further applications in quantum information theory, and
conceivably even experiment. \ In the search for such applications, we're
looking for any situation where (i) one has an unknown entangled state $\rho
$\ on many particles; (ii) one is limited mainly in how many copies of $\rho
$\ one can produce; (iii) one wants to know, at least implicitly, the
approximate expectation values of $\rho$ on a huge number of observables; and
(iv) one doesn't need to know more than that.

\section{Acknowledgments}

I thank Fernando Brand\~{a}o, Patrick Hayden, Cedric Lin, Guy Rothblum, and
Ronald de Wolf for helpful discussions and comments, Steve Flammia for
suggesting the name \textquotedblleft shadow tomography,\textquotedblright%
\ and the anonymous reviewers for their suggestions.

\bibliographystyle{plain}
\bibliography{thesis}

\end{document}